\def\C{\mathbb{C}}
\def\Z{\mathbb{Z}}
\let\ds\displaystyle
\theoremstyle{plain}
\newtheorem{theorem}{Theorem}
\newtheorem{lemma}[theorem]{Lemma}
\newtheorem{proposition}{Proposition}
\newtheorem{definition}{Definition}
\newtheorem{remark}{Remark}
\newtheorem{example}{Example}
\DeclareMathOperator{\const}{const}
\newcommand{\brackets}[1]{\left( #1 \right)}
\DeclareMathOperator{\PI}{P_{1}}
\DeclareMathOperator{\PII}{P_{2}}
\DeclareMathOperator{\PIII}{P_{3}}
\DeclareMathOperator{\PIIIpr}{P_{3}^{\prime}}
\DeclareMathOperator{\PIV}{P_{4}}
\DeclareMathOperator{\PV}{P_{5}}
\DeclareMathOperator{\PVI}{P_{6}}
\newcommand{\PVIn}[1]{\text{P}_6^{ #1 }}
\newcommand{\PVn}[1]{\text{P}_5^{ #1 }}
\newcommand{\PIVn}[1]{\text{P}_4^{ #1 }}
\newcommand{\PIIIprn}[1]{\text{P}_3^{\prime \, #1 }}
\newcommand{\PIIn}[1]{\text{P}_2^{ #1 }}
\newcommand{\PIn}[1]{\text{P}_1^{ #1 }}
\newcommand{\PPVIn}[1]{\phantom{}_{ #1 }\text{P}_6}
\newcommand{\PPVn}[1]{\phantom{}_{ #1 }\text{P}_5}
\newcommand{\PPIVn}[1]{\phantom{}_{ #1 }\text{P}_4}
\newcommand{\PPIIIprn}[1]{\phantom{}_{ #1 }\text{P}_3^{\prime}}
\newcommand{\PPIIn}[1]{\phantom{}_{ #1 }\text{P}_2}
\newcommand{\Painleve}{Painlev{\'e} }
\newcommand{\PPainleve}{Painlev{\'e}}
\title{Non-abelian Painlevé systems with~generalized~Okamoto~integral}
\date{}
\author{I.A. Bobrova\thanks{National Research University Higher School of Economics, Moscow, Russian Federation.},~ V.V. Sokolov\thanks{L.D.~Landau Institute for Theoretical Physics, Chernogolovka, Russian Federation. 
E-mail: vsokolov@landau.ac.ru}}
\begin{document}
\maketitle

\begin{abstract}
We study non-abelian systems of \Painleve type. To derive them, we introduce an auxiliary autonomous system with the frozen independent variable and postulate its integrability in the sense of the existence of a non-abelian first integral that generalizes the Okamoto Hamiltonian.
All non-abelian $\PVI-\PII$-systems with such integrals are found. A coalescence limiting  scheme is constructed for these non-abelian \Painleve systems. This allows us to construct an isomonodromic Lax pair for each of them.

\medskip

\noindent{\small Keywords:  non-abelian ODEs, \Painleve equations, isomonodromic Lax pairs}
\end{abstract}

\section{Introduction}
In Okamoto's paper \cite{okamoto1980polynomial} all \Painleve equations $\PI-\PVI$ have been written as a polynomial Hamiltonian systems. In particular, the sixth \Painleve equation takes the form
\begin{gather} \label{eq:scalP6sys}
\hspace{-6mm}
   \left\{
   \begin{array}{lcl}
        z(z - 1) \, \dfrac{d u}{d z} 
        &=& 2 u^3 v 
        - 2 u^2 v 
        - \kappa_1 u^2 
        + \kappa_2 u
        + z \brackets{
        - 2 u^2 v 
        + 2 u v 
        + \kappa_4 u 
        + (\kappa_1 - \kappa_2 - \kappa_4)
        },
        \\[3mm]
        z(z - 1) \, \dfrac{d v}{d z}
        &=& - 3 u^2 v^2 + 2 u v^2 + 2 \kappa_1 u v - \kappa_2 v + \kappa_3
        + z \brackets{
     2 u v^2 - v^2 - \kappa_4 v
     },
\end{array}
\right.
\hspace{-8mm}
\end{gather}
where  $\kappa_1$, $\kappa_2$, $\kappa_3$, $\kappa_4$ are arbitrary constants and $u(z)$, $v(z)$, $z \in \mathbb{C}$. Eliminating $v$, one can obtain the $\PVI$-equation for $u(z).$
The Hamiltonian $H$ for \eqref{eq:scalP6sys} is given by 
\begin{equation}\label{scalHam}
    \begin{aligned}
    z (z - 1) H
    = u^3 v^2
    - u^2 v^2
    - \kappa_1 u^2 v
    + \kappa_2 u v
    - \kappa_3 u
    + z \left(
    - u^2 v^2
    + u v^2
    + \kappa_4 u v
    + (\kappa_1 - \kappa_2 - \kappa_4) v
    \right).
    \end{aligned}
\end{equation}
Since system \eqref{eq:scalP6sys} is non-autonomous, the function $H$ is not an integral of motion. 

System \eqref{eq:scalP6sys} has the form
\begin{gather} \label{eq:fullansatz}
    \left\{
    \begin{array}{lcl}
         f(z) \, \displaystyle \frac{d u}{d z} 
         &=& P_1 (u, v) + z \, Q_1 (u, v),
         \\[3mm]
         f(z) \, \displaystyle \frac{d v}{d z}
         &=& P_2 (u, v) + z \, Q_2 (u, v)
         ,
    \end{array}
    \right.
\end{gather}
while the Hamiltonian has the following structure:  $f(z) H = H_1 + z \, H_2,$
where $P_i, Q_i, H_i$ are polynomials in $u,v.$

Let us consider the system \begin{gather} \label{eq:noz}
    \left\{
    \begin{array}{lcl}
         \displaystyle \frac{d u}{d t} 
         &=& P_1 (u, v) + z \, Q_1 (u, v),
         \\[3mm]
         \displaystyle \frac{d v}{d t}
         &=& P_2 (u, v) + z \, Q_2 (u, v)
         ,
    \end{array}
    \right.
\end{gather}
where we regard $z$ as a parameter. We call \eqref{eq:noz} {\it the auxiliary autonomous system} for \eqref{eq:fullansatz}. 
 From the fact that \eqref{eq:fullansatz} is a Hamiltonian system with the Hamiltonian $H$ it follows that 
\begin{equation} \label{eq:matintansatz}
    J
    = H_1 (u, v) + z \, H_2 (u, v)
\end{equation}
is an integral of motion for system \eqref{eq:noz} i.e.   $\frac{d J}{d t}=0$.  We call $J$ the {\it Okamoto integral}. 

In Section \ref{section2} we consider systems of the form \eqref{eq:noz}, where $P_i$ and $Q_i$ are non-commutative polynomials 
given by
\begin{align} \label{eq:P1form}
\begin{aligned}
    &\begin{aligned}
    P_1 (u, v)
    = a_1 u^3 v 
    + a_2 u^2 v u 
    + a_3 u v u^2 
    + (2 - a_1 - a_2 - a_3) v u^3
    + c_1 u^2 v 
    \\
    + \, (- 2 - c_1 - c_2) u v u 
    + c_2 v u^2
    - \kappa_1 u^2 
    + \kappa_2 u,
    \end{aligned}
    \\[1mm]
    &\begin{aligned}
    Q_1 (u, v)
    = f_1 u^2 v 
    + (- 2 - f_1 - f_2) u v u 
    + f_2 v u^2
    + h_1 u v 
    + (2 - h_1) v u 
    + \kappa_4 u 
    \\
    + \, (\kappa_1 - \kappa_2 - \kappa_4),
    \end{aligned}
\end{aligned}
\end{align}
\begin{align} \label{eq:Q2form}
\begin{aligned}
    &\begin{aligned}
    P_2 (u, v)
    = b_1 u^2 v^2 
    + b_2 u v u v 
    + b_3 u v^2 u 
    + b_4 v u^2 v
    + b_5 v u v u
    + \brackets{- 3 - \sum b_i} v^2 u^2
    \\
    + \, d_1 u v^2
    + (2 - d_1 - d_2) v u v 
    + d_2 v^2 u 
    + e_1 v u 
    + (2 \kappa_1 - e_1) u v
    - \kappa_2 v 
    + \kappa_3,
    \end{aligned}
    \\[1mm]
    &\begin{aligned}
    Q_2 (u, v)
    = g_1 u v^2
    + (2 - g_1 - g_2) v u v 
    + g_2 v^2 u 
    - v^2 - \kappa_4 v
    \end{aligned}
\end{aligned}
\end{align}
and $\kappa_i$ are arbitrary constants.
In this paper we assume that all coefficients are complex numbers. If $f(z)=z (z-1),$ then the corresponding system \eqref{eq:fullansatz} is a natural non-commutative generalization of the \PPainleve-6 system \eqref{eq:scalP6sys}. To obtain the ansatz \eqref{eq:P1form},  \eqref{eq:Q2form}, we replace each monomial $M$ in \eqref{eq:scalP6sys} by a sum of non-commutative monomials such that this sum coincides with $M$ under the commutative reduction.

The following example of non-abelian Hamiltonian \PPainleve-6 system was found in \cite{Kawakami_2015}:
\begin{gather}
    \label{eq:hamP6case1}
    \tag*{$\PVIn{H}$}
    \left\{
    \begin{array}{lcr}
         z (z - 1) u'
         &=& u^2 v u 
        + u v u^2 
        - 2 u v u
        - \kappa_1 u^2 
        + \kappa_2 u
        \hspace{3.4cm}
        \\[1mm]
        && 
        + \, z \brackets{
        - u^2 v 
        - v u^2
        + u v + v u
        + \kappa_4 u 
         + (\kappa_1 - \kappa_2 - \kappa_4)
         },
         \\[2mm]
         z (z - 1) v'
         &=& - u v u v
        - v u v u
        - v u^2 v
        + 2 v u v
        + \kappa_1 u v
        + \kappa_1 v u
        - \kappa_2 v
        + \kappa_3
        \\[1mm]
        && 
        + \, z \brackets{
        u v^2 
        + v^2 u
        - v^2
        - \kappa_4 v
         }.
    \end{array}
    \right.
\end{gather}

In general, the classification problem we have in mind is to find all sets of coefficients in \eqref{eq:P1form}, \eqref{eq:Q2form} such that the corresponding system \eqref{eq:fullansatz} is integrable.  Consideration of all known non-abelian systems of $\PI-\PVI$ \, type \cite{Balandin_Sokolov_1998,Retakh_Rubtsov_2010, boalch2012simply, Kawakami_2015,Adler_Sokolov_2020_1,Bobrova_Sokolov_2021_1} shows that in all cases the corresponding auxiliary system is integrable in one sense or another.
This observation led us to the idea of formulating  an  integrability criterion for non-abelian systems of $\PI-\PVI$ in terms of the auxiliary system. 

In this paper to find interesting examples of non-abelian systems \eqref{eq:fullansatz} we postulate  for the corresponding system \eqref{eq:noz} the existence of a non-abelian integral of motion of the form \eqref{eq:matintansatz}.
 
In the case of $\PVI$ systems the polynomials $H_i$ have the form
\begin{align}
    \label{eq:matintansatzI2}
    \begin{aligned}
    &\begin{aligned}
    \hspace{-2mm}
    H_1 (u, v)
    = p_1 u^3 v^2
    + p_2 u^2 v u v
    + p_3 u^2 v^2 u
    + p_4 u v u^2 v
    + p_5 u v u v u 
    + p_6 u v^2 u^2 
    + p_7 v u^3 v
    \\
    + p_8 v u^2 v u 
    + p_9 v u v u^2
    + \brackets{1 - \sum p_i} v^2 u^3
    + q_1 u^2 v^2 
    + q_2 u v u v 
    + q_3 u v^2 u
    \\
    + q_4 v u^2 v
    + q_5 v u v u
    + \brackets{- 1 - \sum q_i} v^2 u^2
    + r_1 u^2 v 
    + r_2 u v u 
    \\
    + \brackets{- \kappa_1 - \sum r_i} v u^2
    + s_1 u v 
    + (\kappa_2 - s_1) v u
    - \kappa_3 u,
    \end{aligned}
    \\[2mm]
    &\begin{aligned}
    \hspace{-4mm}
    H_2 (u, v)
    = 
    t_1 u^2 v^2 
    + t_2 u v u v 
    + t_3 u v^2 u
    + t_4 v u^2 v
    + t_5 v u v u 
    + \brackets{- 1 - \sum t_i} v^2 u^2
    + x_1 u v^2 
    \\
    + x_2 v u v
    + \brackets{1 - \sum x_i} v^2 u
    + y_1 u v 
    + (\kappa_4 - y_1) v u
    + (\kappa_1 - \kappa_2 - \kappa_4) v.
    \end{aligned}
    \end{aligned}
\end{align}
In the commutative case these polynomials coincide with 
\begin{equation}\label{scalOkam}
    H_1
    =  u^3 v^2
    - u^2 v^2
    - \kappa_1 u^2 v
    + \kappa_2 u v
    - \kappa_3 u, 
    \qquad 
    H_2
    =
    - u^2 v^2
    + u v^2
    + \kappa_4 u v
    + (\kappa_1 - \kappa_2 - \kappa_4) v,
\end{equation} corresponding to \eqref{eq:scalP6sys}. Notice that we don't assume that 
in the non-abelian case the system \eqref{eq:noz} is Hamiltonian.

As a result, we found 18 non-abelian systems \eqref{eq:fullansatz} of  \PPainleve-6 type (see Appendix \ref{sec:sysintlistP6}). А transformation group acts on the set of these systems. There are 3 orbits of the group action and three non-equivalent systems corresponding to these orbits. Other systems can be derived from these three systems by applying the transformations. 

All these systems turn out to be not Hamiltonian and therefore our approach cannot  reconstruct the system \ref{eq:hamP6case1}.

To justify the integrability of obtained systems,
we find the isomonodromic Lax representations of the form
\begin{gather} \label{eq:zerocurvcond}
    \mathbf{A}_z - \mathbf{B}_{\lambda}
    = [\mathbf{B}, \mathbf{A}]
\end{gather}
for them.
 
The scalar equations $\PV - \PI$ are also equivalent to polynomial Hamiltonian systems of the form \eqref{eq:fullansatz}\footnote{Except for some degenerations of the $\PIII$ equation. For their non-abelian generalizations see Appendix~\ref{P3'D7}.} with $f(z)=z$ or $f(z)=1$.
In Section \ref{sec:P5_P1systems} we find all systems of $\PV$, $\PIV$, $\PIIIpr$, and $\PII$ types that have Okamoto integrals and obtain 10, 6, 8, and 2 non-abelian systems, respectively. Such non-abelian systems of type $\PI$ do not exist. 

Although the  system $\PIIn{2}$ found in \cite{Adler_Sokolov_2020_1} as well as six systems of $\PIV$ type from \cite{Bobrova_Sokolov_2021_1} are missing in our paper, we find interesting new non-abelian \Painleve type systems.

In Section \ref{sec:deg} we extend the scheme 
\begin{figure}[H]
    \centering
    \scalebox{1.1}{\tikzset{every picture/.style={line width=0.75pt}} %set default line width to 0.75pt        

\begin{tikzpicture}[x=0.75pt,y=0.75pt,yscale=-1,xscale=1]
%uncomment if require: \path (0,82); %set diagram left start at 0, and has height of 82

%Straight Lines [id:da46646531768070965] 
\draw    (32.65,36.5) -- (59.88,36.5) ;
\draw [shift={(61.88,36.5)}, rotate = 180] [color={rgb, 255:red, 0; green, 0; blue, 0 }  ][line width=0.75]    (10.93,-3.29) .. controls (6.95,-1.4) and (3.31,-0.3) .. (0,0) .. controls (3.31,0.3) and (6.95,1.4) .. (10.93,3.29)   ;
%Straight Lines [id:da8027149644960445] 
\draw    (99.3,24.46) -- (128.8,13.68) ;
\draw [shift={(130.67,12.99)}, rotate = 159.93] [color={rgb, 255:red, 0; green, 0; blue, 0 }  ][line width=0.75]    (10.93,-3.29) .. controls (6.95,-1.4) and (3.31,-0.3) .. (0,0) .. controls (3.31,0.3) and (6.95,1.4) .. (10.93,3.29)   ;
%Straight Lines [id:da7707667943402254] 
\draw    (164.3,64.46) -- (193.8,51.68) ;
\draw [shift={(195.67,50.5)}, rotate = 152.93] [color={rgb, 255:red, 0; green, 0; blue, 0 }  ][line width=0.75]    (10.93,-3.29) .. controls (6.95,-1.4) and (3.31,-0.3) .. (0,0) .. controls (3.31,0.3) and (6.95,1.4) .. (10.93,3.29)   ;
%Straight Lines [id:da4813954173932613] 
\draw    (99.3,49.46) -- (128.72,63.72) ;
\draw [shift={(130.52,64.59)}, rotate = 205.86] [color={rgb, 255:red, 0; green, 0; blue, 0 }  ][line width=0.75]    (10.93,-3.29) .. controls (6.95,-1.4) and (3.31,-0.3) .. (0,0) .. controls (3.31,0.3) and (6.95,1.4) .. (10.93,3.29)   ;
%Straight Lines [id:da7934170316175215] 
\draw    (165.3,11.46) -- (194.72,25.72) ;
\draw [shift={(196.52,26.59)}, rotate = 205.86] [color={rgb, 255:red, 0; green, 0; blue, 0 }  ][line width=0.75]    (10.93,-3.29) .. controls (6.95,-1.4) and (3.31,-0.3) .. (0,0) .. controls (3.31,0.3) and (6.95,1.4) .. (10.93,3.29)   ;
%Straight Lines [id:da10093602464196261] 
\draw    (229.65,36.5) -- (256.88,36.5) ;
\draw [shift={(258.88,36.5)}, rotate = 180] [color={rgb, 255:red, 0; green, 0; blue, 0 }  ][line width=0.75]    (10.93,-3.29) .. controls (6.95,-1.4) and (3.31,-0.3) .. (0,0) .. controls (3.31,0.3) and (6.95,1.4) .. (10.93,3.29)   ;

% Text Node
\draw (7.02,29.79) node [anchor=north west][inner sep=0.75pt]    
{$\PVI$};
% Text Node
\draw (73.02,29.8) node [anchor=north west][inner sep=0.75pt]    
{$\PV$};
% Text Node
\draw (138.8,4.04) node [anchor=north west][inner sep=0.75pt]   
{$\PIIIpr$};
% Text Node
\draw (138.8,57.4) node [anchor=north west][inner sep=0.75pt]    
{$\PIV$};
% Text Node
\draw (202.77,29.8) node [anchor=north west][inner sep=0.75pt]    
{$\PII$};
% Text Node
\draw (270.02,29.8) node [anchor=north west][inner sep=0.75pt]    
{$\PI$};

\end{tikzpicture}}
    \caption{
    The degeneration scheme of the \Painleve equations
    \cite{gambier1910equations}
    }
\end{figure}
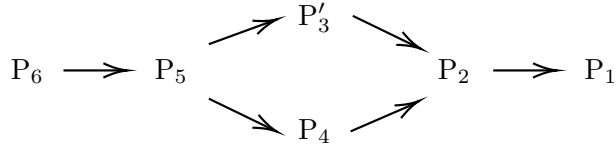
\hspace{-0.51cm}of degenerations of the scalar \Painleve equations \cite{gambier1910equations} and their isomonodromic Lax pairs to the non-abelian case and show that all 26 systems of $\PV - \PII$ type found is Section \ref{sec:P5_P1systems} and their Lax representations can be obtained by limiting transitions from the systems of $\PVI$-type described in Section \ref{section2} and Appendix \ref{sec:sysintlistP6}. In addition, we somewhat unexpectedly obtain Hamiltonian non-abelian $\PV - \PI$ systems  \cite{Kawakami_2015,boalch2012simply} (see Appendix \ref{sec:syshamlist}).
These systems do not have the Okamoto integral. The reason for the appearance of these systems is that sometimes the Okamoto integral degenerates into the integral $I = u v - v u$ that all Hamiltonian systems have (see Lemma~\ref{thm:hamint}). 

In Appendix \ref{sec:sysintlist} we present explicit list of all $\PVI - \PII$ type systems that have the Okamoto integral. In Appendix \ref{P3'D7} we consider degenerate non-abelian  $\PIIIpr$ systems,  which we call systems of $\PIIIpr(D_7)$ type. Although the corresponding scalar system is not polynomial, its non-abelian counterparts admitting Okamoto integrals were easily found.

\subsection{Non-abelian ODE systems}
 
In this paper we consider non-abelian systems of the form 
\begin{gather} \label{geneqm}
    \left\{
    \begin{array}{lcl}
       \displaystyle \frac{d u}{d t} 
         &=&F_1(u, v),
         \\[3mm]
         \displaystyle \frac{d v}{d t}
         &=&F_2(u, v)
         ,
    \end{array}
    \right.
\end{gather}
where $u$ and $v$  are generators of the free associative algebra $ {\cal A} $  over $ \C $ with the unity $\bf 1$\footnote{For any $k\in \C$ we often write $k$ instead of $k {\bf 1}$.}, and $F_{\alpha}\in  {\cal A}$. Actually, \eqref{geneqm}  is a notation for the derivation $d_t$ of  $ {\cal A} $ such that $d_t (u) = F_1, \quad d_t (v) = F_2 $. The element $ d_t (f) $ is uniquely determined for any element $f \in {\cal A} $ by the  Leibniz identity.

Usually, the first integrals of a system \eqref{geneqm} are some elements of the quotient vector space 
${\cal A} / [{\cal A}, \, {\cal A}]$. They are a formalization of integrals of the form $trace(h(u,v))$ in the matrix case $u(t),\, v(t)\in Mat_m$.  For the Hamiltonian non-abelian systems the Hamiltonians are first integrals of this kind.

The Hamiltonian systems \eqref{geneqm} have the form 
\begin{gather} \label{Ham}
    \left\{
    \begin{array}{lcl}
       \displaystyle \frac{d u}{d t} 
         &=&\displaystyle \frac{\partial H}{\partial v},
         \\[3mm]
         \displaystyle \frac{d v}{d t}
         &=&\displaystyle -\frac{\partial H}{\partial u}
         ,
    \end{array}
    \right.
\end{gather}
where $H\in {\cal A}$ and $\frac{\partial}{\partial u},\, \frac{\partial}{\partial v}$ are non-abelian derivatives \text{\rm(}see \text{\rm\cite{kontsevich1993formal})}.
For any polynomial $f(u,v)\in~{\cal A}$  these derivatives  are defined by the identity
 $$df=\frac{\partial f}{\partial u} du+\frac{\partial f}{\partial v} dv, $$
where the additional non-abelian variables $du$ and $dv$ are supposed to be moved to the right by the cyclic permutations of generators in monomials.   
Notice that  in the non-abelian case the partial derivatives are not vector fields.

\begin{remark}
It is easy to verify that $\frac{\partial}{\partial u} (a b- b a)= \frac{\partial}{\partial v} (a b- b a)=0$ for any $a,b\in  {\cal A}$ and therefore the non-abelian partial derivatives are well-defined maps $ {\cal A} / [{\cal A}, \, {\cal A}] \to  {\cal A}$. For this reason, we can assume that $H\in {\cal A} / [{\cal A}, \, {\cal A}] $ in the formula \eqref{Ham}.
\end{remark}
 
\begin{example} 
The auxiliary system for \text{\rm\ref{eq:hamP6case1}} is Hamiltonian with
\begin{gather}
    H = H_1 (u, v) + z \, H_2 (u, v),
    \\[2mm]
    \begin{aligned}
    H_1
    &=  u^2 v u v
    - u v u v
    - \kappa_1 u^2 v
    + \kappa_2 u v 
    - \kappa_3 u
    ,
    &
    H_2
    &= - u^2 v^2
    + u v^2
    + \kappa_4 u v
    + (\kappa_1-\kappa_2-\kappa_4) v
    .
    \end{aligned}
\end{gather}
Indeed, $d H = d H_1 + z \, d H_2$, where
\begin{align}
    \begin{array}{rcr}
    d H_1
    &=& du \, u v u v 
    + u \, du \, v u v 
    + u^2 \, dv \, u v 
    + u^2 v \, du \, v 
    + u^2 v u \, dv
    - du \, v u v 
    - u \, dv \, u v 
    - u v \, du \, v
    \\[1mm]
    && 
    - \, u v u \, dv
    - \kappa_1 (du \, u v + u \, du \, v + u^2 \, dv)
    + \kappa_2 (du \, v + u \, dv)
    - \kappa_3 du,
    \\[2mm]
    d H_2
    &=& - du \, u v^2 
    - u \, du \, v^2 
    - u^2 \, dv \, v
    - u^2 v \, dv
    + du \, v^2
    + u \, dv \, v
    + u v \, dv
    + \kappa_4 (du \, v + u \, dv)
    \\[1mm]
    && 
    + \, (\kappa_1 - \kappa_2 - \kappa_4) dv.
    \end{array}
\end{align}
Making cyclic permutations in all monomials to bring $du$ and $dv$ to the right, we obtain
\begin{align}
    \begin{array}{rcr}
    d H_1
    &=& \brackets{
    u v u v + v u v u + v u^2 v
    - 2 v u v
    - \kappa_1 u v - \kappa_1 v u
    + \kappa_2 v
    - \kappa_3
    } du
    \\[1mm]
    &&
    + \, \brackets{
    u v u^2 + u^2 v u 
    - 2 u v u 
    - \kappa_1 u^2 + \kappa_2 u
    } dv,
    \\[2mm]
    d H_2
    &=& \brackets{
    - u v^2 - v^2 u 
    + v^2 
    + \kappa_4 v
    } du
    + \left(
    - v u^2 - u^2 v
    + v u + u v 
    + \kappa_4 u 
    \right.
    \\[1mm]
    &&
    \left.
    \phantom{u^2}
    + \, (\kappa_1 - \kappa_2 - \kappa_4)
    \right) dv.
    \end{array}
\end{align}
Therefore,
\begin{align}
    \dfrac{\partial H}{\partial v}
    &= u^2 v u + u v u^2
    - 2 u v u 
    - \kappa_1 u^2 + \kappa_2 u
    + z \brackets{
    - u^2 v - v u^2 
    + u v + v u
    + \kappa_4 u 
    + (\kappa_1 - \kappa_2 - \kappa_4)
    }
    ,
    \\
    - \dfrac{\partial H}{\partial u}
    &= - u v u v - v u v u - v u^2 v
    + 2 v u v
    + \kappa_1 u v + \kappa_1 v u
    - \kappa_2 v
    + \kappa_3
    + z \brackets{
    u v^2 + v^2 u 
    - v^2 
    - \kappa_4 v
    }
\end{align}
and \eqref{Ham} coincides with the auxiliary system for \text{\rm\ref{eq:hamP6case1}}.
\end{example}

In this paper we are dealing with {\it non-abelian first integrals}, which are elements of $\cal A$ but not traces from $ {\cal A} / [{\cal A}, \, {\cal A}]$ . 
  
\begin{definition}\label{Def1} An element $h\in {\cal A}$ are called a {non-abelian first integral} for system \eqref{geneqm} if $d_t(h)=0.$\footnote{Notice that the Hamiltonian $H$ of a system \eqref{Ham} is not a first integral in the sense of Definition \ref{Def1}.}
\end{definition}

For non-abelian Hamiltonian systems with two variables $u$ and $v$ a special integral $I=u v-v u$ appears in the following statement:
\begin{lemma} \label{thm:hamint}
Any system of the form \eqref{Ham}
has the non-abelian first integral $I=u v - v u$.
\end{lemma}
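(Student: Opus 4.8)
The plan is to differentiate $I = uv - vu$ along the flow and reduce the claim to a single algebraic identity for the non-abelian partial derivatives. First I would apply the Leibniz rule,
\begin{equation}
    d_t(I) = d_t(u)\,v + u\,d_t(v) - d_t(v)\,u - v\,d_t(u),
\end{equation}
and substitute the Hamiltonian equations $d_t(u) = \partial H/\partial v$, $d_t(v) = -\partial H/\partial u$ from \eqref{Ham}. After regrouping the four terms into commutators this collapses to
\begin{equation}
    d_t(I) = -\left[u,\ \frac{\partial H}{\partial u}\right] - \left[v,\ \frac{\partial H}{\partial v}\right],
\end{equation}
so the lemma is equivalent to the identity $\left[u,\partial H/\partial u\right] + \left[v,\partial H/\partial v\right] = 0$ holding for every $H \in {\cal A}$.

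By linearity in $H$ it suffices to check this identity when $H = M$ is a single non-commutative monomial $M = w_1 w_2 \cdots w_n$ with each $w_i \in \{u,v\}$. Writing out $dM$ and moving $du,dv$ to the right by cyclic permutations as prescribed in the definition gives
\begin{equation}
    \frac{\partial M}{\partial u} = \sum_{i\,:\,w_i = u} w_{i+1}\cdots w_n\, w_1 \cdots w_{i-1},
\end{equation}
together with the analogous expression for $\partial M/\partial v$, where the sum is taken over the indices $i$ with $w_i = v$.

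The key step is then a counting argument. Left-multiplying by the corresponding generator, the term indexed by $i$ in $u\,\partial M/\partial u$ or $v\,\partial M/\partial v$ is exactly the cyclic rotation of $M$ that begins at position $i$; hence $u\,\partial M/\partial u + v\,\partial M/\partial v$ is the sum of all $n$ cyclic rotations of $M$. Right-multiplying instead, the term indexed by $i$ in $\partial M/\partial u\cdot u + \partial M/\partial v\cdot v$ is the cyclic rotation beginning at position $i+1$, so after reindexing this sum is again the sum of all $n$ cyclic rotations of $M$. The two sums coincide, which gives $[u,\partial M/\partial u] + [v,\partial M/\partial v] = 0$ and hence the lemma. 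I expect the only real obstacle to be the careful bookkeeping of the cyclic-permutation convention --- matching the ``start at $i$'' rotations produced by left multiplication against the ``start at $i+1$'' rotations produced by right multiplication --- rather than anything conceptually deep; this identity is the non-abelian analogue of the Euler relation and encodes the cyclic invariance built into the definition of $\partial/\partial u$ and $\partial/\partial v$.
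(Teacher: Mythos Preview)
Your proof is correct and follows the same route as the paper: both reduce the statement to the identity $[u,\partial f/\partial u] + [v,\partial f/\partial v] = 0$. The only difference is that the paper simply cites this identity as well known (from Kontsevich), whereas you supply a self-contained proof via the cyclic-rotation counting argument on monomials; your argument is sound and the bookkeeping is done correctly.
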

\begin{proof}
It follows from the following well-known identity \cite{kontsevich1993formal} for partial derivatives:
$$
\left[u, \frac{\partial f}{\partial u}\right]+\left[v, \frac{\partial f}{\partial v}\right]=0, \qquad f\in{\cal A}.
$$
 \end{proof}

In our paper we assume that the auxiliary system \eqref{eq:noz} for \PPainleve \, type system \eqref{eq:fullansatz}  has a non-abelian Okamoto first integral (see Introduction) of the 
form \eqref{eq:matintansatz}.  
Both in the auxiliary system and in the Okamoto integral, the variable $z$ plays the role of an arbitrary parameter.
Using the terminology of the bi-Hamiltonian formalism, we have pencils of two non-abelian dynamical systems and two non-abelian first integrals.

Instead of algebra ${\cal{A}}$ with multiplication $x \, y$ one can consider the associative algebra with the opposite product 
$x\star y = y \, x$. The transition to the opposite multiplication is represented by the involution $\tau: {\cal A}\to {\cal A}$ defined by 
\begin{equation}\label{tau}
\tau(u)=u,\quad \tau(v)=v, \quad \tau(a x+b y) = a \tau(x)+b \tau(y), \quad \tau(x y) = \tau(y) \tau(x), 
\end{equation}
where $x,y\in {\cal A}, \,\, a,b\in \C$. This involution is called {\it transposition}. 

All key properties of integrable systems, such as the existence of first integrals, infinitesimal symmetries, Lax representations, etc., are invariant under $\tau$.

\section{\PPainleve-6 systems} \label{section2}

Consider the non-abelian systems \eqref{eq:noz}, where the non-commutative polynomials $P_i (u, v)$, $Q_i (u, v)$ are given by the formulas
\eqref{eq:P1form}, \eqref{eq:Q2form}.

\begin{proposition} \label{Prop1}
Such a system possesses a non-abelian Okamoto integral of the form \eqref{eq:matintansatz}, \eqref{eq:matintansatzI2}, where $\kappa_1$, $\kappa_2$, $\kappa_3$, $\kappa_4$ are arbitrary parameters iff the corresponding $\PVI$ system belongs to the list \text{\rm\ref{eq:P6case11} -- \ref{eq:P6case53}} from Appendix \ref{sec:sysintlistP6}. 
\end{proposition}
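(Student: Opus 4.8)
The plan is to read the requirement that $J=H_1+z\,H_2$ be a non-abelian first integral as a single identity $d_t(J)=0$ in the free algebra ${\cal A}$, and to convert it into a finite linear-algebra problem for the unknown coefficients. Write $d_t(u)=P_1+z\,Q_1$ and $d_t(v)=P_2+z\,Q_2$. For a monomial $M$ in $u,v$ the Leibniz rule differentiates one slot at a time, and each differentiated slot contributes either its ``$P$''-part or, with a factor $z$, its ``$Q$''-part; hence $d_t(M)=D^{P}(M)+z\,D^{Q}(M)$, where $D^{P}$ (resp.\ $D^{Q}$) denotes Leibniz differentiation in which every occurrence of $u$ is replaced by $P_1$ (resp.\ $Q_1$) and every occurrence of $v$ by $P_2$ (resp.\ $Q_2$). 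Consequently
\[
 d_t(J)=D^{P}(H_1)+z\,\bigl(D^{Q}(H_1)+D^{P}(H_2)\bigr)+z^2\,D^{Q}(H_2),
\]
and since $z$ is an arbitrary parameter, $d_t(J)=0$ is equivalent to the three identities $D^{P}(H_1)=0$, $\;D^{Q}(H_1)+D^{P}(H_2)=0$, $\;D^{Q}(H_2)=0$ in ${\cal A}$. Structurally, the outer two assert that $H_1$ is an integral of the frozen system $\dot u=P_1,\ \dot v=P_2$ and that $H_2$ is an integral of $\dot u=Q_1,\ \dot v=Q_2$, while the middle one couples them.

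Next I would expand each of these three identities over the monomial basis of ${\cal A}$. Because the $\kappa_j$ are required to be \emph{arbitrary}, I would further split every identity into its $\kappa$-homogeneous components and demand that each vanish separately. Each resulting non-commutative monomial then yields one scalar equation, and every such equation is bilinear in the two families of unknowns—the integral coefficients $p_i,q_i,r_i,s_i,t_i,x_i,y_i$ of \eqref{eq:matintansatzI2} and the system coefficients $a_i,\dots,h_i$ of \eqref{eq:P1form}--\eqref{eq:Q2form}—with the $\kappa_j$ entering as parameters. This produces a large but completely explicit overdetermined bilinear system, which is the exact algebraic content of the proposition.

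To solve it I would descend in total degree in $(u,v)$. The highest-degree monomials of $D^{P}(H_1)=0$ and $D^{Q}(H_2)=0$ are $\kappa$-free and involve only the leading integral coefficients ($p_i$ and $t_i$) together with the leading system coefficients ($a_i,b_i$ and $f_i,g_i$), so these equations are solved first; descending through the lower-degree and $\kappa$-linear parts then successively determines $q_i,r_i,s_i$, $x_i,y_i$, and the remaining system coefficients, or forces relations among them. The bilinear system factors into several branches, and for each branch one obtains the integral coefficients as explicit functions of a reduced set of free system coefficients; recording the corresponding systems \eqref{eq:fullansatz} reproduces precisely the list \ref{eq:P6case11}--\ref{eq:P6case53} of Appendix \ref{sec:sysintlistP6}. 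The converse direction is immediate: for each listed system one exhibits $J$ of the form \eqref{eq:matintansatz}, \eqref{eq:matintansatzI2} and verifies $d_t(J)=0$ directly.

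The main obstacle is not conceptual but computational: the three identities live in degrees up to $8$ in two non-commuting variables, so the number of monomials—and hence of scalar bilinear equations—is very large, and the solution set genuinely branches. Carrying this out reliably requires computer algebra (elimination or Gröbner-basis methods applied to the $\kappa$-graded pieces). The delicate bookkeeping step is ensuring \emph{completeness and correctness of the case split}: one must check that every branch is retained, that no two distinct branches are spuriously identified, and that within each branch the $\kappa$-homogeneous pieces are simultaneously consistent, so that the integral really exists for all four parameters $\kappa_1,\dots,\kappa_4$ at once.
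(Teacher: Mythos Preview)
Your proposal is correct and follows essentially the same approach as the paper: both reduce $d_t(J)=0$ to a large overdetermined bilinear algebraic system in the coefficients by equating monomials to zero, and solve it by computer algebra. Your explicit splitting into the three $z$-graded identities $D^{P}(H_1)=0$, $D^{Q}(H_1)+D^{P}(H_2)=0$, $D^{Q}(H_2)=0$ and the further $\kappa$-homogeneous decomposition are clean organizational devices; the paper works with the undecomposed polynomial $Y(u,v,z)$ but arrives at the same equations, and makes the branching concrete by showing that the degree-$8$ part forces $(p_i-1)p_i=0$ and $p_ip_j=0$ for $i\neq j$ among $p_4,p_5,p_8,p_9$, yielding exactly five cases which are then each solved separately.
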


\begin{proof} 
Differentiating the integral \eqref{eq:matintansatz} with respect to the system defined by \eqref{eq:noz},  \eqref{eq:P1form}, \eqref{eq:Q2form}, we obtain a polynomial $Y(u,v,z)$ of degree 8 in $u, v$. 
Equating to zero the coefficients of different monomials  in $Y(u,v,z)$, we arrive at a system of nonlinear algebraic equations. The simplest equations from this system are:
$$
p_1=p_3=p_5=p_7=0, \qquad p_2=1-p_4-p_5-p_8-p_9.
$$
It turns out that all coefficients of polynomials $P_i$, $Q_i$ can be expressed in terms of the Okamoto integral in the following way:
$$
\begin{array}{l}
    a_1 =  1 - p_4 - p_5 - p_8 - p_9, \quad 
    a_2 = 1 + p_4 - p_8 - p_9, \quad 
    a_3 = p_5 + 2 p_8 + p_9, \quad 
    \\[3mm]
    b_1 = -1 + p_4 + p_5 + p_8 + p_9, \qquad
    b_2 = -2 + p_5 + 2 p_8 + 2 p_9, 
    \\[3mm]
    b_3 = 0,\qquad 
    b_4 = -p_4 - p_5 - p_8,\qquad
    b_5 = -p_5 - 2 p_8 - 2 p_9, \qquad 
    \\[3mm]
    c_1 = -d_1 = 2 q_1 + q_2, \qquad 
    c_2 = -d_2 = -2 - 2 q_1 - 2 q_2 - 2 q_3 - 2 q_4 - q_5, \qquad 
    h_1 = 2 x_1 + x_2, \qquad 
    \\[3mm]
    e_1 = 2 \kappa_1 + 2 r_1 + r_2, \qquad  
    f_2 = 1 -t_3 + x_1 + x_2 - p_5 - p_8 - 2 p_9 + 2 q_1 + 
    2 q_2 + q_3 + 2 q_4 + q_5,  \qquad 
    \\[3mm]
    f_1 = -2 - t_3 - x_1 + p_4 + p_5 + 2 p_8 + 2 p_9 - 2 q_1 - q_2 - q_3 .
\end{array}
$$

Equating to zero the coefficients of different monomials of degree 8 in $Y(u,v,z)$, we obtain a system of nonlinear algebraic equations for the variables $a_i, \, i=1,2,3$, \quad $b_i, \, i=1,...,5$ \,\, and \,\, $p_i,\, i=1,...,9.$ Using the above formulas, we can eliminate $a_i$  and $b_i$ and obtain a system for $p_4, p_5, p_8, p_9$, which is equivalent to 
$$  (p_4-1) p_4 =  (p_5-1) p_5= (p_8-1) p_8= (p_9-1) p_9=0; \quad 
p_4 p_5 =  p_4 p_8 = p_4 p_9 =  p_5 p_8 = p_5 p_9 =  p_8 p_9 = 0.
$$

This system have  5 solutions which leads to the following cases:
$$
\begin{array}{ll}
    \text{\bf Case 1}: 
    & 
    a_1 = 0,\, 
    a_2 = 0,\, 
    a_3 = 2,\, 
    b_1 = 0,\, 
    b_2 = 0, \, 
    b_3 = 0,\, 
    b_4 = -1,\, 
    b_5 = -2,
    \\[2mm] 
    &
    p_1 = 0,\, 
    p_2 = 0,\, 
    p_3 = 0,\,
    p_4 = 0,\, 
    p_5 = 0,\, 
    p_6 = 0,\, 
    p_7 = 0,\, 
    p_8 = 1,\, 
    p_9 = 0;
\end{array}
$$
$$
\begin{array}{ll}
    \text{\bf Case 2}: 
    & 
    a_1 = 0, \,
    a_2 = 0, \,
    a_3 = 1, \,
    b_1 = 0, \,
    b_2 = 0, \,
    b_3 = 0, \,
    b_4 = 0, \,
    b_5 = -2,
    \\[2mm] 
    &
    p_1 = 0, \, 
    p_2 = 0, \,
    p_3 = 0, \,
    p_4 = 0, \,
    p_5 = 0, \,
    p_6 = 0, \,
    p_7 = 0, \,
    p_8 = 0, \,
    p_9 = 1;
\end{array}
$$
$$
\begin{array}{ll}
    \text{\bf Case 3}: 
    & 
    a_1 = 0, \,
    a_2 = 1, \,
    a_3 = 1, \,
    b_1 = 0, \,
    b_2 = -1, \,
    b_3 = 0, \,
    b_4 = -1, \,
    b_5 = -1,
    \\[2mm] 
    &
    p_1 = 0, \,
    p_2 = 0, \,
    p_3 = 0, \,
    p_4 = 0, \,
    p_5 = 1, \,
    p_6 = 0, \,
    p_7 = 0, \,
    p_8 = 0, \,
    p_9 = 0;
\end{array}
$$
$$
\begin{array}{ll}
    \text{\bf Case 4}: 
    & 
    a_1 = 0, \,
    a_2 = 2, \,
    a_3 = 0, \,
    b_1 = 0, \,
    b_2 = -2, \,
    b_3 = 0, \,
    b_4 = -1, \,
    b_5 = 0,
    \\[2mm] 
    &
    p_1 = 0, \,
    p_2 = 0, \,
    p_3 = 0, \,
    p_4 = 1, \,
    p_5 = 0, \,
    p_6 = 0, \,
    p_7 = 0, \,
    p_8 = 0, \,
    p_9 = 0;
\end{array}
$$
$$
\begin{array}{ll}
    \text{\bf Case 5}: 
    & 
    a_1 = 1, \,
    a_2 = 1, \,
    a_3 = 0, \,
    b_1 = -1, \,
    b_2 = -2, \,
    b_3 = 0, \,
    b_4 = 0, \,
    b_5 = 0,
    \\[2mm] 
    &
    p_1 = 0, \,
    p_2 = 1, \,
    p_3 = 0, \,
    p_4 = 0, \,
    p_5 = 0, \,
    p_6 = 0, \,
    p_7 = 0, \,
    p_8 = 0, \,
    p_9 = 0.
\end{array}
$$
In each case, equating to zero the remaining coefficients in the polynomial $Y(u,v,z)$, we obtain a large but rather simple algebraic system for $c_i,\, d_i, \, f_i, \, g_i,\, q_i, \, r_i,\, s_i, \, t_i, \, x_i, \, y_i$. This system contains $\kappa_1, \, \kappa_2, \, \kappa_3, \, \kappa_4$ as parameters. Solving the algebraic system in  {Case \bf 1}, we obtain systems \ref{eq:P6case11}~--~\ref{eq:P6case14}, {Case \bf 2} leads to \ref{eq:P6case21}~--~\ref{eq:P6case23}, in {Case \bf 3} we have \ref{eq:P6case31}~--~\ref{eq:P6case36}. In {Cases \bf 4} and {\bf 5} the systems are given by \ref{eq:P6case41}~--~\ref{eq:P6case44} and \ref{eq:P6case51}~--~\ref{eq:P6case53} respectively. All systems contain four arbitrary parameters $\kappa_1-\kappa_4$. Notice that additional systems that correspond to particular values of the parameters do not exist.
\end{proof}

\subsection{Transformation group}
\label{subsec21}
The transformations 
\begin{gather}
    \label{eq:scalP6sym1}
    \begin{aligned}
    r_1:&
    &
    \brackets{
    z, u, v
    }
    &\mapsto \brackets{
    1 - z, \, 
    1 - u, \,
    - v
    },
    &&&&&
    r_2:&
    &
    \brackets{
    z, u, v
    }
    &\mapsto \brackets{
    z^{-1}, \,
    z^{-1} u, \,
    z v
    },
    \end{aligned}
\end{gather}
act on the set of eighteen systems from Proposition \ref{Prop1}. They change the parameters in the following way
\begin{align}
    \label{eq:scalP6sym1k}
    &\begin{aligned}
    r_1:&
    &
    \brackets{
    \kappa_1, 
    \kappa_2, 
    \kappa_3, 
    \kappa_4
    }
    &\mapsto \brackets{
    \kappa_1, \,
    2 \kappa_1 - \kappa_2 - \kappa_4, \,
    \kappa_3, \,
    \kappa_4
    },
    \end{aligned}
    \\[2mm]
    \label{eq:scalP6sym2k}
    &\begin{aligned}
    r_2:&
    &
    \brackets{
    \kappa_1, 
    \kappa_2, 
    \kappa_3, 
    \kappa_4
    }
    &\mapsto \brackets{
    \kappa_1, \,
    \kappa_4 - 1, \,
    \kappa_3, \,
    \kappa_2 + 1
    }.
    \end{aligned}
     \end{align}
These transformations generate a group isomorphic to~$S_3$. The involution $\tau$ defined by  \eqref{tau} commutes with $r_i$ and also acts  on the set of systems. We denote the action of $\tau$ by the superscript $T$. The whole transformation group is isomorphic to $S_3\times \Z_2$ and acts on the $\PVI$ type systems as follows:
\begin{align}
    \text{Cases \textbf{1} and \textbf{4}}&:
    &
    r_1 {\text{\eqref{eq:P6case11}}}
    &= \text{\ref{eq:P6case14}},
    &
    r_2 {\text{\eqref{eq:P6case11}}}
    &= \text{\ref{eq:P6case11}},
    &
    \text{\eqref{eq:P6case11}}^T
    &= \text{\ref{eq:P6case41}},
    \\[2mm]
    &&
    r_1 {\text{\eqref{eq:P6case13}}}
    &= \text{\ref{eq:P6case13}},
    &
    r_2 {\text{\eqref{eq:P6case13}}}
    &= \text{\ref{eq:P6case14}},
    &
    \text{\eqref{eq:P6case13}}^T
    &= \text{\ref{eq:P6case44}},
    \\[2mm]
    &&
    r_1 {\text{\eqref{eq:P6case14}}}
    &= \text{\ref{eq:P6case11}},
    &
    r_2 {\text{\eqref{eq:P6case14}}}
    &= \text{\ref{eq:P6case13}},
    &
    \text{\eqref{eq:P6case14}}^T
    &= \text{\ref{eq:P6case43}};
    \\[3mm]
    \text{Cases \textbf{2} and \textbf{5}}&:
    &
    r_1 {\text{\eqref{eq:P6case21}}}
    &= \text{\ref{eq:P6case23}},
    &
    r_2 {\text{\eqref{eq:P6case21}}}
    &= \text{\ref{eq:P6case22}},
    &
    \text{\eqref{eq:P6case21}}^T
    &= \text{\ref{eq:P6case51}},
    \\[2mm]
    &&
    r_1 {\text{\eqref{eq:P6case22}}}
    &= \text{\ref{eq:P6case22}},
    &
    r_2 {\text{\eqref{eq:P6case22}}}
    &= \text{\ref{eq:P6case21}},
    &
    \text{\eqref{eq:P6case22}}^T
    &= \text{\ref{eq:P6case52}},
    \\[2mm]
    &&
    r_1 {\text{\eqref{eq:P6case23}}}
    &= \text{\ref{eq:P6case21}},
    &
    r_2 {\text{\eqref{eq:P6case23}}}
    &= \text{\ref{eq:P6case23}},
    &
    \text{\eqref{eq:P6case23}}^T
    &= \text{\ref{eq:P6case53}};
\end{align}
\begin{align}
    \phantom{\text{s and \textbf{44}\,}}
    \text{Case \textbf{3}}&:
    &
    r_1 {\text{\eqref{eq:P6case31}}}
    &= \text{\ref{eq:P6case32}},
    &
    r_2 {\text{\eqref{eq:P6case31}}}
    &= \text{\eqref{eq:P6case31}}^T,
    &
    \text{\eqref{eq:P6case31}}^T
    &= \text{\ref{eq:P6case34}},
    \\[2mm]
    &&
    r_1 {\text{\eqref{eq:P6case32}}}
    &= \text{\ref{eq:P6case31}},
    &
    r_2 {\text{\eqref{eq:P6case32}}}
    &= \text{\eqref{eq:P6case33}}^T,
    &
    \text{\eqref{eq:P6case32}}^T
    &= \text{\ref{eq:P6case35}},
    \\[2mm]
    &&
    r_1 {\text{\eqref{eq:P6case33}}}
    &= \text{\eqref{eq:P6case33}}^T,
    &
    r_2 {\text{\eqref{eq:P6case33}}}
    &= \text{\eqref{eq:P6case32}}^T,
    &
    \text{\eqref{eq:P6case33}}^T
    &= \text{\ref{eq:P6case36}}.
\end{align}
It follows from these relations that there are three orbits of the action of the transformation group: 
\begin{gather*}
    \text{\bf Orbit 1} = \left\{ 
        \text{\ref{eq:P6case11}}, \,
        \text{\ref{eq:P6case13}}, \,
        \text{\ref{eq:P6case14}}, \,
        \text{\ref{eq:P6case41}}, \,
        \text{\ref{eq:P6case43}}, \,
        \text{\ref{eq:P6case44}}
    \right\},
    \\[2mm]
    \text{\bf Orbit 2} = \left\{ 
        \text{\ref{eq:P6case21}}, \,
        \text{\ref{eq:P6case22}}, \,
        \text{\ref{eq:P6case23}}, \,
        \text{\ref{eq:P6case51}}, \,
        \text{\ref{eq:P6case52}}, \,
        \text{\ref{eq:P6case53}}
    \right\},
    \\[2mm]
    \text{\bf Orbit 3} = \left\{ 
        \text{\ref{eq:P6case31}}, \,
        \text{\ref{eq:P6case32}}, \,
        \text{\ref{eq:P6case33}}, \,
        \text{\ref{eq:P6case34}}, \,
        \text{\ref{eq:P6case35}}, \,
        \text{\ref{eq:P6case36}}
    \right\}.
\end{gather*}
We choose the systems \ref{eq:P6case14}, \ref{eq:P6case23}, and \ref{eq:P6case31} as representatives of these orbits. Other systems can be obtained from them by the formulas:
\begin{align}
    &\begin{aligned}
    \text{\ref{eq:P6case11}}
    &= r_1 \text{\eqref{eq:P6case14}}
    ,
    &
    \text{\ref{eq:P6case13}}
    &= r_2 \text{\eqref{eq:P6case14}}
    ,
    &
    \text{\ref{eq:P6case41}}
    &= \brackets{r_1 \text{\eqref{eq:P6case14}}}^T
    ,
    &
    \text{\ref{eq:P6case43}}
    &= \text{\eqref{eq:P6case14}}^T
    ,
    &
    \text{\ref{eq:P6case44}}
    &= \brackets{r_2 \text{\eqref{eq:P6case14}}}^T
    ,
    \end{aligned}
    \\
    &\begin{aligned}
    \text{\ref{eq:P6case21}}
    &= r_1 \text{\eqref{eq:P6case23}}
    ,
    &
    \text{\ref{eq:P6case22}}
    &= r_1 r_2 \text{\eqref{eq:P6case23}}
    ,
    &
    \text{\ref{eq:P6case51}}
    &= \brackets{r_1 \text{\eqref{eq:P6case23}}}^T
    ,
    &
    \text{\ref{eq:P6case52}}
    &= \brackets{r_1 r_2 \text{\eqref{eq:P6case23}}}^T
    ,
    &
    \text{\ref{eq:P6case53}}
    &= \text{\eqref{eq:P6case23}}^T
    ,
    \end{aligned}
    \\
    &\begin{aligned}
    \text{\ref{eq:P6case32}}
    &= r_1 \text{\eqref{eq:P6case31}}
    ,
    &
    \text{\ref{eq:P6case33}}
    &= \brackets{r_1 r_2 \text{\eqref{eq:P6case31}}}^T
    ,
    &
    \text{\ref{eq:P6case34}}
    &= \text{\eqref{eq:P6case31}}^T
    ,
    &
    \text{\ref{eq:P6case35}}
    &= \brackets{r_1 \text{\eqref{eq:P6case31}}}^T
    ,
    &
    \text{\ref{eq:P6case36}}
    &= r_1 r_2 \text{\eqref{eq:P6case31}}
    .
    \end{aligned}
\end{align}

\subsection{Isomonodromic representations} \label{Isomonodromic}

It is well-known \cite{jimbo1981monodromy} that the scalar system \eqref{eq:scalP6sys} has the isomonodromic representation \eqref{eq:zerocurvcond}, where matrices $\mathbf{A} (z, \lambda)$ and $\mathbf{B} (z, \lambda)$ have the form 
\begin{align} \label{eq:matABform}
    \mathbf{A} (z, \lambda)
    &= \dfrac{A_0}{\lambda}
    + \dfrac{A_1}{\lambda - 1}
    + \dfrac{A_2}{\lambda - z},
    &
    \mathbf{B} (z, \lambda)
    &= B_{0}
    - \dfrac{A_2}{\lambda - z}
\end{align}
with the following matrices $A_0$, $A_1$, $A_2$, and $B_{0}$:
\begin{gather} 
    \notag
    \begin{aligned}
    A_0
    &= 
    \begin{pmatrix}
    - 1
    - \kappa_1
    + \kappa_4
    & 
    u z^{-1} - 1
    \\[0.9mm]
    0 & 0
    \end{pmatrix},
    &&&
    A_1
    &= 
    \begin{pmatrix}
    - u v + \kappa_1
    & 
    1 
    \\[0.9mm]
    - u^2 v^2 
    + \kappa_1 u v 
    + \kappa_3
    & 
    u v
    \end{pmatrix},
    \end{aligned}
    \\[2mm]
    \label{eq:scalLaxpair}
    A_2
    = 
    \begin{pmatrix}
    u v 
    + (\kappa_1 - \kappa_2 - \kappa_4) 
    & 
    - u z^{-1}
    \\[0.9mm]
    z u v^2 
    + (\kappa_1 - \kappa_2 - \kappa_4)  z v 
    & 
    - u v
    \end{pmatrix},
    \\[2mm]
    \notag
    B_{0}
    = 
    \begin{pmatrix}
    (z (z - 1))^{-1} \brackets{
    2 u^2 v - \kappa_1 u 
    - z \brackets{
    2 u v
    + (\kappa_1 - \kappa_2 - \kappa_4)
    }
    }
    & 0
    \\[0.9mm]
    - u v^2 - (\kappa_1 - \kappa_2 - \kappa_4)  v
    & 0
    \end{pmatrix}.
\end{gather}
It is clear that the shifts
\begin{equation}\label{shifts}
 \mathbf{A}\mapsto   \mathbf{A} + p(\lambda)\, \mathbf{I}, \qquad  \mathbf{B}\mapsto   \mathbf{B} + q(z, u, v)\, \mathbf{I}
\end{equation}
are allowed by the relation \eqref{eq:matABform}. Note that in the non-abelian case the second shift is admissible only if $q$ does not depend on $u$ and $v.$

We generalize this representation to the case of non-abelian systems  \ref{eq:P6case14}, \ref{eq:P6case23}, \ref{eq:P6case31} by means of the non-abelinizating procedure described in \cite{Bobrova_Sokolov_2021_2}. 

\subsubsection{System \texorpdfstring{\ref{eq:P6case14}}{3P6}:}
\begin{gather} 
    \notag
    \begin{aligned}
    A_0
    &= 
    \begin{pmatrix}
    - 1
    - \kappa_1 
    + \kappa_4
    & 
    u z^{-1} - 1
    \\[0.9mm]
    0 & 0
    \end{pmatrix},
    &&&
    A_1
    &= 
    \begin{pmatrix}
    - u v + \kappa_1
    & 
    1 
    \\[0.9mm]
    \begin{array}{l}
        - v u^2 v 
        + \tfrac12 (\kappa_1 + \kappa_3) u v
         \\[0.6mm]
        + \tfrac12 (\kappa_1 - \kappa_3) v u
        + \tfrac14 (\kappa_3^2 - \kappa_1^2)
    \end{array}
    & 
    v u
    \end{pmatrix},
    \end{aligned}
\end{gather}
\begin{gather}
    \label{eq:Laxpair_P6case14}
    A_2
    = 
    \begin{pmatrix}
    u v 
    + (\kappa_1 - \kappa_2 - \kappa_4) 
    & 
    - u z^{-1}
    \\[0.9mm]
    z v u v
    + (\kappa_1 - \kappa_2 - \kappa_4)  z v 
    & 
    - v u
    \end{pmatrix},
    \\[2mm]
    \notag
    B_{0}
    = 
    \begin{pmatrix}
    (z (z - 1))^{-1} \brackets{
    2 u v u 
    - \kappa_1 u 
    - z \brackets{
    u v + v u + (\kappa_1 - \kappa_2 - \kappa_4)
    }
    }
    & 0
    \\[0.9mm]
    - v u v 
    - (\kappa_1 - \kappa_2 - \kappa_4)  v
    & 0
    \end{pmatrix}.
\end{gather}

\subsubsection{System \texorpdfstring{\ref{eq:P6case23}}{6P6}:}
\begin{gather} 
    \notag
    \begin{aligned}
    A_0
    &= 
    \begin{pmatrix}
    - 1
    - \kappa_1 
    + \kappa_4
    & 
    u z^{-1} - 1
    \\[0.9mm]
    0 & 0
    \end{pmatrix},
    &&&
    A_1
    &= 
    \begin{pmatrix}
    - v u + \kappa_1
    & 
    1 
    \\[0.9mm]
    - v u v u
    + \kappa_1 v u
    + \kappa_3
    & 
    v u
    \end{pmatrix},
    \end{aligned}
    \\[2mm]
    \label{eq:Laxpair_P6case23}
    A_2
    = 
    \begin{pmatrix}
    v u
    + (\kappa_1 - \kappa_2 - \kappa_4) 
    & 
    - u z^{-1}
    \\[0.9mm]
    z v^2 u
    + (\kappa_1 - \kappa_2 - \kappa_4)  z v 
    & 
    - v u
    \end{pmatrix},
    \\[2mm]
    \notag
    B_{0}
    = 
    \begin{pmatrix}
    (z (z - 1))^{-1} \brackets{
    u v u + v u^2
    - \kappa_1 u
    - z \brackets{
    v u 
    + v
    + (\kappa_1 - \kappa_2 - \kappa_4)
    }
    }
    & 0
    \\[0.9mm]
    - v^2 u
    - (\kappa_1 - \kappa_2 - \kappa_4)  v
    & 
    (z - 1)^{-1} \brackets{
    v u - v
    }
    \end{pmatrix}.
\end{gather}

\subsubsection{System \texorpdfstring{\ref{eq:P6case31}}{7P6}:}
\begin{gather} 
    \notag
    \begin{aligned}
    A_0
    &= 
    \begin{pmatrix}
    - 1
    - \kappa_1 
    + \kappa_4
    & 
    u z^{-1} - 1
    \\[0.9mm]
    0 & 0
    \end{pmatrix},
    &&&
    A_1
    &= 
    \begin{pmatrix}
    - u v + \kappa_1
    & 
    1 
    \\[0.9mm]
    - u v u v
    + \kappa_1 u v
    + \kappa_3
    & 
    u v
    \end{pmatrix},
    \end{aligned}
    \\[2mm]
    \label{eq:Laxpair_P6case31}
    A_2
    = 
    \begin{pmatrix}
    u v 
    + (\kappa_1 - \kappa_2 - \kappa_4) 
    & 
    - u z^{-1}
    \\[0.9mm]
    z u v^2
    + (\kappa_1 - \kappa_2 - \kappa_4)  z v 
    & 
    - u v
    \end{pmatrix},
    \\[2mm]
    \notag
    B_{0}
    = 
    \begin{pmatrix}
    (z (z - 1))^{-1} \brackets{
    u^2 v + u v u
    - \kappa_1 u
    - z \brackets{
    2 u v + v u 
    - v
    + (\kappa_1 - \kappa_2 - \kappa_4)
    }
    }
    & 0
    \\[0.9mm]
    - u v^2
    - (\kappa_1 - \kappa_2 - \kappa_4)  v
    & 
    (z - 1)^{-1} (- v u + v)
    \end{pmatrix}.
\end{gather}

The isomonodromic representations for remaining 15 systems from Proposition \ref{Prop1} can be obtained using the transformation group from Section \ref{subsec21}. 
Since the involutions $r_1$ and $r_2$ do not preserve the structure of the Lax pair, we should supplement each of them 
 by a proper  gauge transformation 
\begin{equation}\label{AB}
    \mathbf{A} \mapsto g \, \mathbf{A} \, g^{-1}
    + g_{\lambda}' \, g^{-1}, \qquad 
    \mathbf{B} \mapsto g \, \mathbf{B} \, g^{-1}
    + g_z' \, g^{-1}
\end{equation}
followed by a change of variables $z$, $u$, $v$, $\kappa_1$, $\kappa_2$, $\kappa_3$, $\kappa_4$, $\lambda$.

The involution $r_1$ \eqref{eq:scalP6sym1} we extend to the Lax pair by the formulas
 \begin{align}
    \label{eq:r1tocan_Laxpair}
    \lambda
    &\mapsto (1 - z) \, \lambda \, (\lambda - z)^{-1},
    &
    g
    &= 
    \begin{pmatrix}
    (z - 1)^{-1} (\lambda - z) & 0 \\
    \lambda \, v & -1
    \end{pmatrix}
    .
\end{align}

For the involution $r_2$ a similar transformation is defined by  
\begin{align}
    \label{eq:r2tocan_Laxpair}
    \lambda
    &\mapsto \lambda^{-1},
    &
    g
    &= 
    \begin{pmatrix}
    \lambda & 0 \\ 0 & 1
    \end{pmatrix}
    .
\end{align}
As a result,  in both cases we arrive at a Lax pair, which coincides with \eqref{eq:scalLaxpair} the under the commutative reduction.

\begin{example}
\label{exm:r1deg}
Let us show how to get a Lax pair for the {\rm\ref{eq:P6case32}} system starting with the pair \eqref{eq:Laxpair_P6case31} for the {\rm\ref{eq:P6case31}} system.

As it was mentioned in Section {\rm\ref{subsec21}}, the {\rm\ref{eq:P6case32}} system is related to the system {\rm\ref{eq:P6case31}} via the involution $r_1$. This connection on the level of Lax
pairs is described above. Applying to the pair \eqref{eq:Laxpair_P6case31} the gauge transformation \eqref{AB} with the matrix $g$ defined in \eqref{eq:r1tocan_Laxpair}
and making the change of the variables and parameters
\begin{align}
    z
    &\mapsto 1 - z,
    &
    u
    &\mapsto 1 - u,
    &
    v
    &\mapsto - v,
    &
    \kappa_2
    &\mapsto 2 \kappa_1 - \kappa_2 - \kappa_4,
    &
    \lambda
    &\mapsto (1 - z) \, \lambda \, (\lambda - z)^{-1},
\end{align}
we obtain a pair of the form \eqref{eq:matABform}, where matrices $A_0$, $A_1$, $A_2$, and $B_0$ are
\begin{gather} 
    \notag
    \begin{aligned}
    A_0
    &= 
    \begin{pmatrix}
    - 1
    - \kappa_1 
    + \kappa_4
    & 
    u z^{-1} - 1
    \\[0.9mm]
    0 & 0
    \end{pmatrix},
    &&&
    A_1
    &= 
    \begin{pmatrix}
    - u v + \kappa_1
    & 
    1 
    \\[0.9mm]
    - u v u v
    + \kappa_1 u v
    + \kappa_3
    & 
    u v
    \end{pmatrix},
    \end{aligned}
    \\[2mm]
    \label{eq:Laxpair_P6case32}
    A_2
    = 
    \begin{pmatrix}
    u v 
    + (\kappa_1 - \kappa_2 - \kappa_4) 
    & 
    - u z^{-1}
    \\[0.9mm]
    z v u v
    + (\kappa_1 - \kappa_2 - \kappa_4)  z v 
    & 
    - v u
    \end{pmatrix},
    \\[2mm]
    \notag
    \hspace{-5mm}
    B_{0}
    = 
    \begin{pmatrix}
    (z (z - 1))^{-1} \brackets{
    u^2 v + u v u
    - \kappa_1 u
    - z \brackets{
    2 u v + v u 
    - v
    + (\kappa_1 - \kappa_2 - \kappa_4)
    }
    }
    & 0
    \\[0.9mm]
    - v u v
    - (\kappa_1 - \kappa_2 - \kappa_4)  v
    & 
    (z - 1)^{-1} (- u v + v)
    \end{pmatrix}.
\end{gather}
The zero-curvature condition for this pair leads to the {\rm\ref{eq:P6case32}} system.
\end{example}

In the case of transposition \eqref{tau} the situation is slightly different. The matrices of 
a pair transform as
\begin{align}
    \label{eq:tau_Laxpair}
    \mathbf{A}
    &\mapsto - \mathbf{A}^t,
    &
    \mathbf{B}
    &\mapsto - \mathbf{B}^t,
\end{align}
where each matrix entry should be transformed by $\tau$. Here the subscript $t$ means the matrix transpose. After that, using the map \eqref{AB} with
\begin{align}
    \label{eq:tautocan_Laxpair}
    g
    &= \lambda^{- 1 - \kappa_1 + \kappa_4} \, (\lambda - 1)^{\kappa_1} \, (\lambda - z)^{\kappa_1 - \kappa_2 - \kappa_4}
    \begin{pmatrix}
    0 & 1 \\ -1 & 0
    \end{pmatrix},
\end{align}
we get matrices $A_0$, $A_1$, $A_2$ with the same dependence on the variables and parameters. The diagonal entries of the matrix $B_0$ have a different structure. However, in the scalar case the resulting matrix $B_0$ can be brought to the form \eqref{eq:scalLaxpair} by a shift \eqref{shifts}. In the non-abelian case it is impossible.   

\begin{example}
\label{exm:taudeg}
The {\rm\ref{eq:P6case34}} system can be obtained from the system {\rm\ref{eq:P6case31}}  by the transposition \eqref{tau}. Let us extend this transformation to the Lax pairs. According to \eqref{eq:tau_Laxpair}, the pair \eqref{eq:Laxpair_P6case31} turns into 
\begin{gather} 
    \notag
    \begin{aligned}
    A_0
    &= -
    \begin{pmatrix}
    - 1
    - \kappa_1 
    + \kappa_4
    & 
    0
    \\[0.9mm]
    u z^{-1} - 1 
    & 0
    \end{pmatrix},
    &&&
    A_1
    &= -
    \begin{pmatrix}
    - v u + \kappa_1
    & 
    - v u v u
    + \kappa_1 v u
    + \kappa_3
    \\[0.9mm]
    1 
    & 
    v u
    \end{pmatrix},
    \end{aligned}
    \\[2mm]
    \label{eq:Laxpair_P6case34_}
    A_2
    = -
    \begin{pmatrix}
    v u
    + (\kappa_1 - \kappa_2 - \kappa_4) 
    & 
    z v^2 u
    + (\kappa_1 - \kappa_2 - \kappa_4)  z v 
    \\[0.9mm]
    - u z^{-1}
    & 
    - v u
    \end{pmatrix},
    \\[2mm]
    \notag
    \hspace{-8mm}
    B_{0}
    = -
    \begin{pmatrix}
    (z (z - 1))^{-1} \brackets{
    v u^2 + u v u
    - \kappa_1 u
    - z \brackets{
    2 v u + u v
    - v
    + (\kappa_1 - \kappa_2 - \kappa_4)
    }
    }
    & 
    - v^2 u
    - (\kappa_1 - \kappa_2 - \kappa_4)  v
    \\[0.9mm]
    0
    & 
    (z - 1)^{-1} (- u v + v)
    \end{pmatrix}.
\end{gather}
Now, using the gauge transformation with \eqref{eq:tautocan_Laxpair}, we get
\begin{gather} 
    \notag
    \begin{aligned}
    A_0
    &= 
    \begin{pmatrix}
    - 1
    - \kappa_1 
    + \kappa_4
    & 
    u z^{-1} - 1
    \\[0.9mm]
    0 & 0
    \end{pmatrix},
    &&&
    A_1
    &= 
    \begin{pmatrix}
    - v u + \kappa_1
    & 
    1 
    \\[0.9mm]
    - v u v u
    + \kappa_1 v u
    + \kappa_3
    & 
    v u
    \end{pmatrix},
    \end{aligned}
    \\[2mm]
    \label{eq:Laxpair_P6case34}
    A_2
    = 
    \begin{pmatrix}
    v u
    + (\kappa_1 - \kappa_2 - \kappa_4) 
    & 
    - u z^{-1}
    \\[0.9mm]
    z v^2 u
    + (\kappa_1 - \kappa_2 - \kappa_4)  z v 
    & 
    - v u
    \end{pmatrix},
    \\[2mm]
    \notag
    \hspace{-8mm}
    B_{0}
    = 
    \begin{pmatrix}
    (z - 1)^{-1} (u v - v)
    & 0
    \\[0.9mm]
    - v^2 u
    - (\kappa_1 - \kappa_2 - \kappa_4)  v
    & 
    (z (z - 1))^{-1} \brackets{
    - v u^2 - u v u
    + \kappa_1 u
    + z \brackets{
    2 v u + u v
    - v
    + (\kappa_1 - \kappa_2 - \kappa_4)
    }
    }
    \end{pmatrix}.
\end{gather}
\end{example}

\section{\texorpdfstring{Systems of $\PV - \PII$}{P5 - P2} type}
\label{sec:P5_P1systems}

\subsection{\PPainleve-5 systems}
The scalar $\PV$ system   has the following structure:
\begin{align}\label{P5}
    \left\{
    \begin{array}{lclcl}
      \ds   z \, \frac{d u}{d z}
         &=& P_1 (u, v)
         + \kappa_4 \, z \, u
         ,
         \\[3mm]
     \ds    z \, \frac{d v}{d z}
         &=& P_2 (u, v)
         - \kappa_4 \, z \, v.
    \end{array}
    \right.
\end{align}
The non-abelian ansatz for the polynomials $P_1(u, v)$ and $P_2(u, v)$ are given by 
\begin{align}
    \notag
    &\begin{aligned}
    \hspace{-8mm}
    P_1 (u, v)
    = a_1 u^3 v 
    + a_2 u^2 v u 
    + a_3 u v u^2 
    + \brackets{
    2 - \sum a_i
    } v u^3
    + c_1 u^2 v 
    + (- 4 - c_1 - c_2) u v u 
    \\[1mm]
    + \, c_2 v u^2
    - \kappa_1 u^2
    + e_1 u v + (2 - e_1) v u 
    + (\kappa_1 + \kappa_2) u
    - \kappa_2
    ,
    \end{aligned}
    \\[-1mm]
    \label{eq:ncP5sys}
    \\[-1mm]
    \notag
    &\begin{aligned}
    \hspace{-8mm}
    P_2 (u, v)
    = b_1 u^2 v^2 
    + b_2 u v u v 
    + b_3 u v^2 u
    + b_4 v u^2 v
    + b_5 v u v u
    + \brackets{
    - 3 - \sum b_i
    } v^2 u^2
    + d_1 u v^2 
    \\[1mm]
    + \, (4 - d_1 - d_2) v u v 
    - d_2 v^2 u 
    - v^2 
    + f_1 u v + (2 \kappa_1 - f_1) v u
    - (\kappa_1 + \kappa_2) v
    + \kappa_3,
    \end{aligned}
\end{align}
while the non-abelian Okamoto integral has the form 
\begin{align} \label{eq:ncJ5}
\hspace{-8mm}
\begin{aligned}
    J
    = p_1 u^3 v^2
    + p_2 u^2 v u v
    + p_3 u^2 v^2 u
    + p_4 u v u^2 v
    + p_5 u v u v u 
    + p_6 u v^2 u^2 
    + p_7 v u^3 v
    + p_8 v u^2 v u 
    \\[1mm]
    + \, p_9 v u v u^2
    + \brackets{1 - \sum p_i} v^2 u^3
    + q_1 u^2 v^2 
    + q_2 u v u v 
    + q_3 u v^2 u
    + q_4 v u^2 v
    + q_5 v u v u
    \\[1mm]
    + \brackets{- 2 - \sum q_i} v^2 u^2
    + r_1 u^2 v 
    + r_2 u v u 
    + \brackets{- \kappa_1 - \sum r_i} v u^2
    + s_1 u v^2
    + s_2 v u v
    \\[1mm]
    + \brackets{1 - \sum s_i} v^2 u
    + t_1 u v 
    + (\kappa_1 + \kappa_2 - t_1) v u
    - \kappa_3 u
    - \kappa_2 v + \kappa_3
    \\[1mm]
    + \, z \brackets{
    w_1 u v + (\kappa_4 - w_1) v u
    }
    .
\end{aligned}
\end{align}
\begin{proposition}
The auxiliary system 
\begin{align*}
    \left\{
    \begin{array}{lclcl}
      \ds   \frac{d u}{d t}
         &=& P_1 (u, v)
         + \kappa_4 z u
         ,
         \\[3mm]
     \ds     \frac{d v}{d t}
         &=& P_2 (u, v)
         - \kappa_4 z v
    \end{array}
    \right.
\end{align*}
for \eqref{P5}, \eqref{eq:ncP5sys} has a non-abelian integral of the form \eqref{eq:ncJ5} iff the corresponding $\PV$ system belongs to the list \text{\rm\ref{eq:P5case11} -- \ref{eq:P5case52}} from Appendix \ref{sec:sysintlistP5}.
\end{proposition}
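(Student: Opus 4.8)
The plan is to follow verbatim the computational strategy used in the proof of Proposition~\ref{Prop1}. Writing $J = J_0 + z\,J_1$, where $J_0$ is the degree-five part of \eqref{eq:ncJ5} and $J_1 = w_1\,uv + (\kappa_4 - w_1)\,vu$, and denoting the auxiliary vector field by $d_t$, I would first compute $Y := d_t(J)$ by the Leibniz rule, i.e.\ by replacing in every monomial each factor $u$ by $P_1 + \kappa_4 z\,u$ and each factor $v$ by $P_2 - \kappa_4 z\,v$ in turn and summing over positions. Since $J$ has degree five and the right-hand sides \eqref{eq:ncP5sys} have degree four, $Y$ is a non-commutative polynomial of degree eight in $u,v$. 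Its dependence on $z$ is a priori quadratic, the top coefficient being $d_t^{(1)}(J_1)$, where $d_t^{(1)}$ is the Euler-type derivation $u\mapsto \kappa_4 u$, $v\mapsto -\kappa_4 v$; but $J_1$ has weight zero under this derivation, so $d_t^{(1)}(J_1)=0$ identically and $Y$ is in fact linear in $z$. This is a genuine simplification relative to the $\PVI$ case, where the $z^2$-part does not vanish.

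Next I would collect $Y$ into its independent non-commutative monomials, separate the $z^0$ and $z^1$ parts, and equate every coefficient to zero. This yields a large system of polynomial equations in the system unknowns $a_i,b_i,c_i,d_i,e_1,f_1$ and the integral unknowns $p_i,q_i,r_i,s_i,t_1,w_1$, with $\kappa_1,\dots,\kappa_4$ entering as parameters. Exactly as in Proposition~\ref{Prop1}, the simplest (linear) relations immediately force the vanishing of several of the leading coefficients $p_i$ and a normalization of the surviving one, and they allow all the system coefficients $a_i,b_i,c_i,\dots$ to be expressed through the integral coefficients. Substituting these back, the degree-eight part of the $z^0$-equations collapses to a decoupled quadratic system in $p_4,p_5,p_8,p_9$ of the shape $(p_i-1)p_i=0$ and $p_ip_j=0$, whose finitely many solutions single out a finite list of admissible leading structures.

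Within each such branch I would substitute the fixed leading coefficients and solve the remaining lower-degree relations, which are now essentially linear in the surviving unknowns $q_i,r_i,s_i,t_1,w_1$ and in the remaining $c_i,d_i,f_1$, keeping $\kappa_1,\dots,\kappa_4$ free. The $z^1$-equations, built from $d_t^{(1)}(J_0)+d_t^{(0)}(J_1)$, are what chiefly pin down $w_1$ and its compatibility with $\kappa_4$. Carrying this out branch by branch should reproduce exactly the ten systems \ref{eq:P5case11}--\ref{eq:P5case52} of Appendix~\ref{sec:sysintlistP5}, each carrying four arbitrary parameters, and one verifies that no sporadic solutions arise at special values of the $\kappa_i$.

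The main obstacle is not conceptual but organizational: non-commutativity multiplies the number of distinct degree-eight monomials enormously compared with the scalar case, so the Leibniz computation of $d_t(J)$ and the extraction of the coefficient equations must be handled with great care (in practice by computer algebra) to guarantee both that the case split over $p_4,p_5,p_8,p_9$ is exhaustive and that each branch is solved completely. A convenient independent consistency check is that every $\PV$ system obtained here must appear as a coalescence limit of a $\PVI$-type system from Proposition~\ref{Prop1}, in agreement with the degeneration scheme of Section~\ref{sec:deg}.
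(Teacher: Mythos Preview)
Your approach is correct and is exactly what the paper intends: no separate proof is written out for this proposition, the computation being understood to parallel that of Proposition~\ref{Prop1} verbatim, with the same five-branch split governed by $p_4,p_5,p_8,p_9$. Your observation that the $z^2$-coefficient of $Y$ vanishes identically---because $J_1$ is homogeneous of weight zero under the scaling derivation $u\mapsto\kappa_4 u$, $v\mapsto-\kappa_4 v$---is a valid simplification specific to the $\PV$ case that the paper does not make explicit.
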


There are five orbits,
\begin{gather*}
    \begin{aligned}
    \text{\bf Orbit 1} 
    &= \left\{ 
        \text{\ref{eq:P5case11}}, \,
        \text{\ref{eq:P5case43}}
    \right\},
    &&&
    \text{\bf Orbit 2} 
    &= \left\{ 
        \text{\ref{eq:P5case13}}, \,
        \text{\ref{eq:P5case44}}
    \right\},
    &&&
    \text{\bf Orbit 3} 
    &= \left\{ 
        \text{\ref{eq:P5case21}}, \,
        \text{\ref{eq:P5case51}}
    \right\},
    \end{aligned}
    \\[2mm]
    \begin{aligned}
    \text{\bf Orbit 4} 
    &= \left\{ 
        \text{\ref{eq:P5case22}}, \,
        \text{\ref{eq:P5case52}}
    \right\},
    &&&
    \text{\bf Orbit 5} 
    &= \left\{ 
        \text{\ref{eq:P5case31}}, \,
        \text{\ref{eq:P5case32}}
    \right\},
    \end{aligned}
\end{gather*}
with respect to the transposition \eqref{tau}.

\subsection{The \PPainleve-4 systems}
The anzats \cite{Bobrova_Sokolov_2021_1} for non-abelian system of $\PIV$ type can be written as
\begin{align} \label{eq:ncP4sys}
    \left\{
    \begin{array}{lcl}
       \ds  \frac{d u}{d z}
         &=& - u^2 
         + 2 u v + \alpha \, [u, v]
         - 2 z u
         + \kappa_2,
         \\[2mm]
          \ds  \frac{d v}{d z}
         &=& - v^2
         + 2 v u + \beta \, [v, u]
         + 2 z v
         + \kappa_3
    \end{array}
    \right.
\end{align}
with the parameters $\alpha,\beta\in \C$ to be defined.
The non-abelian Okamoto integral has the following structure:
\begin{align} \label{eq:ncJ4}
\begin{aligned}
    J
    = a_1 u v^2 + (1 - a_1 - a_2) v u v + a_2 v^2 u
    + b_1 u^2 v + (- 1 - b_1 - b_2) u v u + b_2 v u^2
    \\[1mm]
    - \, \kappa_3 u + \kappa_2 v
    + z \brackets{
    c_1 u v + (- 2 - c_1) v u
    }
    .
\end{aligned}
\end{align}

\begin{proposition} \label{thm:ncJ4list}
There are six systems \text{\rm\ref{eq:P4casem2m1} -- \ref{eq:P4case0m1}} of the form \eqref{eq:ncP4sys} whose auxiliary system have the Okamoto integral of the form \eqref{eq:ncJ4}.  They are listed in Appendix \ref{sec:sysintlistP4}. 
\end{proposition}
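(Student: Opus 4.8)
The plan is to mirror the computational strategy that proved Proposition~\ref{Prop1} for the $\PVI$ systems, now specialized to the smaller $\PIV$ ansatz \eqref{eq:ncP4sys}, \eqref{eq:ncJ4}. First I would impose the integrability condition directly: compute the derivation $d_t(J)$ of the candidate integral \eqref{eq:ncJ4} along the auxiliary system, i.e. substitute $d_t(u) = -u^2 + 2uv + \alpha[u,v] - 2zu + \kappa_2$ and $d_t(v) = -v^2 + 2vu + \beta[v,u] + 2zv + \kappa_3$ into $J$ via the Leibniz rule, and collect the result as a non-commutative polynomial $Y(u,v,z)$ in the generators $u,v$. Because $J$ has terms of degree at most $3$ and the right-hand sides have degree at most $2$, the top-degree part of $Y$ has degree $4$; its vanishing will pin down the leading coefficients $a_1,a_2,b_1,b_2$ and the structural parameters $\alpha,\beta$.

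Next I would organize the vanishing of $Y(u,v,z)$ by homogeneous degree and by the power of $z$. Since $z$ enters only linearly in both the vector field and the integral, $Y$ is at most quadratic in $z$, and I would equate to zero the coefficient of each distinct non-commutative monomial at each order in $z$ separately. The highest-degree equations decouple into a small algebraic system in $a_1,a_2,b_1,b_2,c_1,\alpha,\beta$; as in the $\PVI$ case I expect these to force a short list of discrete choices (the analogues of the five "Cases"), after which the lower-degree and $z$-linear equations fix the remaining coefficients in terms of $\kappa_2,\kappa_3$ and select admissible values of $\alpha,\beta$. Carefully enumerating all branches of this polynomial system and discarding those that collapse under the commutative reduction should yield exactly six consistent solutions.

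Finally I would match each surviving solution to the explicit systems $\PIVn{(-2,-1)}$--$\PIVn{(0,-1)}$ listed in Appendix~\ref{sec:sysintlistP4}, verifying that the recovered parameters $\alpha,\beta$ and the integral coefficients reproduce those tabulated; this both confirms that the six systems do admit integrals of the form \eqref{eq:ncJ4} and that no further systems arise. The main obstacle I anticipate is purely organizational rather than conceptual: the non-commutative monomial bookkeeping is delicate, since many monomials of the same commutative shape (e.g. $uvu$ versus $u^2v$) must be kept distinct, and one must ensure the algebraic system is solved \emph{completely}, with every branch examined so that the claim of exactly six systems—and the nonexistence of sporadic solutions at special parameter values—is justified. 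As in Proposition~\ref{Prop1}, this is best handled with computer algebra, and the proof reduces to certifying that the resulting finite solution set coincides with the appendix list.
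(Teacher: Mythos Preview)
Your proposal is correct and follows exactly the method the paper uses for Proposition~\ref{Prop1}; the paper gives no separate proof of Proposition~\ref{thm:ncJ4list}, implicitly relying on the same direct computation of $d_t(J)$ and monomial-by-monomial elimination that you describe.
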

\begin{remark} 
 In the papers \text{\rm\cite{Bobrova_Sokolov_2021_1,Bobrova_Sokolov_2021_2}} all systems \eqref{eq:ncP4sys} that satisfy the matrix  \PPainleve \, test have been found. The corresponding pairs $(\alpha,\beta)$ are shown in the following figure
\begin{figure}[H]
    \centering
    \scalebox{1}{\input{pictures/P4/albetPlane.tex}}
    \label{pic:albetplane}
\end{figure}
\hspace{-0.5cm}The systems from Appendix \ref{sec:sysintlistP4} are marked by blue dots without orange rims and belong to the same orbit with respect to the transformation group used in these papers. The central red dot corresponds to the Hamiltonian \text{\rm\ref{eq:P4casem1m1}} system {\rm (}see Appendix \ref{sec:syshamlist}{\rm )} which will appear in Section \ref{sec:deg} devoted to limiting transitions.

The  second order  non-abelian \PPainleve\, equation corresponding to the system with $(\alpha,\beta)=(0,-1)$ was first obtained in the paper \text{\rm\cite{adler2020}}.
\end{remark}

\subsection{\PPainleve-\texorpdfstring{$\mathbf{3^\prime}$}{3'} systems}

In the scalar case, the Hamiltonian $h$ for $\PIIIpr(D_6)$-system \cite{okamoto1987studies}
\begin{align} \label{eq:scalP3D6'sys}
    \left\{
    \begin{array}{lclcl}
         z \, u'
         &=& 2 u^2 v + \kappa_1 u + \kappa_2 u^2
         + \kappa_4 z
         ,
         \\[2mm]
         z \, v'
         &=& - 2 u v^2 - \kappa_1 v
         - 2 \kappa_2 u v - \kappa_3
    \end{array}
    \right.
\end{align}
is given by
\begin{equation} \label{eq:scalJ3D6'}
    z \, h
    = u^2 v^2 + \kappa_1 u v + \kappa_2 u^2 v + \kappa_3 u + \kappa_4 z v
    .
\end{equation}

The auxiliary  system for \eqref{eq:scalP3D6'sys} has the following structure
\begin{align}\label{AuP3}
    \left\{
    \begin{array}{lclcl}
      \ds   \frac{d u}{d t }
         &=& P_1 (u, v)
         + \kappa_4 z
         ,
         \\[3mm]
       \ds   \frac{d v}{d t }
         &=& P_2 (u, v).
    \end{array}
    \right.
\end{align}

\begin{proposition} \label{thm:ncJ3D6list} Let
\begin{align} \label{eq:ncP3D6'sys}
    \begin{aligned}
    P_1 (u, v)
    &= a_1 u^2 v + (2 - a_1 - a_2) u v u + a_2 v u^2 + \kappa_1 u + \kappa_2 u^2
    ,
    \\[2mm]
    P_2 (u, v)
    &= b_1 u v^2 - (2 + b_1 + b_2) v u v + b_2 v^2 u - \kappa_1 v
    + c_1 u v + (- 2 \kappa_2 - c_1) v u
    - \kappa_3
    ,
    \end{aligned}
\end{align}
and
\begin{align} \label{eq:ncJ3D6'}
\begin{aligned}
    J
    = d_1 u^2 v^2 + d_2 u v^2 u + d_3 u v u v + d_4 v u^2 v + d_5 v u v u + \brackets{
    1 - \sum d_i
    } v^2 u^2
    + e_1 u v
    \\[1mm] 
    + \, (\kappa_1 - e_1) v u
    + h_1 u^2 v 
    + (\kappa_2 - h_1 - h_2) u v u
    + h_2 v u^2
    + \kappa_3 u 
    + \kappa_4 z v
    .
\end{aligned}
\end{align}
Then a non-abelian system \eqref{AuP3} has an Okamoto integral $J$ of the form \eqref{eq:ncJ3D6'} iff the corresponding $\PIIIpr$ system belongs to the list \text{\rm\ref{eq:P3D6case1} -- \ref{eq:P3D6case8}} from Appendix \ref{sec:sysintlistP3}.
\end{proposition}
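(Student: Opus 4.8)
The proof will follow verbatim the strategy used for Proposition~\ref{Prop1}, now with the much smaller ansatz \eqref{eq:ncP3D6'sys}, \eqref{eq:ncJ3D6'}. The unknowns are the coefficients $a_1,a_2$ of $P_1$, the coefficients $b_1,b_2,c_1$ of $P_2$, and the coefficients $d_1,\dots,d_5,e_1,h_1,h_2$ of $J$ (thirteen in all), while $\kappa_1,\dots,\kappa_4$ enter as free parameters. First I would compute $Y:=d_t(J)$ for the derivation $d_t(u)=P_1(u,v)+\kappa_4 z$, $d_t(v)=P_2(u,v)$ by the Leibniz rule, reducing every resulting non-commutative monomial to a fixed normal ordering. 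Since $J$ has degree $4$ in $u,v$ and $P_1,P_2$ have degree $3$, the polynomial $Y(u,v,z)$ has degree at most $6$ and its coefficients are affine in $z$. The identity $Y\equiv 0$ is then equivalent to the vanishing of the coefficient of every monomial at each of the powers $z^0,z^1$, which produces a large but structured system of polynomial equations in the thirteen unknowns. Both implications of the proposition are read off from the \emph{complete} solution of this system: completeness of the solution set is the ``only if'' direction, while substituting each listed parameter tuple back into $Y$ and checking $d_t(J)=0$ gives the ``if'' direction.

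Next I would solve the algebraic system in the same cascading order as for $\PVI$. The top-degree part of $Y$, consisting of the degree-$6$ monomials, arises \emph{only} from differentiating the quartic part $\sum d_i(\cdots)$ of $J$ against the cubic parts $a_1u^2v+(2-a_1-a_2)uvu+a_2vu^2$ of $P_1$ and $b_1uv^2-(2+b_1+b_2)vuv+b_2v^2u$ of $P_2$; hence these equations close on the leading coefficients $a_1,a_2,b_1,b_2,d_1,\dots,d_5$ alone. Treated as linear in the $a_i,b_j$, they let me express $a_1,a_2,b_1,b_2$ in terms of the $d_i$ (mirroring the formulas $a_1=1-p_4-\dots$, $b_1=-1+p_4+\dots$ of Proposition~\ref{Prop1}), and the consistency conditions that remain collapse to a handful of purely quadratic constraints on $d_1,\dots,d_5$ of the idempotent/orthogonality flavour $(d_i-1)d_i=0$, $d_id_j=0$. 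Such a system has only finitely many solutions, each pinning down a distinguished configuration of the $d_i$ together with the forced values of $a_i,b_j$; these are the root cases of the classification.

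Having fixed a leading case, I would descend to the lower strata of $Y$ (degrees $5$ down to $1$), which now bring in the quadratic terms $\kappa_2u^2$, $\kappa_1v$, $c_1uv$ and the remaining unknowns $c_1,e_1,h_1,h_2$. In each case these equations are essentially linear in the unknowns and impose some compatibility relations among $\kappa_1,\dots,\kappa_4$; solving them yields the explicit systems \ref{eq:P3D6case1}--\ref{eq:P3D6case8}. The one point requiring genuine care is the bookkeeping of the $\kappa_4 z$ term: because it makes $d_t(u)$ itself depend on $z$, the $z^0$ and $z^1$ parts of $Y$ are coupled through the $\kappa_4 z v$ summand of $J$, and one must verify that the $z^1$ equations are consistent with—rather than over-determining—the $z^0$ equations, which is what forces the explicit $\kappa_4 z$ and $\kappa_4 z v$ coefficients to match.

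\textbf{Main obstacle.} Conceptually the argument is forced once the quadratic ``leading'' constraints are extracted, so the real difficulty is the combinatorial volume of the monomial bookkeeping: one must order the degree-$\le 6$ non-commutative monomials and read off coefficients correctly, and one must make certain that the case split coming from the idempotent constraints on the $d_i$ is exhaustive, so that no spurious or missing branch slips in. I would organise this—ideally with computer algebra—by completing the degree-$6$ reduction first and only then attacking the lower-degree strata case by case.
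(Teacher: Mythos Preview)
Your plan is correct and mirrors exactly the method the paper uses for Proposition~\ref{Prop1} (and implicitly for this proposition as well): compute $d_t(J)$, collect coefficients of non-commutative monomials at each power of $z$, solve the leading-degree quadratic constraints on the $d_i$ to obtain a finite case split, then resolve the remaining linear equations in each case. One small correction to your expectations: in the paper's treatment the lower-degree equations do \emph{not} impose any relations among $\kappa_1,\dots,\kappa_4$---they remain free parameters in every case, just as for $\PVI$.
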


There are four orbits,
\begin{gather*}
    \begin{aligned}
    \text{\bf Orbit 1} 
    &= \left\{ 
        \text{\ref{eq:P3D6case1}}, \,
        \text{\ref{eq:P3D6case2}}
    \right\},
    &&&
    \text{\bf Orbit 2} 
    &= \left\{ 
        \text{\ref{eq:P3D6case5}}, \,
        \text{\ref{eq:P3D6case8}}
    \right\},
    \end{aligned}
    \\[2mm]
    \begin{aligned}
    \text{\bf Orbit 3} 
    &= \left\{ 
        \text{\ref{eq:P3D6case4}}, \,
        \text{\ref{eq:P3D6case6}}
    \right\},
    &&&
    \text{\bf Orbit 4} 
    &= \left\{ 
        \text{\ref{eq:P3D6case3}}, \,
        \text{\ref{eq:P3D6case7}}
    \right\},
    \end{aligned}
\end{gather*}
with respect to the transposition action.

\subsection{\PPainleve-2 systems}
The scalar $\PII$-system
\begin{align} \label{eq:scalP2sys}
    \left\{
    \begin{array}{lcl}
         u'
         &=& - u^2 
         + v
         - \tfrac12 z
         ,
         \\[2mm]
         v'
         &=& 2 u v
         + \kappa_3
    \end{array}
    \right.
\end{align}
has the following Hamiltonian:
\begin{equation} \label{eq:scalJ2}
    h
    = \tfrac12 v^2 
    - u^2 v 
    - \kappa_3 u
    - \tfrac12 z v
    .
\end{equation}
A non-abelian generalization of \eqref{eq:scalP2sys} can be written as
\begin{align} \label{eq:ncP2sys}
    \left\{
    \begin{array}{lcl}
         u'
         &=& - u^2 
         + v
         - \tfrac12 z,
         \\[2mm]
         v'
         &=& 2 v u + \beta [v, u]
         + \kappa_3.
    \end{array}
    \right.
\end{align}
The ansatz for a non-abelian analog of \eqref{eq:scalJ2} is given by
\begin{equation} \label{eq:ncJ2}
    J
    = a_1 u^2 v + (- 1 - a_1 - a_2) u v u + a_2 v u^2
    + \tfrac12 v^2 - \kappa_3 u - \tfrac12 z v.
\end{equation}

\begin{proposition} \label{thm:ncJ2list}
There are two systems of the form \eqref{eq:ncP2sys} that have the Okamoto integral \eqref{eq:ncJ2} \text{\rm(}see Appendix \ref{sec:sysintlistP2}\text{\rm)}.
\end{proposition}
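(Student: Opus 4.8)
The plan is to impose $d_t(J)=0$ directly, in accordance with Definition~\ref{Def1}. First I would pass to the auxiliary autonomous system of \eqref{eq:ncP2sys} by freezing $z$, so that the derivation acts by $d_t(u)=-u^2+v-\tfrac12 z$ and $d_t(v)=(2+\beta)vu-\beta uv+\kappa_3$, and then expand $d_t(J)$ for the ansatz \eqref{eq:ncJ2} using the Leibniz rule. The result is a noncommutative polynomial $Y(u,v)$ whose coefficients are polynomials in the three unknowns $\beta,a_1,a_2$ and in the parameters $\kappa_3,z$. Since $\kappa_3$ and $z$ are required to be arbitrary, the condition $Y\equiv0$ decouples into the vanishing of the coefficient of each noncommutative monomial, separately for each power of $\kappa_3$ and $z$. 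The efficient way to organize this is by total degree in $u,v$, working from the top down.

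The top (degree-$4$) component receives contributions only from the cubic part $a_1u^2v+(-1-a_1-a_2)\,uvu+a_2vu^2$ of $J$, paired with the quadratic pieces $-u^2$ and $(2+\beta)vu-\beta uv$ of the derivatives. Writing $c=-1-a_1-a_2$ and equating to zero the coefficients of the four monomials $u^3v,\,u^2vu,\,uvu^2,\,vu^3$ yields
\[
a_1(2+\beta)=0,\qquad (1+\beta)\,c=0,\qquad a_2\,\beta=0 .
\]
For generic $\beta$ these force $a_1=a_2=c=0$, which is incompatible with $c=-1-a_1-a_2=-1$; hence $\beta\in\{0,-1,-2\}$, and each value pins down $(a_1,a_2)$ uniquely, producing the three candidates $(\beta,a_1,a_2)=(0,0,-1),\,(-1,0,0),\,(-2,-1,0)$.

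The decisive step is the degree-$3$ component, which couples the same cubic terms (now through the degree-$1$ piece $v$ of $d_t(u)$) with $d_t\bigl(\tfrac12 v^2\bigr)$. Collecting the monomials $uv^2,\,vuv,\,v^2u$ gives, in particular, the coefficient $a_1+a_2+1$ of $vuv$, so that $a_1+a_2+1=0$, i.e.\ $c=0$. This kills the middle candidate $\beta=-1$ (for which $c=-1$) and leaves exactly the two branches $(\beta,a_1,a_2)=(0,0,-1)$ and $(-2,-1,0)$.

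It then remains to confirm that these two branches satisfy all lower-degree equations. The degree-$2$ component (carrying a $\kappa_3 u^2$ term together with $z$-weighted $uv,\,vu$ terms) and the degree-$1$ and degree-$0$ components (carrying $\kappa_3 v$ and $\kappa_3 z$) should vanish identically once the defining relation $a_1+c+a_2=-1$ and the established values are used: the $u^2$ coefficient is proportional to $a_1+c+a_2+1$, while the $\kappa_3$-linear and the constant contributions cancel in pairs. Thus both candidates are genuine Okamoto integrals and no others occur, which are precisely the systems listed in Appendix~\ref{sec:sysintlistP2}. I expect the only real difficulty to be the noncommutative bookkeeping---each monomial position must be tracked individually and $\kappa_3,z$ kept as independent weights---but with only three unknowns there is no structural obstacle beyond this organization.
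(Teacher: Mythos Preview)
Your proof is correct and follows exactly the method the paper uses throughout (and spells out in detail for the $\PVI$ case in Proposition~\ref{Prop1}): differentiate the ansatz~\eqref{eq:ncJ2} along the auxiliary system, collect coefficients of noncommutative monomials degree by degree treating $\kappa_3$ and $z$ as free parameters, and solve the resulting algebraic constraints. The paper does not write out a separate proof for Proposition~\ref{thm:ncJ2list}, but your degree-$4$ reduction to $\beta\in\{0,-1,-2\}$, the elimination of $\beta=-1$ via the $vuv$ coefficient at degree~$3$, and the verification at lower degrees are precisely what the implied computation produces.
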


\begin{remark}
These systems are related by transposition \eqref{tau}. There exist two more non-equivalent integrable systems of $\PII$ type \text{\rm(}see  \text{\rm\cite{Adler_Sokolov_2020_1})}.  
\end{remark}

\section{Tree of degenerations}
\label{sec:deg}

Schematically, the degenerations of non-abelian  $\PVI$ systems into systems of types $\PV - \PI$ can be represented as follows. The red arrows correspond to the representatives of the $\PVI$ orbits (see Section  \ref{subsec21}) and their degenerations.
\begin{figure}[H]
    \centering
    \begin{minipage}[l]{0.49\linewidth}
    \begin{figure}[H]
        \hspace{-0.5cm}
        \scalebox{0.8}{\input{pictures/P6/deg_orbit1_P4.tex}}
    \end{figure}
    \end{minipage}
    \begin{minipage}[l]{0.49\linewidth}
    \begin{figure}[H]
        \hspace{0.3cm}
        \scalebox{0.8}{\input{pictures/P6/deg_orbit1_P3.tex}}
    \end{figure}
    \end{minipage}
    \caption{The degeneration scheme of the systems from the Orbit \textbf{1}}
    \label{pic:deg_orbit1}
\end{figure}
\hspace{-0.5cm}
Two schemes of Figure \ref{pic:deg_orbit1} show separately chains of degenerations of the types $\PVI \to \PV \to \PIV \to \PII \to \PI$ and $\PVI \to \PV \to \PIIIpr \to \PII \to \PI$ for systems from Orbit \textbf{1}.

The degenerations of $\PVI$ systems from Orbits \textbf{2} and \textbf{3} are shown in Figure \ref{pic:deg_orbit23} below. 
\begin{figure}[H]
    \centering
    \begin{minipage}[l]{0.49\linewidth}
    \begin{figure}[H]
        \hspace{-0.5cm}
        \scalebox{0.8}{\input{pictures/P6/deg_orbit2.tex}}
    \end{figure}
    \end{minipage}
    \begin{minipage}[l]{0.49\linewidth}
    \begin{figure}[H]
        \hspace{0.3cm}
        \scalebox{0.8}{\tikzset{every picture/.style={line width=0.75pt}} %set default line width to 0.75pt    

\begin{tikzpicture}[x=0.75pt,y=0.75pt,yscale=-1,xscale=1]
%uncomment if require: \path (0,273); %set diagram left start at 0, and has height of 273

%Straight Lines [id:da08833901730242477] 
\draw    (145,219) -- (172.23,219) ;
\draw [shift={(174.23,219)}, rotate = 180] [color={rgb, 255:red, 0; green, 0; blue, 0 }  ][line width=0.75]    (10.93,-3.29) .. controls (6.95,-1.4) and (3.31,-0.3) .. (0,0) .. controls (3.31,0.3) and (6.95,1.4) .. (10.93,3.29)   ;
%Straight Lines [id:da9661814005647769] 
\draw    (145,203) -- (173.18,190.23) ;
\draw [shift={(175,189.4)}, rotate = 155.61] [color={rgb, 255:red, 0; green, 0; blue, 0 }  ][line width=0.75]    (10.93,-3.29) .. controls (6.95,-1.4) and (3.31,-0.3) .. (0,0) .. controls (3.31,0.3) and (6.95,1.4) .. (10.93,3.29)   ;
%Straight Lines [id:da2936029037868426] 
\draw    (145,52) -- (172.23,52) ;
\draw [shift={(174.23,52)}, rotate = 180] [color={rgb, 255:red, 0; green, 0; blue, 0 }  ][line width=0.75]    (10.93,-3.29) .. controls (6.95,-1.4) and (3.31,-0.3) .. (0,0) .. controls (3.31,0.3) and (6.95,1.4) .. (10.93,3.29)   ;
%Straight Lines [id:da11888595021665216] 
\draw    (145,68) -- (173.18,84.93) ;
\draw [shift={(174.89,85.97)}, rotate = 211.01] [color={rgb, 255:red, 0; green, 0; blue, 0 }  ][line width=0.75]    (10.93,-3.29) .. controls (6.95,-1.4) and (3.31,-0.3) .. (0,0) .. controls (3.31,0.3) and (6.95,1.4) .. (10.93,3.29)   ;
%Straight Lines [id:da8803318385535579] 
\draw [color={rgb, 255:red, 243; green, 0; blue, 0 }  ,draw opacity=1 ]   (145,121) -- (172.23,121) ;
\draw [shift={(174.23,121)}, rotate = 180] [color={rgb, 255:red, 243; green, 0; blue, 0 }  ,draw opacity=1 ][line width=0.75]    (10.93,-3.29) .. controls (6.95,-1.4) and (3.31,-0.3) .. (0,0) .. controls (3.31,0.3) and (6.95,1.4) .. (10.93,3.29)   ;
%Straight Lines [id:da7059612625957719] 
\draw [color={rgb, 255:red, 243; green, 0; blue, 0 }  ,draw opacity=1 ]   (145,150) -- (172.23,150) ;
\draw [shift={(174.23,150)}, rotate = 180] [color={rgb, 255:red, 243; green, 0; blue, 0 }  ,draw opacity=1 ][line width=0.75]    (10.93,-3.29) .. controls (6.95,-1.4) and (3.31,-0.3) .. (0,0) .. controls (3.31,0.3) and (6.95,1.4) .. (10.93,3.29)   ;
%Straight Lines [id:da9311688179295847] 
\draw    (65,219) -- (92.23,219) ;
\draw [shift={(94.23,219)}, rotate = 180] [color={rgb, 255:red, 0; green, 0; blue, 0 }  ][line width=0.75]    (10.93,-3.29) .. controls (6.95,-1.4) and (3.31,-0.3) .. (0,0) .. controls (3.31,0.3) and (6.95,1.4) .. (10.93,3.29)   ;
%Straight Lines [id:da7431494186685621] 
\draw    (65,52) -- (92.23,52) ;
\draw [shift={(94.23,52)}, rotate = 180] [color={rgb, 255:red, 0; green, 0; blue, 0 }  ][line width=0.75]    (10.93,-3.29) .. controls (6.95,-1.4) and (3.31,-0.3) .. (0,0) .. controls (3.31,0.3) and (6.95,1.4) .. (10.93,3.29)   ;
%Straight Lines [id:da1309831672518471] 
\draw    (65,150) -- (92.23,150) ;
\draw [shift={(94.23,150)}, rotate = 180] [color={rgb, 255:red, 0; green, 0; blue, 0 }  ][line width=0.75]    (10.93,-3.29) .. controls (6.95,-1.4) and (3.31,-0.3) .. (0,0) .. controls (3.31,0.3) and (6.95,1.4) .. (10.93,3.29)   ;
%Straight Lines [id:da39939021763205007] 
\draw [color={rgb, 255:red, 243; green, 0; blue, 0 }  ,draw opacity=1 ]   (65,121) -- (92.23,121) ;
\draw [shift={(94.23,121)}, rotate = 180] [color={rgb, 255:red, 243; green, 0; blue, 0 }  ,draw opacity=1 ][line width=0.75]    (10.93,-3.29) .. controls (6.95,-1.4) and (3.31,-0.3) .. (0,0) .. controls (3.31,0.3) and (6.95,1.4) .. (10.93,3.29)   ;
%Straight Lines [id:da6899255472974717] 
\draw    (65,190) -- (93.18,206.93) ;
\draw [shift={(94.89,207.97)}, rotate = 211.01] [color={rgb, 255:red, 0; green, 0; blue, 0 }  ][line width=0.75]    (10.93,-3.29) .. controls (6.95,-1.4) and (3.31,-0.3) .. (0,0) .. controls (3.31,0.3) and (6.95,1.4) .. (10.93,3.29)   ;
%Straight Lines [id:da5074427849632959] 
\draw [color={rgb, 255:red, 243; green, 0; blue, 0 }  ,draw opacity=1 ]   (225,121) -- (252.23,121) ;
\draw [shift={(254.23,121)}, rotate = 180] [color={rgb, 255:red, 243; green, 0; blue, 0 }  ,draw opacity=1 ][line width=0.75]    (10.93,-3.29) .. controls (6.95,-1.4) and (3.31,-0.3) .. (0,0) .. controls (3.31,0.3) and (6.95,1.4) .. (10.93,3.29)   ;
%Straight Lines [id:da3567272940430455] 
\draw    (225,211) -- (253.18,198.23) ;
\draw [shift={(255,197.4)}, rotate = 155.61] [color={rgb, 255:red, 0; green, 0; blue, 0 }  ][line width=0.75]    (10.93,-3.29) .. controls (6.95,-1.4) and (3.31,-0.3) .. (0,0) .. controls (3.31,0.3) and (6.95,1.4) .. (10.93,3.29)   ;
%Straight Lines [id:da3086739367169635] 
\draw    (225,56) -- (253.18,72.93) ;
\draw [shift={(254.89,73.97)}, rotate = 211.01] [color={rgb, 255:red, 0; green, 0; blue, 0 }  ][line width=0.75]    (10.93,-3.29) .. controls (6.95,-1.4) and (3.31,-0.3) .. (0,0) .. controls (3.31,0.3) and (6.95,1.4) .. (10.93,3.29)   ;
%Straight Lines [id:da7703952283906219] 
\draw    (225,82) -- (253.18,98.93) ;
\draw [shift={(254.89,99.97)}, rotate = 211.01] [color={rgb, 255:red, 0; green, 0; blue, 0 }  ][line width=0.75]    (10.93,-3.29) .. controls (6.95,-1.4) and (3.31,-0.3) .. (0,0) .. controls (3.31,0.3) and (6.95,1.4) .. (10.93,3.29)   ;
%Straight Lines [id:da5649811387135029] 
\draw    (225,187) -- (253.18,174.23) ;
\draw [shift={(255,173.4)}, rotate = 155.61] [color={rgb, 255:red, 0; green, 0; blue, 0 }  ][line width=0.75]    (10.93,-3.29) .. controls (6.95,-1.4) and (3.31,-0.3) .. (0,0) .. controls (3.31,0.3) and (6.95,1.4) .. (10.93,3.29)   ;
%Straight Lines [id:da027609246110436025] 
\draw    (65,81) -- (93.18,68.23) ;
\draw [shift={(95,67.4)}, rotate = 155.61] [color={rgb, 255:red, 0; green, 0; blue, 0 }  ][line width=0.75]    (10.93,-3.29) .. controls (6.95,-1.4) and (3.31,-0.3) .. (0,0) .. controls (3.31,0.3) and (6.95,1.4) .. (10.93,3.29)   ;
%Straight Lines [id:da38880219622327483] 
\draw [color={rgb, 255:red, 243; green, 0; blue, 0 }  ,draw opacity=1 ]   (225,150) -- (252.23,150) ;
\draw [shift={(254.23,150)}, rotate = 180] [color={rgb, 255:red, 243; green, 0; blue, 0 }  ,draw opacity=1 ][line width=0.75]    (10.93,-3.29) .. controls (6.95,-1.4) and (3.31,-0.3) .. (0,0) .. controls (3.31,0.3) and (6.95,1.4) .. (10.93,3.29)   ;
%Straight Lines [id:da6476839265268075] 
\draw [color={rgb, 255:red, 243; green, 0; blue, 0 }  ,draw opacity=1 ]   (305,135) -- (332.23,135) ;
\draw [shift={(334.23,135)}, rotate = 180] [color={rgb, 255:red, 243; green, 0; blue, 0 }  ,draw opacity=1 ][line width=0.75]    (10.93,-3.29) .. controls (6.95,-1.4) and (3.31,-0.3) .. (0,0) .. controls (3.31,0.3) and (6.95,1.4) .. (10.93,3.29)   ;

% Text Node
\draw (193,111.4) node [anchor=north west][inner sep=0.75pt]    {\ref{eq:P4casem1m1}};
% Text Node
\draw (193,139.4) node [anchor=north west][inner sep=0.75pt]    {\ref{eq:hamP3D6case1}};
% Text Node
\draw (187,75.4) node [anchor=north west][inner sep=0.75pt]    {\ref{eq:P3D6case6}};
% Text Node
\draw (187,41.4) node [anchor=north west][inner sep=0.75pt]    {\ref{eq:P4casem2m1}};
% Text Node
\draw (187,175.4) node [anchor=north west][inner sep=0.75pt]    {\ref{eq:P3D6case4}};
% Text Node
\draw (187,209.4) node [anchor=north west][inner sep=0.75pt]    {\ref{eq:P4case0m1}};
% Text Node
\draw (113,125.4) node [anchor=north west][inner sep=0.75pt]    {\ref{eq:hamP5case1}};
% Text Node
\draw (107,41.4) node [anchor=north west][inner sep=0.75pt]    {\ref{eq:P5case32}};
% Text Node
\draw (107,209.4) node [anchor=north west][inner sep=0.75pt]    {\ref{eq:P5case31}};
% Text Node
\draw (27,41.4) node [anchor=north west][inner sep=0.75pt]    {\ref{eq:P6case32}};
% Text Node
\draw (21.5,75.4) node [anchor=north west][inner sep=0.75pt]    {\ref{eq:P6case36}};
% Text Node
\draw (27,111.4) node [anchor=north west][inner sep=0.75pt]    {\ref{eq:P6case31}};
% Text Node
\draw (21.5,139.4) node [anchor=north west][inner sep=0.75pt]    {\ref{eq:P6case34}};
% Text Node
\draw (27,175.4) node [anchor=north west][inner sep=0.75pt]    {\ref{eq:P6case33}};
% Text Node
\draw (21.5,209.4) node [anchor=north west][inner sep=0.75pt]    {\ref{eq:P6case35}};
% Text Node
\draw (270,125.4) node [anchor=north west][inner sep=0.75pt]    {\ref{eq:P2betam1}};
% Text Node
\draw (350,125.4) node [anchor=north west][inner sep=0.75pt]    {\ref{eq:P1ham}};

\end{tikzpicture}}
    \end{figure}
    \end{minipage}
    \caption{The degeneration scheme of the systems from the Orbits \textbf{2} and \textbf{3}}
    \label{pic:deg_orbit23}
\end{figure}

\subsection{Description of degenerations}
\label{sec:degdata}
In this section, we present explicit formulas  corresponding to the arrows in Figures \ref{pic:deg_orbit1} and \ref{pic:deg_orbit23}. In addition to all systems that possess the Okamoto integrals, the non-abelian Hamiltonian  \Painleve systems of types $\PV-\PI$ appear here (see Appendix \ref{sec:syshamlist}). They are denoted by $\text{P}_i^H$.

We also describe limiting transitions for isomonodromic Lax pairs. The Lax representations for all systems shown in Figures \ref{pic:deg_orbit1} and \ref{pic:deg_orbit23} can be obtained from ones for the $\PVI$ systems (see Section \ref{Isomonodromic}). 

\begin{remark} 
\label{rem:degcorr}
In the explicit formulas for the Lax pair degenerations $\PIV \to \PII$, $\PIIIpr \to \PII$ and 
$\PII \to \PI$ the function $g$ contains a constant different for different systems.  
If we take this constant equal to zero, then in some cases we get a Lax pair that is not polynomial on $\varepsilon$ but has a pole at $\varepsilon=0$. In such a situation, we choose a constant so that the pole disappears, and after that we set $\varepsilon=0$. 

Notice that for branches starting with the three representatives and marked in red, all constants are zero. The same is true for all branches starting with 6 systems connected to the representatives via $r_1,r_2$, but not via the transposition \eqref{tau}.
 \end{remark}

\subsubsection{
\texorpdfstring{
$\PVI \to \PV$
}{P6 -> P5}
}

The systems of $\PVI$ type can be reduced to $\PV$ systems by the following map with the small parameter $\varepsilon$:
\begin{align} \label{eq:P6toP5map}
    z 
    &\mapsto \varepsilon^{-1} (z - 1),
    &
    \kappa_2 
    &\mapsto - \kappa_1 + \kappa_2 + \kappa_4,
    &
    \kappa_4
    &\mapsto - \varepsilon \kappa_1 + \varepsilon \, \kappa_4.
\end{align}
To obtain the corresponding Lax pair, we supplement the transformation \eqref{eq:P6toP5map} by 
\begin{align} \label{eq:P6toP5map_la}
    \lambda 
    &\mapsto (z - 1)^{-1} (\lambda - 1).
\end{align}
As a result, a pair of the form \eqref{eq:matABform} degenerates to a Lax pair of the following structure:
\begin{align} \label{eq:matABform_P5}
    \mathbf{A} (\lambda, z)
    &= A_0
    + \frac{A_1}{\lambda}
    + \frac{A_2}{\lambda - 1},
    &
    \mathbf{B} (\lambda, z)
    &= B_1 \lambda
    + B_{0}
\end{align}
 for a $\PV$ system.

\begin{remark} 
In the formulas \eqref{eq:matABform}, \eqref{eq:matABform_P5}, \eqref{eq:matABform_P4}, \eqref{eq:matABform_P3D6'_}, \eqref{eq:matABform_P2}, \eqref{eq:matABform_P1} we indicate only the dependence of $\bf A$ and $\bf B$~on~$\lambda$. We hope that the use of the same notation for different coefficients $A_i$ and $B_i$ in these formulas will not lead to misunderstanding.
\end{remark}

In the following example we demonstrate the mechanism for the appearance of Hamiltonian systems in the process of limiting transitions.
\begin{example}[\text{\rm\ref{eq:P6case31}} $\to$ \text{\rm\ref{eq:hamP5case1}}]
After changing the variables and parameters given by \eqref{eq:P6toP5map},  system \text{\rm\ref{eq:P6case31}} and the corresponding Okamoto integral turn into
\begin{align}
    \left\{
    \begin{array}{lcr}
         z (1 + \varepsilon \, z) \, u'
         &=& u^2 v u + u v u^2
         - u^2 v - 2 u v u - v u^2
         - \kappa_1 u^2 
         + u v + v u
         + (\kappa_1 + \kappa_2) u
         \\[1mm]
         &&
         - \, \kappa_2
         + \kappa_4 \, z u
         + \varepsilon \, z \left(
         - u v u - v u^2
         + \, u v
         + v u
         + \kappa_1 u
         - \kappa_2
         \right),
         \\[2mm]
         z (1 + \varepsilon \, z) \, v'
         &=& - u v u v - v u^2 v - v u v u
         + u v^2 + 2 v u v + v^2 u
         - v^2 
         + \kappa_1 u v
         + \kappa_1 v u
         \hspace{3mm}
         \\[1mm]
         && 
         - \, (\kappa_1 + \kappa_2) v
         + \kappa_3
         - \kappa_4 \, z v
         + \varepsilon \, z \left(
         v u v + v^2 u
         - v^2
         - \kappa_1 v
         \right),
    \end{array}
    \right.
\end{align}
\begin{align}
\begin{aligned}
    J
    = - \kappa_4 \, \varepsilon^{-1} \brackets{
    u v - v u
    }
    + u v u v u
    - u v u v - v u v u
    - \kappa_1 u v u
    + v u v
    + \kappa_1 v u + \kappa_2 u v 
    \\[1mm]
    - \, \kappa_2 v
    - \kappa_3 u
    + \kappa_4 \, z v u
    + \varepsilon \, z \brackets{
    - v u v u
    + v u v
    + \kappa_1 v u
    - \kappa_2 v
    }.
\end{aligned}
\end{align}
Under the limit $\varepsilon \to 0$, the system becomes the Hamiltonian \text{\rm\ref{eq:hamP5case1}} system
\begin{align}
    \left\{
    \begin{array}{lcr}
         z \, u'
         &=& u^2 v u + u v u^2
         - u^2 v - 2 u v u - v u^2
         - \kappa_1 u^2 
         + u v + v u
         \hspace{0.7cm}
         \\[1mm]
         &&
         + \, (\kappa_1 + \kappa_2) u
         - \kappa_2
         + \kappa_4 \, z u,
         \\[2mm]
         z \, v'
         &=& - u v u v - v u^2 v - v u v u
         + u v^2 + 2 v u v + v^2 u
         - v^2 
         + \, \kappa_1 u v
         \\[1mm]
         && 
         + \kappa_1 v u
         - (\kappa_1 + \kappa_2) v
         + \kappa_3
         - \kappa_4 \, z v
    \end{array}
    \right.
\end{align}
and the integral degenerates to {\rm (}cf. Lemma \ref{thm:hamint}{\rm)} 
\begin{align}
    I 
    = u v - v u.
\end{align}

To get a Lax pair for the resulting system, in addition to the above transformation, we should replace the spectral parameter $\lambda$ by $1 + \varepsilon \, z \, \lambda$. As a result, the Lax pair \eqref{eq:Laxpair_P6case31} takes the form \eqref{eq:matABform_P5}, where matrices $A_0$, $A_1$, $A_2$, $B_1$, and $B_0$ are given by
\begin{gather} 
    \notag
    \begin{aligned}
    A_{0}
    &= 
    \begin{pmatrix}
    \kappa_4 z & 0 \\[0.9mm] 0 & 0
    \end{pmatrix},
    &
    A_1
    &= 
    \begin{pmatrix}
    - u v + \kappa_1 & 1
    \\[0.9mm]
    - u v u v
    + \kappa_1 u v
    + \kappa_3
    & u v
    \end{pmatrix},
    &
    A_2
    &= 
    \begin{pmatrix}
    u v - \kappa_2 & - u
    \\[0.9mm]
    u v^2 - \kappa_2 v & - u v
    \end{pmatrix},
    \end{aligned}
    \\[-1mm]
    \label{eq:Laxpair_P50}
    \\[-1mm]
    \notag
    \begin{aligned}
    B_1
    &= 
    \begin{pmatrix}
    \kappa_4 & 0 \\[0.9mm] 0 & 0
    \end{pmatrix},
    &
    B_{0}
    &= z^{-1}
    \begin{pmatrix}
    u^2 v + u v u
    - 2 u v
    - v u 
    + v
    - \kappa_1 u 
    + \kappa_1
    & 
    - u + 1
    \\[0.9mm]
    - u v u v 
    + u v^2
    + \kappa_1 u v 
    - \kappa_2 v
    + \kappa_3
    &
    - v u + v
    \end{pmatrix}.
    \end{aligned}
\end{gather}
\end{example}

\subsubsection{
\texorpdfstring{
$\PV \to \PIV$
}{P5 -> P4}
}

In the case of $\PV \to \PIV$, one can consider the transformation 
\begin{align}
    \label{eq:P5toP4map}
    z 
    &\mapsto \tfrac{1}{\sqrt{2}} \, \varepsilon^{-1} (z - 1),
    &&&
    u 
    &\mapsto \sqrt{2} \, \varepsilon^{-1} u,
    &&&
    v
    &\mapsto \sqrt{2} \, \varepsilon \, v,
\end{align}
taking together with the following changes of the parameters $\kappa_i$:
\begin{align*}
    \kappa_1
    &= - \kappa_3 + 2 \varepsilon^{-2},
    &&&
    \kappa_2 
    &\mapsto - 2 \kappa_2,
    &&&
    \kappa_3
    &\mapsto \phantom{-} 2 \kappa_3 - 2 \varepsilon^{-2},
    &&&
    \kappa_4
    &= - \varepsilon^{-2};
    \\[2mm]
    \kappa_1
    &= \phantom{-} \kappa_3 + 2 \varepsilon^{-2},
    &&&
    \kappa_2 
    &\mapsto - 2 \kappa_2,
    &&&
    \kappa_3
    &\mapsto - 2 \kappa_3 - 2 \varepsilon^{-2},
    &&&
    \kappa_4
    &= - \varepsilon^{-2}
\end{align*}
for Cases \textbf{1} and \textbf{4}; and 
\begin{align}
    \label{eq:P5toP4map_k}
    \kappa_1
    &= \varepsilon^{-2},
    &&&
    \kappa_2 
    &\mapsto - 2 \kappa_2,
    &&&
    \kappa_3
    &\mapsto 2 \varepsilon^2 \kappa_3,
    &&&
    \kappa_4
    &= - \varepsilon^{-2}
\end{align}
for Cases \textbf{2}, \textbf{3}, \textbf{5}.
For the degeneration of Lax pairs, in addition to the above formulas, consider the following mapping:
\begin{align}
    \label{eq:P5toP4map_la}
    \lambda 
    &\mapsto \tfrac{1}{\sqrt{2}} \, \varepsilon^{-1} (\lambda - 1),
    &
    \mathbf{A}
    &\mapsto g \, \mathbf{A} \, g^{-1},
    &
    \mathbf{B}
    &\mapsto g \, \mathbf{B} \, g^{-1},
    &
    g
    &= 
    \begin{pmatrix}
    1 & 0 \\ 0 & \sqrt{2} \, \varepsilon
    \end{pmatrix},
\end{align}
which brings pair \eqref{eq:matABform_P5} to the form
\begin{align} \label{eq:matABform_P4}
    \mathbf{A} (\lambda, z)
    &= A_1 \lambda + A_{0} + A_{-1} \lambda^{-1},
    &
    \mathbf{B} (\lambda, z)
    &= B_1 \lambda + B_{0}.
\end{align}

\subsubsection{
\texorpdfstring{
$\PV \to \PIIIpr$
}{P5 -> P3'}
}

The mapping
\begin{gather}
\label{eq:P5toP3D6'map}
\begin{gathered}
    \begin{aligned}
    u 
    &\mapsto \varepsilon^{-1} (u - 1),
    &&&
    v
    &\mapsto \varepsilon \, v,
    \end{aligned}
\end{gathered}
\end{gather}
together with the following change of the parameters:
\begin{align*}
    \kappa_1
    &\mapsto - \kappa_1 + \kappa_2, 
    &&&
    \kappa_2 
    &\mapsto - \tfrac12 \varepsilon(\kappa_1 - \kappa_3), 
    &&&
    \kappa_3 
    &\mapsto \tfrac14 \varepsilon(\kappa_1^2 - \kappa_3^2), 
    &&&
    \kappa_4 
    &\mapsto \varepsilon^{-1} \kappa_4;
    \\[2mm]
    \kappa_1
    &\mapsto -\kappa_1 + \kappa_2, 
    &&&
    \kappa_2 
    &\mapsto - \tfrac12 \varepsilon(\kappa_1 + \kappa_3), 
    &&&
    \kappa_3 
    &\mapsto \tfrac14 \varepsilon(\kappa_1^2 - \kappa_3^2), 
    &&&
    \kappa_4 
    &\mapsto \varepsilon^{-1} \kappa_4
\end{align*}
for Cases \textbf{1} and \textbf{4}; and 
\begin{align}
\label{eq:P5toP3D6'map_k}
    \kappa_1
    &\mapsto - \kappa_1 + \kappa_2,
    &&&
    \kappa_2 
    &\mapsto - \varepsilon \, \kappa_2,
    &&&
    \kappa_3
    &\mapsto - \varepsilon \, \kappa_3,
    &&&
    \kappa_4
    &\mapsto \varepsilon^{-1} \, \kappa_4
\end{align}
for Cases \textbf{2}, \textbf{3}, \textbf{5}
describes the degeneration $\PV \to \PIIIpr$. Supplementing the latter formulas with 
\begin{align}
    \label{eq:P5toP3D6'map_la}
    \lambda 
    &\mapsto \varepsilon \, \lambda,
    &
    \mathbf{A}
    &\mapsto g \, \mathbf{A} \, g^{-1},
    &
    \mathbf{B}
    &\mapsto g \, \mathbf{B} \, g^{-1}
    ,
    &
    g
    &= 
    \begin{pmatrix}
    1 & 0 \\[0.9mm] 0 & \varepsilon
    \end{pmatrix},
\end{align}
and taking the limit $\varepsilon \to 0$, we bring matrices \eqref{eq:matABform_P5} to the form
\begin{align} \label{eq:matABform_P3D6'_}
    \mathbf{A} (\lambda, z)
    &= A_0 
    + A_{-1} \lambda^{-1} 
    + A_{-2} \lambda^{-2},
    &
    \mathbf{B} (\lambda, z)
    &= B_1 \lambda + B_0.
\end{align}

\subsubsection{
\texorpdfstring{
$\PIV \to \PII$
}{P4 -> P2}
}

Putting
\begin{gather}
\label{eq:P4toP2map}
\begin{gathered}
    \begin{aligned}
    z 
    &\mapsto \tfrac14 \varepsilon^{-4} - \varepsilon^{-1} z,
    &&&
    u 
    &\mapsto - \tfrac14 \varepsilon^{-2} - \varepsilon \, u,
    &&&
    v
    &\mapsto - \tfrac12 \varepsilon^{-1} v,
    \end{aligned}
    \\[1mm]
    \begin{aligned}
    \kappa_2
    &= - \tfrac{1}{16} \varepsilon^{-6},
    &&&
    \kappa_3
    &\mapsto \tfrac12 \kappa_3
    \end{aligned}
\end{gathered}
\end{gather}
and taking the limit $\varepsilon \to 0$, one can obtain the degeneration $\PIV \to \PII$ for $\PIV$ systems.
The corresponding degeneration for a pair \eqref{eq:matABform_P4} is given by
\begin{align}
\label{eq:P4toP2map_la}
\begin{gathered}
    \begin{aligned}
    \lambda
    &\mapsto \tfrac14 \varepsilon^{-2} + 2 \varepsilon \lambda,
    \end{aligned}
    \\
    \begin{aligned}
    \mathbf{A}
    &\mapsto g \, \mathbf{A} \, g^{-1},
    &
    \mathbf{B}
    &\mapsto g \, \mathbf{B} \, g^{-1}
    + g_z' \, g^{-1},
    &
    g
    &= e^{\const \, \varepsilon^{-3} z}
    \begin{pmatrix}
    1 & 0 \\[0.9mm]
    - \varepsilon \, v & \varepsilon
    \end{pmatrix}.
    \end{aligned}
\end{gathered}
\end{align}
As a result, we obtain  a pair
\begin{align} \label{eq:matABform_P2}
    \mathbf{A} (\lambda, z)
    &= A_2 \lambda^2 + A_1 \lambda + A_0,
    &
    \mathbf{B} (\lambda, z)
    &= B_1 \lambda + B_0
\end{align}
of the Jimbo-Miwa type for the corresponding non-abelian $\PII$ system.

\subsubsection{
\texorpdfstring{
$\PIIIpr \to \PII$
}{P3' -> P2}
}
The following map
\begin{gather}
\label{eq:P3D6'toP2map}
\begin{gathered} 
    \begin{aligned}
    z 
    &\mapsto - \tfrac12 \varepsilon^{-2} - \tfrac12 \varepsilon z
    ,
    &&&
    u 
    &\mapsto \tfrac12 \varepsilon^{-1} \brackets{
    u - 1
    }
    ,
    &&&
    v
    &\mapsto 2 \varepsilon v,
    \end{aligned}
    \\[1mm]
    \begin{aligned}
    \kappa_1
    &= \tfrac12 \varepsilon^{-3},
    &&&
    \kappa_2
    &= - \tfrac14 \varepsilon^{-3},
    &&&
    \kappa_3
    &\mapsto - 4 \varepsilon^{3} \kappa_3,
    &&&
    \kappa_4
    &= \tfrac14
    \end{aligned}
\end{gathered}
\end{gather}
reduces $\PIIIpr$ type systems to systems of $\PII$ type.
To get the corresponding Lax pair, we consider the transformation
\begin{align}
\label{eq:P3D6'toP2map_la}
\begin{gathered}
    \begin{aligned}
    \lambda
    &\mapsto - \tfrac12 \, \varepsilon^{-1} (\lambda + 1),
    \end{aligned}
    \\
    \begin{aligned}
    \mathbf{A}
    &\mapsto g \, \mathbf{A} \, g^{-1},
    &&&
    \mathbf{B}
    &\mapsto g \, \mathbf{B} \, g^{-1}
    + g_z' \, g^{-1}
    ,
    &&&
    g
    &= e^{\const \, z}
    \begin{pmatrix}
    1 & 0 \\ 0 & 2 \varepsilon^2
    \end{pmatrix},
    \end{aligned}
\end{gathered}
\end{align}
that, after taking the limit $\varepsilon \to 0$, brings \eqref{eq:matABform_P3D6'_} to the form \eqref{eq:matABform_P2}.

\subsubsection{
\texorpdfstring{
$\PII \to \PI$
}{P2 -> P1}
}
The degeneration of $\PII$ systems is given by
\begin{gather}
\label{eq:P2toP1map}
\begin{gathered}
    \begin{aligned}
    z 
    &\mapsto \varepsilon^{-2} z + 6 \varepsilon^{-12},
    &&&
    u 
    &\mapsto - \varepsilon^{-1} u + \varepsilon^4 v + 3 \varepsilon^{-6}
    ,
    &&&
    v
    &\mapsto - 2 \varepsilon^{-4} u + \varepsilon v + 4 \varepsilon^{-9},
    \end{aligned}
    \\[1mm]
    \begin{aligned}
    \kappa_3
    &= 4 \varepsilon^{-15}.
    \end{aligned}
\end{gathered}
\end{gather}
Combining this mapping with the following degeneration formulas 
\begin{gather}
\label{eq:P2toP1map_la}
\begin{gathered}
    \lambda
    \mapsto \varepsilon^{-6} + \varepsilon^{-1} \lambda,
    \\
    \begin{aligned}
    \mathbf{A}
    &\mapsto g \, \mathbf{A} \, g^{-1}
    + g_{\lambda}' \, g^{-1}
    ,
    &&&
    \mathbf{B}
    &\mapsto g \, \mathbf{B} \, g^{-1}
    + g_z' \, g^{-1}
    ,
    \end{aligned}
    \\
    \begin{aligned}
    g
    &= e^{
    4 \varepsilon^{-10} \lambda
    + \varepsilon^{-5} \lambda^2
    + \const \, \varepsilon^{-5} z
    }
    \begin{pmatrix}
    1 & 0 \\[0.9mm]
    \varepsilon^2 \, u & \varepsilon^2
    \end{pmatrix},
    \end{aligned}
\end{gathered}
\end{gather}
we obtain from \eqref{eq:matABform_P2} a Lax pair of the form 
\begin{align} \label{eq:matABform_P1}
    \mathbf{A} (\lambda, z)
    &= A_2 \lambda^2 + A_1 \lambda + A_0,
    &
    \mathbf{B} (\lambda, z)
    &= B_1 \lambda + B_0
\end{align}
for the corresponding $\PI$ type system.

\begin{example}
Below we list, as an example, the degenerations of Lax pairs for system {\rm\ref{eq:P6case32}} {\rm(}see Example \ref{exm:r1deg}{\rm)}.
The scheme of the degenerations is given in Figure \ref{pic:deg_8P6}{\rm:}
\begin{figure}[H]
    \centering
    \scalebox{1.}{\tikzset{every picture/.style={line width=0.75pt}} %set default line width to 0.75pt        

\begin{tikzpicture}[x=0.75pt,y=0.75pt,yscale=-1,xscale=1]
%uncomment if require: \path (0,82); %set diagram left start at 0, and has height of 82

%Straight Lines [id:da46646531768070965] 
\draw    (32.65,36.5) -- (59.88,36.5) ;
\draw [shift={(61.88,36.5)}, rotate = 180] [color={rgb, 255:red, 0; green, 0; blue, 0 }  ][line width=0.75]    (10.93,-3.29) .. controls (6.95,-1.4) and (3.31,-0.3) .. (0,0) .. controls (3.31,0.3) and (6.95,1.4) .. (10.93,3.29)   ;
%Straight Lines [id:da8027149644960445] 
\draw    (99.3,24.46) -- (128.8,13.68) ;
\draw [shift={(130.67,12.99)}, rotate = 159.93] [color={rgb, 255:red, 0; green, 0; blue, 0 }  ][line width=0.75]    (10.93,-3.29) .. controls (6.95,-1.4) and (3.31,-0.3) .. (0,0) .. controls (3.31,0.3) and (6.95,1.4) .. (10.93,3.29)   ;
%Straight Lines [id:da7707667943402254] 
\draw    (164.3,64.46) -- (193.8,51.68) ;
\draw [shift={(195.67,50.5)}, rotate = 152.93] [color={rgb, 255:red, 0; green, 0; blue, 0 }  ][line width=0.75]    (10.93,-3.29) .. controls (6.95,-1.4) and (3.31,-0.3) .. (0,0) .. controls (3.31,0.3) and (6.95,1.4) .. (10.93,3.29)   ;
%Straight Lines [id:da4813954173932613] 
\draw    (99.3,49.46) -- (128.72,63.72) ;
\draw [shift={(130.52,64.59)}, rotate = 205.86] [color={rgb, 255:red, 0; green, 0; blue, 0 }  ][line width=0.75]    (10.93,-3.29) .. controls (6.95,-1.4) and (3.31,-0.3) .. (0,0) .. controls (3.31,0.3) and (6.95,1.4) .. (10.93,3.29)   ;
%Straight Lines [id:da7934170316175215] 
\draw    (165.3,11.46) -- (194.72,25.72) ;
\draw [shift={(196.52,26.59)}, rotate = 205.86] [color={rgb, 255:red, 0; green, 0; blue, 0 }  ][line width=0.75]    (10.93,-3.29) .. controls (6.95,-1.4) and (3.31,-0.3) .. (0,0) .. controls (3.31,0.3) and (6.95,1.4) .. (10.93,3.29)   ;
%Straight Lines [id:da10093602464196261] 
\draw    (229.65,36.5) -- (256.88,36.5) ;
\draw [shift={(258.88,36.5)}, rotate = 180] [color={rgb, 255:red, 0; green, 0; blue, 0 }  ][line width=0.75]    (10.93,-3.29) .. controls (6.95,-1.4) and (3.31,-0.3) .. (0,0) .. controls (3.31,0.3) and (6.95,1.4) .. (10.93,3.29)   ;

% Text Node
\draw (4.02,29.79) node [anchor=north west][inner sep=0.75pt]    
{\ref{eq:P6case32}};
% Text Node
\draw (70.02,29.8) node [anchor=north west][inner sep=0.75pt]    
{\ref{eq:P5case32}};
% Text Node
\draw (135.8,4.04) node [anchor=north west][inner sep=0.75pt]   
{\ref{eq:P3D6case6}};
% Text Node
\draw (135.8,57.4) node [anchor=north west][inner sep=0.75pt]    
{\ref{eq:P4casem2m1}};
% Text Node
\draw (202.77,26.8) node [anchor=north west][inner sep=0.75pt]    
{\ref{eq:P2betam1}};
% Text Node
\draw (267.02,26.8) node [anchor=north west][inner sep=0.75pt]    
{\ref{eq:P1ham}};

\end{tikzpicture}}
    \caption{Degenerations of \ref{eq:P6case32}}
    \label{pic:deg_8P6}
\end{figure}

\medskip

{\rm
\textbullet \,\,
$\text{\ref{eq:P6case32}} \to \text{\ref{eq:P5case32}}$.}
The limiting transition given by formula \eqref{eq:P6toP5map}  transforms Lax pair \eqref{eq:Laxpair_P6case32} to a pair of the form \eqref{eq:matABform_P5} with
\begin{gather} 
    \notag
    \begin{aligned}
    A_{0}
    &= 
    \begin{pmatrix}
    \kappa_4 z & 0 \\[0.9mm] 0 & 0
    \end{pmatrix},
    &
    A_1
    &= 
    \begin{pmatrix}
    - u v + \kappa_1 & 1
    \\[0.9mm]
    - u v u v
    + \kappa_1 u v 
    + \kappa_3
    & u v
    \end{pmatrix},
    &
    A_2
    &= 
    \begin{pmatrix}
    u v - \kappa_2
    & - u
    \\[0.9mm]
    v u v
    - \kappa_2 v
    & 
    - v u
    \end{pmatrix},
    \end{aligned}
    \\[-1mm]
    \label{eq:Laxpair_P5case32}
    \\[-1mm]
    \notag
    \begin{aligned}
    B_1
    &= 
    \begin{pmatrix}
    \kappa_4 & 0 \\[0.9mm] 0 & 0
    \end{pmatrix},
    &
    B_{0}
    &= z^{-1}
    \begin{pmatrix}
    u^2 v
    + u v u
    - 2 u v 
    - v u
    - \kappa_1 u
    + v
    + \kappa_1
    & 
    - u + 1
    \\[0.9mm]
    - u v u v
    + v u v
    + \kappa_1 u v
    - \kappa_2 v
    + \kappa_3 
    &
    - v u + v
    \end{pmatrix}
    \end{aligned}
\end{gather}
 for the {\rm\ref{eq:P5case32}} system.

{\rm
\textbullet \,\,
$\text{\ref{eq:P5case32}} \to \text{\ref{eq:P4casem2m1}}$.} 
Using the map \eqref{eq:P5toP4map}, \eqref{eq:P5toP4map_k}, \eqref{eq:P5toP4map_la}, we reduce Lax pair \eqref{eq:Laxpair_P5case32} to \eqref{eq:matABform_P4}, where
\begin{equation} \label{eq:Laxpair_P4casem2m1}
    \begin{gathered}
        \begin{aligned}
        A_1
        &= 
        \begin{pmatrix}
        - 2 & 0 \\[0.9mm] 0 & 0
        \end{pmatrix},
        &&&
        A_0
        &= 
        \begin{pmatrix}
        - 2 z & 1 \\[0.9mm] 
        u v + \kappa_3 & 0
        \end{pmatrix},
        &&&
        A_{-1}    
        &= \tfrac12
        \begin{pmatrix}
        u v + \kappa_2 & - u \\[0.9mm]
        v u v + \kappa_2 v & - v u
        \end{pmatrix},
        \end{aligned}
        \\[2mm]
        \begin{aligned}
        B_1
        &= 
        \begin{pmatrix}
        - 2 & 0 \\[0.9mm] 0 & 0
        \end{pmatrix},
        &&&
        B_0
        &= 
        \begin{pmatrix}
        - u + v - 2 z & 1 \\[0.9mm]
        u v + \kappa_3 & v
        \end{pmatrix}.
        \end{aligned}
    \end{gathered}
\end{equation}

{\rm
\textbullet \,\,
$\text{\ref{eq:P5case32}} \to \text{\ref{eq:P3D6case6}}$.}
The substitution of the formulas \eqref{eq:P5toP3D6'map}, \eqref{eq:P5toP3D6'map_k}, \eqref{eq:P5toP3D6'map_la} into \eqref{eq:Laxpair_P5case32} gives a pair \eqref{eq:matABform_P3D6'_} with
\begin{gather}
    \notag
    \begin{aligned}
    A_0
    &= 
    \begin{pmatrix}
    \kappa_4 z & 0 \\[0.9mm]
    0 & 0
    \end{pmatrix},
    &
    A_{-1}
    &= -
    \begin{pmatrix}
    \kappa_1
    & 
    u
    \\[0.9mm]
    u v^2
    + \kappa_1 v
    + \kappa_2 u v 
    + \kappa_3
    & 
    v u - u v
    \end{pmatrix},
    &
    A_{-2}
    &= 
    \begin{pmatrix}
    v + \kappa_2
    & 
    - 1
    \\[0.9mm]
    v^2 + \kappa_2 v
    &
    - v
    \end{pmatrix},
    \end{aligned}
    \\[-1mm]
    \label{eq:Laxpair_P3D6case6}
    \\[-1mm]
    \notag
    \begin{aligned}
    B_1
    &= 
    \begin{pmatrix}
    \kappa_4 & 0 
    \\[0.9mm]
    0 & 0
    \end{pmatrix}
    ,
    &&&
    B_0
    &= z^{-1}
    \begin{pmatrix}
    u v 
    + \kappa_2 u 
    &
    - u
    \\[0.9mm]
    - \brackets{
    u v^2 
    + \kappa_1 v
    + \kappa_2 u v 
    + \kappa_3
    }
    & 
    - v u
    \end{pmatrix}
    \end{aligned}
\end{gather}
for the {\rm\ref{eq:P3D6case6}} system.

{\rm
\textbullet \,\,
$\text{\ref{eq:P4casem2m1}}, \,\, \text{\ref{eq:P3D6case6}} \to \text{\ref{eq:P2betam1}}$.}
The degeneration data \eqref{eq:P4toP2map}, \eqref{eq:P4toP2map_la} with $\const = 0$ and \eqref{eq:P3D6'toP2map}, \eqref{eq:P3D6'toP2map_la} with $\const = 0$ for the {\rm\ref{eq:P4casem2m1}} and {\rm\ref{eq:P3D6case6}} systems, respectively, lead to the {\rm\ref{eq:P2betam1}} system with a pair \eqref{eq:matABform_P2}, where
\begin{equation} \label{eq:Laxpair_P20}
    \begin{gathered}
        \begin{aligned}
        A_2
        &= 
        \begin{pmatrix}
        2 & 0 \\[0.9mm] 0 & 0
        \end{pmatrix},
        &&&
        A_1
        &= 
        \begin{pmatrix}
        0 & -2 \\[0.9mm]
        - v & 0
        \end{pmatrix},
        &&&
        A_0
        &= 
        \begin{pmatrix}
        - v + z & - 2 u
        \\[0.9mm]
        u v + \kappa_3 & v
        \end{pmatrix},
        \end{aligned}
        \\[2mm]
        \begin{aligned}
        B_1
        &= 
        \begin{pmatrix}
        1 & 0 \\[0.9mm] 0 & 0
        \end{pmatrix},
        &&&
        B_0
        &= 
        \begin{pmatrix}
        - u & -1 \\[0.9mm]
        - \tfrac12 v & 0
        \end{pmatrix}.
        \end{aligned}
    \end{gathered}
\end{equation}

{\rm
\textbullet \,\,
$\text{\ref{eq:P2betam1}} \to \text{\ref{eq:P1ham}}$.}
The limiting transition defined by \eqref{eq:P2toP1map}, \eqref{eq:P2toP1map_la} with $\const = 0$ reduces Lax pair \eqref{eq:Laxpair_P20} to a pair \eqref{eq:matABform_P1} with
\begin{equation} \label{eq:Laxpair_P10}
    \begin{gathered}
        \begin{aligned}
        A_2
        &= 
        \begin{pmatrix}
        0 & 0 \\[0.9mm] 2 & 0
        \end{pmatrix},
        &&&
        A_1
        &= 
        \begin{pmatrix}
        0 & -2 \\[0.9mm]
        - 2 u & 0
        \end{pmatrix},
        &&&
        A_0
        &= 
        \begin{pmatrix}
        - v & - 2 u
        \\[0.9mm]
        2 u^2 + z & v
        \end{pmatrix},
        \end{aligned}
        \\[2mm]
        \begin{aligned}
        B_1
        &= 
        \begin{pmatrix}
        0 & 0 \\[0.9mm] 1 & 0
        \end{pmatrix},
        &&&
        B_0
        &= 
        \begin{pmatrix}
        0 & -1 \\[0.9mm]
        - 2 u & 0
        \end{pmatrix}
        \end{aligned}
    \end{gathered}
\end{equation}
for the {\rm\ref{eq:P1ham}} system.
\end{example}

\begin{example}
To illustrate Remark \ref{rem:degcorr}, we present the degenerations of Lax pairs for the branch starting from system {\rm\ref{eq:P6case34}} {\rm(}see Example \ref{exm:taudeg}{\rm)}.
Degenerations are shown in Figure \ref{pic:deg_10P6}{\rm:}
\begin{figure}[H]
    \centering
    \scalebox{1.}{\tikzset{every picture/.style={line width=0.75pt}} %set default line width to 0.75pt        

\begin{tikzpicture}[x=0.75pt,y=0.75pt,yscale=-1,xscale=1]
%uncomment if require: \path (0,82); %set diagram left start at 0, and has height of 82

%Straight Lines [id:da46646531768070965] 
\draw    (32.65,36.5) -- (59.88,36.5) ;
\draw [shift={(61.88,36.5)}, rotate = 180] [color={rgb, 255:red, 0; green, 0; blue, 0 }  ][line width=0.75]    (10.93,-3.29) .. controls (6.95,-1.4) and (3.31,-0.3) .. (0,0) .. controls (3.31,0.3) and (6.95,1.4) .. (10.93,3.29)   ;
%Straight Lines [id:da8027149644960445] 
\draw    (99.3,24.46) -- (128.8,13.68) ;
\draw [shift={(130.67,12.99)}, rotate = 159.93] [color={rgb, 255:red, 0; green, 0; blue, 0 }  ][line width=0.75]    (10.93,-3.29) .. controls (6.95,-1.4) and (3.31,-0.3) .. (0,0) .. controls (3.31,0.3) and (6.95,1.4) .. (10.93,3.29)   ;
%Straight Lines [id:da7707667943402254] 
\draw    (164.3,64.46) -- (193.8,51.68) ;
\draw [shift={(195.67,50.5)}, rotate = 152.93] [color={rgb, 255:red, 0; green, 0; blue, 0 }  ][line width=0.75]    (10.93,-3.29) .. controls (6.95,-1.4) and (3.31,-0.3) .. (0,0) .. controls (3.31,0.3) and (6.95,1.4) .. (10.93,3.29)   ;
%Straight Lines [id:da4813954173932613] 
\draw    (99.3,49.46) -- (128.72,63.72) ;
\draw [shift={(130.52,64.59)}, rotate = 205.86] [color={rgb, 255:red, 0; green, 0; blue, 0 }  ][line width=0.75]    (10.93,-3.29) .. controls (6.95,-1.4) and (3.31,-0.3) .. (0,0) .. controls (3.31,0.3) and (6.95,1.4) .. (10.93,3.29)   ;
%Straight Lines [id:da7934170316175215] 
\draw    (165.3,11.46) -- (194.72,25.72) ;
\draw [shift={(196.52,26.59)}, rotate = 205.86] [color={rgb, 255:red, 0; green, 0; blue, 0 }  ][line width=0.75]    (10.93,-3.29) .. controls (6.95,-1.4) and (3.31,-0.3) .. (0,0) .. controls (3.31,0.3) and (6.95,1.4) .. (10.93,3.29)   ;
%Straight Lines [id:da10093602464196261] 
\draw    (229.65,36.5) -- (256.88,36.5) ;
\draw [shift={(258.88,36.5)}, rotate = 180] [color={rgb, 255:red, 0; green, 0; blue, 0 }  ][line width=0.75]    (10.93,-3.29) .. controls (6.95,-1.4) and (3.31,-0.3) .. (0,0) .. controls (3.31,0.3) and (6.95,1.4) .. (10.93,3.29)   ;

% Text Node
\draw (4.02,29.79) node [anchor=north west][inner sep=0.75pt]    
{\ref{eq:P6case34}};
% Text Node
\draw (68.02,26.8) node [anchor=north west][inner sep=0.75pt]    
{\ref{eq:hamP5case1}};
% Text Node
\draw (137.8,1.04) node [anchor=north west][inner sep=0.75pt]   
{\ref{eq:hamP3D6case1}};
% Text Node
\draw (137.8,55.4) node [anchor=north west][inner sep=0.75pt]    
{\ref{eq:P4casem1m1}};
% Text Node
\draw (202.77,26.8) node [anchor=north west][inner sep=0.75pt]    
{\ref{eq:P2betam1}};
% Text Node
\draw (267.02,26.8) node [anchor=north west][inner sep=0.75pt]    
{\ref{eq:P1ham}};

\end{tikzpicture}}
    \caption{Degenerations of \ref{eq:P6case34}}
    \label{pic:deg_10P6}
\end{figure}

\medskip

{\rm
\textbullet \,\,
$\text{\ref{eq:P6case34}} \to \text{\ref{eq:hamP5case1}}$.}
Substituting \eqref{eq:P6toP5map}  into Lax pair \eqref{eq:Laxpair_P6case34}, we obtain a pair of the form \eqref{eq:matABform_P5} for the {\rm\ref{eq:hamP5case1}} system. The matrices read as
\begin{gather} 
    \notag
    \begin{aligned}
    A_{0}
    &= 
    \begin{pmatrix}
    \kappa_4 z & 0 \\[0.9mm] 0 & 0
    \end{pmatrix},
    &
    A_1
    &= 
    \begin{pmatrix}
    - v u + \kappa_1 & 1
    \\[0.9mm]
    - v u v u
    + \kappa_1 v u
    + \kappa_3
    & v u
    \end{pmatrix},
    &
    A_2
    &= 
    \begin{pmatrix}
    v u - \kappa_2
    & - u
    \\[0.9mm]
    v^2 u
    - \kappa_2 v
    & 
    - v u
    \end{pmatrix},
    \end{aligned}
    \\[-1mm]
    \label{eq:Laxpair_hamP5case1}
    \\[-1mm]
    \notag
    \begin{aligned}
    \hspace{-8mm}
    B_1
    &= 
    \begin{pmatrix}
    \kappa_4 & 0 \\[0.9mm] 0 & 0
    \end{pmatrix},
    &
    B_{0}
    &= z^{-1}
    \begin{pmatrix}
    u v
    - v 
    + \kappa_1 - \kappa_2
    & 
    - u + 1
    \\[0.9mm]
    - v u v u
    + v^2 u
    + \kappa_1 v u
    - \kappa_2 v
    + \kappa_3 
    &
    - u v u
    - v u^2
    + u v + 2 v u
    + \kappa_1 u - v
    - \kappa_2
    \end{pmatrix}.
    \end{aligned}
\end{gather}

{\rm
\textbullet \,\,
$\text{\ref{eq:hamP5case1}} \to \text{\ref{eq:P4casem1m1}}$.} 
Using the map \eqref{eq:P5toP4map}, \eqref{eq:P5toP4map_k}, \eqref{eq:P5toP4map_la}, we reduce Lax pair \eqref{eq:Laxpair_hamP5case1} to \eqref{eq:matABform_P4}, with
\begin{equation} \label{eq:Laxpair_P4casem1m1}
    \begin{gathered}
        \begin{aligned}
        A_1
        &= 
        \begin{pmatrix}
        - 2 & 0 \\[0.9mm] 0 & 0
        \end{pmatrix},
        &&&
        A_0
        &= 
        \begin{pmatrix}
        - 2 z & 1 \\[0.9mm] 
        v u + \kappa_3 & 0
        \end{pmatrix},
        &&&
        A_{-1}    
        &= \tfrac12
        \begin{pmatrix}
        v u + \kappa_2 & - u \\[0.9mm]
        v^2 u + \kappa_2 v & - v u
        \end{pmatrix},
        \end{aligned}
        \\[2mm]
        \begin{aligned}
        B_1
        &= 
        \begin{pmatrix}
        - 2 & 0 \\[0.9mm] 0 & 0
        \end{pmatrix},
        &&&
        B_0
        &= 
        \begin{pmatrix}
        - v - 2 z & 1 \\[0.9mm]
        v u + \kappa_3 & u - v
        \end{pmatrix}.
        \end{aligned}
    \end{gathered}
\end{equation}

{\rm
\textbullet \,\,
$\text{\ref{eq:hamP5case1}} \to \text{\ref{eq:hamP3D6case1}}$.}
The limiting transition defined by the formulas \eqref{eq:P5toP3D6'map}, \eqref{eq:P5toP3D6'map_k}, \eqref{eq:P5toP3D6'map_la} transforms the pair \eqref{eq:Laxpair_hamP5case1} to \eqref{eq:matABform_P3D6'_} with
\begin{gather}
    \notag
    \begin{aligned}
    A_0
    &= 
    \begin{pmatrix}
    \kappa_4 z & 0 \\[0.9mm]
    0 & 0
    \end{pmatrix},
    &
    A_{-1}
    &= -
    \begin{pmatrix}
    \kappa_1
    & 
    u
    \\[0.9mm]
    v u v
    + \kappa_1 v
    + \kappa_2 v u
    + \kappa_3
    & 
    0
    \end{pmatrix},
    &
    A_{-2}
    &= 
    \begin{pmatrix}
    v + \kappa_2
    & 
    - 1
    \\[0.9mm]
    v^2 + \kappa_2 v
    &
    - v
    \end{pmatrix},
    \end{aligned}
    \\[-1mm]
    \label{eq:Laxpair_hamP3D6case1}
    \\[-1mm]
    \notag
    \begin{aligned}
    B_1
    &= 
    \begin{pmatrix}
    \kappa_4 & 0 
    \\[0.9mm]
    0 & 0
    \end{pmatrix}
    ,
    &&&
    B_0
    &= z^{-1}
    \begin{pmatrix}
    u v 
    - \kappa_1
    &
    - u
    \\[0.9mm]
    - \brackets{
    v u v 
    + \kappa_1 v
    + \kappa_2 v u
    + \kappa_3
    }
    & 
    - v u 
    - \kappa_2 u - \kappa_1
    \end{pmatrix}.
    \end{aligned}
\end{gather}

{\rm
\textbullet \,\,
$\text{\ref{eq:P4casem1m1}}, \,\, \text{\ref{eq:hamP3D6case1}} \to \text{\ref{eq:P2betam1}}$.}
The substitution of \eqref{eq:P4toP2map}, \eqref{eq:P4toP2map_la} with $\const = \frac14$ into \eqref{eq:Laxpair_P4casem1m1} and \eqref{eq:P3D6'toP2map},  \eqref{eq:P3D6'toP2map_la} with $\const = - \frac14$ into \eqref{eq:Laxpair_hamP3D6case1} leads to the pair of the form \eqref{eq:matABform_P2} for the {\rm\ref{eq:P2betam1}} system, where
\begin{equation} \label{eq:Laxpair_P20_}
    \begin{gathered}
        \begin{aligned}
        A_2
        &= 
        \begin{pmatrix}
        2 & 0 \\[0.9mm] 0 & 0
        \end{pmatrix},
        &&&
        A_1
        &= 
        \begin{pmatrix}
        0 & -2 \\[0.9mm]
        - v & 0
        \end{pmatrix},
        &&&
        A_0
        &= 
        \begin{pmatrix}
        - v + z & - 2 u
        \\[0.9mm]
        v u + \kappa_3 & v
        \end{pmatrix},
        \end{aligned}
        \\[2mm]
        \begin{aligned}
        B_1
        &= 
        \begin{pmatrix}
        1 & 0 \\[0.9mm] 0 & 0
        \end{pmatrix},
        &&&
        B_0
        &= 
        \begin{pmatrix}
        0 & -1 \\[0.9mm]
        - \tfrac12 v & u
        \end{pmatrix}.
        \end{aligned}
    \end{gathered}
\end{equation}

{\rm
\textbullet \,\,
$\text{\ref{eq:P2betam1}} \to \text{\ref{eq:P1ham}}$.}
The mapping \eqref{eq:P2toP1map}, \eqref{eq:P2toP1map_la} with $\const = 0$ reduces Lax pair \eqref{eq:Laxpair_P20_} to pair \eqref{eq:Laxpair_P10} for the {\rm\ref{eq:P1ham}} system.
\end{example}

\section{Conclusion}
We have constructed a special class of \PPainleve \, type systems and presented isomonodromic Lax pairs for them. Our systems from Appendix \ref{sec:sysintlist}, taking together with Hamiltonian systems from Appendix \ref{sec:syshamlist} found by H. Kawakami \cite{Kawakami_2015}, form a collection of systems that are closed with respect to degenerations.

  The following unsolved problems appear in connection with the obtained results:
\begin{itemize}
    \item Extend this class of systems to a complete list of non-abelian analogs of \Painleve equations, including all known examples; 
    \item Find the corresponding \Painleve - Calogero systems \cite{Bertola2018}; 
    \item Interpret obtainеd systems as symmetry reductions of $1+1$ or $2+1$ dimensional integrable systems;
    \item Prove that their solutions are meromorphic; 
    \item Find B\"acklund transformations for the resulting systems.
\end{itemize}

\subsubsection*{Acknowledgements}

The authors are grateful to V.~E.~Adler, M.~A.~Bershtein and B.~I.~Suleimanov for useful discussions.  The research of the second author was carried out under the State Assignment 0029-2021-0004 (Quantum field theory) of the Ministry of Science and Higher Education of the Russian Federation. The first author was partially supported by the International Laboratory of Cluster Geometry HSE, RF Government grant № 075-15-2021-608, and by Young Russian Mathematics award.

\appendix

\section*{Appendices}
\section{Lists of non-abelian systems of  \Painleve type that have Okamoto integral}
\label{sec:sysintlist}
  
\subsection{Systems of \texorpdfstring{$\PVI$}{P6} type}\label{sec:sysintlistP6}

\subsubsection{Case 1}\label{sec:P6case1}

\begin{gather}
    \label{eq:P6case11}
    \tag*{$\PPVIn{1}$}
    \left\{
    \begin{array}{lcr}
         z (z - 1) u'
         &=& 2 u v u^2
         - u v u - v u^2
         - \kappa_1 u^2 
         + \kappa_2 u
         \hspace{4.4cm}
         \\[1mm]
         && 
         + \, z \brackets{
         - u v u - v u^2
         + 2 v u
         + \kappa_4 u 
         + (\kappa_1 - \kappa_2 - \kappa_4)
         },
         \\[2mm]
         z (z - 1) v'
         &=& - v u^2 v - 2 v u v u
         + v u v + v^2 u
         + \tfrac12 (\kappa_1 + \kappa_3) u v 
         + \tfrac12 (3 \kappa_1 - \kappa_3) v u
         \\[1mm]
         && 
         - \, \kappa_2 v
         + \tfrac14 (\kappa_3^2 - \kappa_1^2)
         + z \brackets{
         v u v + v^2 u
         - v^2
         - \kappa_4 v
         },
    \end{array}
    \right.
    \\[3mm]
    \notag
    \begin{aligned}
    J
    = v u^2 v u
    - v u v u
    - \tfrac12 (\kappa_3 + \kappa_1) u v u
    + \tfrac12 (\kappa_3 - \kappa_1) v u^2
    + \kappa_2 v u
    - \tfrac14 (\kappa_3^2 - \kappa_1^2) u
    \\[1mm]
    + \, z \brackets{
    - v u v u
    + v^2 u
    + \kappa_4 v u
    + (\kappa_1 - \kappa_2 - \kappa_4) v
    }
    .
    \end{aligned}
\end{gather}

\begin{gather}
    \label{eq:P6case13}
    \tag*{$\PPVIn{2}$}
    \left\{
    \begin{array}{lcr}
         z (z - 1) u'
         &=& 2 u v u^2
         - u v u - v u^2
         - \kappa_1 u^2 
         + \kappa_2 u
         \hspace{4.4cm}
         \\[1mm]
         && 
         + \, z \brackets{
         - 2 u v u
         + u v + v u
         + \kappa_4 u 
         + (\kappa_1 - \kappa_2 - \kappa_4)
         },
         \\[2mm]
         z (z - 1) v'
         &=& - v u^2 v - 2 v u v u
         + v u v + v^2 u
         + \tfrac12 (\kappa_1 + \kappa_3) u v
         + \tfrac12 (3 \kappa_1 - \kappa_3) v u
         \\[1mm]
         && 
         - \, \kappa_2 v
         + \tfrac14 (\kappa_3^2 - \kappa_1^2)
         + z \brackets{
         2 v u v
         - v^2
         - \kappa_4 v
         },
    \end{array}
    \right.
    \\[3mm]
    \notag
    \begin{aligned}
    J
    = v u^2 v u
    - v u v u
    - \tfrac12 (\kappa_3 + \kappa_1) u v u
    + \tfrac12 (\kappa_3 - \kappa_1) v u^2
    + \kappa_2 v u
    - \tfrac14 (\kappa_3^2 - \kappa_1^2) u
    + z \left(
    - v u^2 v 
    \right.
    \\[1mm]
    \left.
    + \, v u v
    + \tfrac12 (\kappa_1 + \kappa_3) u v
    + \tfrac12 (2 \kappa_4 - \kappa_1 - \kappa_3) v u
    + (\kappa_1 - \kappa_2 - \kappa_4) v
    \right)
    .
    \end{aligned}
\end{gather}

\begin{gather}
    \label{eq:P6case14}
    \tag*{$\PPVIn{3}$}
    \left\{
    \begin{array}{lcr}
         z (z - 1) u'
         &=& 2 u v u^2
         - 2 u v u
         - \kappa_1 u^2 
         + \kappa_2 u
         \hspace{4.4cm}
         \\[1mm]
         && 
         + \, z \brackets{
         - u v u - v u^2
         + u v + v u
         + \kappa_4 u 
         + (\kappa_1 - \kappa_2 - \kappa_4)
         },
         \\[2mm]
         z (z - 1) v'
         &=& - v u^2 v - 2 v u v u
         + 2 v u v
         + \tfrac12 (\kappa_1 + \kappa_3) u v
         + \tfrac12 (3 \kappa_1 - \kappa_3) v u
         \\[1mm]
         && 
         - \, \kappa_2 v
         + \tfrac14 (\kappa_3^2 - \kappa_1^2)
         + z \brackets{
         v u v + v^2 u
         - v^2
         - \kappa_4 v
         },
    \end{array}
    \right.
    \\[3mm]
    \notag
    \begin{aligned}
    J
    = v u^2 v u
    - v u^2 v
    - \tfrac12 (\kappa_3 + \kappa_1) u v u
    + \tfrac12 (\kappa_3 - \kappa_1) v u^2
    + \tfrac12 (\kappa_1 + \kappa_3) u v
    + \tfrac12 (2 \kappa_2 - \kappa_1 - \kappa_3) v u
    \\[1mm]
    - \, \tfrac14 (\kappa_3^2 - \kappa_1^2) u
    + z \brackets{
    - v u v u
    + v u v
    + \kappa_4 v u
    + (\kappa_1 - \kappa_2 - \kappa_4) v
    }
    .
    \end{aligned}
\end{gather}

\subsubsection{Case 2}
\label{sec:P6case2}

\begin{gather}
    \label{eq:P6case21}
    \tag*{$\PPVIn{4}$}
    \left\{
    \begin{array}{lcr}
         z (z - 1) u'
         &=& u v u^2 + v u^3
         - 2 v u^2
         - \kappa_1 u^2 
         + \kappa_2 u 
         \hspace{4.9cm}
         \\[1mm]
         && 
         + \, z \brackets{
         - u v u - v u^2
         + 2 v u
         + \kappa_4 u 
         + (\kappa_1 - \kappa_2 - \kappa_4)
         },
         \\[2mm]
         z (z - 1) v'
         &=& - 2 v u v u - v^2 u^2
         + 2 v^2 u
         + 2 \kappa_1 v u
         - \kappa_2 v
         + \kappa_3
         \hspace{3.3cm}
         \\[1mm]
         && 
         + \, z \brackets{
         v u v + v^2 u
         - v^2
         - \kappa_4 v
         },
    \end{array}
    \right.
    \\[3mm]
    \notag
    \begin{aligned}
    J
    = v u v u^2
    - v^2 u^2
    - \kappa_1 v u^2
    + \kappa_2 v u
    - \kappa_3 u
    + z \brackets{
    - v u v u
    + v^2 u
    + \kappa_4 v u
    + (\kappa_1 - \kappa_2 - \kappa_4) v
    }
    .
    \end{aligned}
\end{gather}

\begin{gather}
    \label{eq:P6case22}
    \tag*{$\PPVIn{5}$}
    \left\{
    \begin{array}{lcr}
         z (z - 1) u'
         &=& u v u^2 + v u^3
         - u v u - v u^2
         - \kappa_1 u^2 
         + \kappa_2 u
         \hspace{4.cm}
         \\[1mm]
         && 
         + \, z \brackets{
         - 2 v u^2
         + 2 v u
         + \kappa_4 u 
         + (\kappa_1 - \kappa_2 - \kappa_4)
         },
         \\[2mm]
         z (z - 1) v'
         &=& - 2 v u v u - v^2 u^2
         + v u v + v^2 u
         + 2 \kappa_1 v u
         - \kappa_2 v
         + \kappa_3
         \hspace{2.4cm}
         \\[1mm]
         && 
         + \, z \brackets{
         2 v^2 u 
         - v^2
         - \kappa_4 v
         },
    \end{array}
    \right.
    \\[3mm]
    \notag
    \begin{aligned}
    J
    = v u v u^2
    - v u v u
    - \kappa_1 v u^2
    + \kappa_2 v u
    - \kappa_3 u
    + z \brackets{
    - v^2 u^2
    + v^2 u
    + \kappa_4 v u
    + (\kappa_1 - \kappa_2 - \kappa_4) v
    }
    .
    \end{aligned}
\end{gather}

\begin{gather}
    \label{eq:P6case23}
    \tag*{$\PPVIn{6}$}
    \left\{
    \begin{array}{lcr}
         z (z - 1) u'
         &=& u v u^2 + v u^3
         - u v u - v u^2
         - \kappa_1 u^2 
         + \kappa_2 u
         \hspace{4.cm}
         \\[1mm]
         && 
         + \, z \brackets{
         - u v u - v u^2
         + u v + v u
         + \kappa_4 u 
         + (\kappa_1 - \kappa_2 - \kappa_4)
         },
         \\[2mm]
         z (z - 1) v'
         &=& - 2 v u v u - v^2 u^2
         + v u v + v^2 u
         + 2 \kappa_1 v u 
         - \kappa_2 v
         + \kappa_3
         \hspace{2.4cm}
         \\[1mm]
         && 
         + \, z \brackets{
         v u v + v^2 u
         - v^2
         - \kappa_4 v
         },
    \end{array}
    \right.
    \\[3mm]
    \notag
    \begin{aligned}
    J
    = v u v u^2
    - v u v u
    - \kappa_1 v u^2
    + \kappa_2 v u
    - \kappa_3 u
    + z \brackets{
    - v u v u
    + v u v
    + \kappa_4 v u
    + (\kappa_1 - \kappa_2 - \kappa_4) v
    }
    .
    \end{aligned}
\end{gather}

\subsubsection{Case 3}
\label{sec:P6case3}

\begin{gather}
    \label{eq:P6case31}
    \tag*{$\PPVIn{7}$}
    \left\{
    \begin{array}{lcr}
         z (z - 1) u'
         &=& u^2 v u + u v u^2
         - u^2 v - u v u
         - \kappa_1 u^2 
         + \kappa_2 u 
         \hspace{3.4cm}
         \\[1mm]
         && 
         + \, z \brackets{
         - u v u - v u^2
         + u v + v u
         + \kappa_4 u 
         + (\kappa_1 - \kappa_2 - \kappa_4)
         },
         \\[2mm]
         z (z - 1) v'
         &=& - u v u v - v u^2 v - v u v u
         + u v^2 + v u v
         + \kappa_1 u v + \kappa_1 v u
         - \kappa_2 v
         + \kappa_3
         \\[1mm]
         && 
         + \, z \brackets{
         v u v + v^2 u
         - v^2
         - \kappa_4 v
         },
    \end{array}
    \right.
    \\[3mm]
    \notag
    \begin{aligned}
    J
    = u v u v u
    - u v u v
    - \kappa_1 u v u
    + \kappa_2 u v
    - \kappa_3 u
    + z \brackets{
    - v u v u
    + v u v
    + \kappa_4 v u
    + (\kappa_1 - \kappa_2 - \kappa_4) v
    }
    .
    \end{aligned}
\end{gather}

\begin{gather}
    \label{eq:P6case32}
    \tag*{$\PPVIn{8}$}
    \left\{
    \begin{array}{lcr}
         z (z - 1) u'
         &=& u^2 v u + u v u^2
         - 2 u v u
         - \kappa_1 u^2 
         + \kappa_2 u 
         \hspace{3.4cm}
         \\[1mm]
         && 
         + \, z \brackets{
         - u v u - v u^2
         + 2 v u
         + \kappa_4 u 
         + (\kappa_1 - \kappa_2 - \kappa_4)
         },
         \\[2mm]
         z (z - 1) v'
         &=& - u v u v - v u^2 v - v u v u
         + 2 v u v
         + \kappa_1 u v + \kappa_1 v u
         - \kappa_2 v
         + \kappa_3
         \\[1mm]
         && 
         + \, z \brackets{
         v u v + v^2 u
         - v^2
         - \kappa_4 v
         },
    \end{array}
    \right.
    \\[3mm]
    \notag
    \begin{aligned}
    J
    = u v u v u
    - u v^2 u
    - \kappa_1 u v u
    + (\kappa_2 - \kappa_1 + \kappa_4) u v
    + (\kappa_1 - \kappa_4) v u
    - \kappa_3 u
    \\[1mm]
    + \, z \brackets{
    - v u v u
    + v^2 u
    + \kappa_4 v u
    + (\kappa_1 - \kappa_2 - \kappa_4) v
    }
    .
    \end{aligned}
\end{gather}

\begin{gather}
    \label{eq:P6case33}
    \tag*{$\PPVIn{9}$}
    \left\{
    \begin{array}{lcr}
         z (z - 1) u'
         &=& u^2 v u + u v u^2
         - u^2 v - u v u
         - \kappa_1 u^2 
         + \kappa_2 u 
         \hspace{3.4cm}
         \\[1mm]
         && 
         + \, z \brackets{
         - 2 u v u
         + 2 u v
         + \kappa_4 u 
         + (\kappa_1 - \kappa_2 - \kappa_4)
         },
         \\[2mm]
         z (z - 1) v'
         &=& - u v u v - v u^2 v - v u v u
         + u v^2 + v u v
         + \kappa_1 u v + \kappa_1 v u
         - \kappa_2 v
         + \kappa_3
         \\[1mm]
         && 
         + \, z \brackets{
         2 v u v
         - v^2
         - \kappa_4 v
         },
    \end{array}
    \right.
    \\[3mm]
    \notag
    \begin{aligned}
    J
    = u v u v u
    - u v u v
    - \kappa_1 u v u
    + \kappa_2 u v
    - \kappa_3 u
    + z \left(
    - u v^2 u
    + u v^2
    + (\kappa_1 - \kappa_2) u v
    \right.
    \\[1mm]
    \left.
    + \, (\kappa_4 - \kappa_1 + \kappa_2) v u
    + (\kappa_1 - \kappa_2 - \kappa_4) v
    \right)
    .
    \end{aligned}
\end{gather}

\begin{gather}
    \label{eq:P6case34}
    \tag*{$\PPVIn{10}$}
    \left\{
    \begin{array}{lcr}
         z (z - 1) u'
         &=& u^2 v u + u v u^2
         - u v u - v u^2
         - \kappa_1 u^2 
         + \kappa_2 u 
         \hspace{3.4cm}
         \\[1mm]
         && 
         + \, z \brackets{
         - u^2 v - u v u
         + u v + v u
         + \kappa_4 u 
         + (\kappa_1 - \kappa_2 - \kappa_4)
         },
         \\[2mm]
         z (z - 1) v'
         &=& - u v u v - v u^2 v - v u v u
         + v u v + v^2 u
         + \kappa_1 u v + \kappa_1 v u
         - \kappa_2 v
         + \kappa_3
         \\[1mm]
         && 
         + \, z \brackets{
         u v^2 + v u v
         - v^2
         - \kappa_4 v
         },
    \end{array}
    \right.
    \\[3mm]
    \notag
    \begin{aligned}
    J
    = u v u v u
    - v u v u
    - \kappa_1 u v u
    + \kappa_2 v u 
    - \kappa_3 u
    + \, z \brackets{
    - u v u v
    + v u v
    + \kappa_4 u v
    + (\kappa_1 - \kappa_2 - \kappa_4) v
    }
    .
    \end{aligned}
\end{gather}

\begin{gather}
    \label{eq:P6case35}
    \tag*{$\PPVIn{11}$}
    \left\{
    \begin{array}{lcr}
         z (z - 1) u'
         &=& u^2 v u + u v u^2
         - 2 u v u
         - \kappa_1 u^2 
         + \kappa_2 u 
         \hspace{3.4cm}
         \\[1mm]
         && 
         + \, z \brackets{
         - u^2 v - u v u
         + 2 u v
         + \kappa_4 u 
         + (\kappa_1 - \kappa_2 - \kappa_4)
         },
         \\[2mm]
         z (z - 1) v'
         &=& - u v u v - v u^2 v - v u v u
         + 2 v u v
         + \kappa_1 u v + \kappa_1 v u
         - \kappa_2 v
         + \kappa_3
         \\[1mm]
         && 
         + \, z \brackets{
         u v^2 + v u v
         - v^2
         - \kappa_4 v
         },
    \end{array}
    \right.
    \\[3mm]
    \notag
    \begin{aligned}
    J
    = u v u v u
    - u v^2 u
    - \kappa_1 u v u
    + (\kappa_1 - \kappa_4) u v
    + (\kappa_2 - \kappa_1 + \kappa_4) v u
    - \kappa_3 u
    \\[1mm]
    + \, z \brackets{
    - u v u v
    + u v^2
    + \kappa_4 u v
    + (\kappa_1 - \kappa_2 - \kappa_4) v
    }
    .
    \end{aligned}
\end{gather}

\begin{gather}
    \label{eq:P6case36}
    \tag*{$\PPVIn{12}$}
    \left\{
    \begin{array}{lcr}
         z (z - 1) u'
         &=& u^2 v u + u v u^2
         - u v u - v u^2
         - \kappa_1 u^2 
         + \kappa_2 u 
         \hspace{3.4cm}
         \\[1mm]
         && 
         + \, z \brackets{
         - 2 u v u
         + 2 v u
         + \kappa_4 u 
         + (\kappa_1 - \kappa_2 - \kappa_4)
         },
         \\[2mm]
         z (z - 1) v'
         &=& - u v u v - v u^2 v - v u v u
         + v u v + v^2 u
         + \kappa_1 u v + \kappa_1 v u
         - \kappa_2 v
         + \kappa_3
         \\[1mm]
         && 
         + \, z \brackets{
         2 v u v
         - v^2
         - \kappa_4 v
         },
    \end{array}
    \right.
    \\[3mm]
    \notag
    \begin{aligned}
    J
    = u v u v u
    - v u v u
    - \kappa_1 u v u
    + \kappa_2 v u
    - \kappa_3 u
    + z \left(
    - u v^2 u
    + v^2 u
    + (\kappa_4 - \kappa_1 + \kappa_2) u v
    \right.
    \\[1mm]
    \left.
    + \, (\kappa_1 - \kappa_2) v u
    + (\kappa_1 - \kappa_2 - \kappa_4) v
    \right)
    .
    \end{aligned}
\end{gather}

\subsubsection{Case 4}
\label{sec:P6case4}

\begin{gather}
    \label{eq:P6case41}
    \tag*{$\PPVIn{13}$}
    \left\{
    \begin{array}{lcr}
         z (z - 1) u'
         &=& 2 u^2 v u
         - u^2 v - u v u
         - \kappa_1 u^2 
         + \kappa_2 u 
         \hspace{4.4cm}
         \\[1mm]
         && 
         + \, z \brackets{
         - u^2 v - u v u
         + 2 u v
         + \kappa_4 u 
         + (\kappa_1 - \kappa_2 - \kappa_4)
         },
         \\[2mm]
         z (z - 1) v'
         &=& - 2 u v u v - v u^2 v
         + u v^2 + v u v
         + \tfrac12 \brackets{3 \kappa_1 -  \kappa_3} u v
         + \tfrac12 \brackets{\kappa_1 + \kappa_3} v u
         \\[1mm]
         && 
         - \, \kappa_2 v
         + \tfrac14 (\kappa_3^2 - \kappa_1^2)
         + z \brackets{
         u v^2 + v u v
         - v^2
         - \kappa_4 v
         },
    \end{array}
    \right.
    \\[3mm]
    \notag
    \begin{aligned}
    J
    = u v u^2 v
    - u v u v
    + \tfrac12 (\kappa_3 - \kappa_1) u^2 v
    - \tfrac12 (\kappa_3 + \kappa_1) u v u
    + \kappa_2 u v
    - \tfrac14 (\kappa_3^2 - \kappa_1^2) u
    \\[1mm]
    + \, z \brackets{
    - u v u v
    + u v^2
    + \kappa_4 u v
    + (\kappa_1 - \kappa_2 - \kappa_4) v
    }
    .
    \end{aligned}
\end{gather}

\begin{gather}
    \label{eq:P6case43}
    \tag*{$\PPVIn{14}$}
    \left\{
    \begin{array}{lcr}
         z (z - 1) u'
         &=& 2 u^2 v u
         - 2 u v u
         - \kappa_1 u^2 
         + \kappa_2 u
         \hspace{4.4cm}
         \\[1mm]
         && 
         + \, z \brackets{
         - u^2 v - u v u
         + u v + v u
         + \kappa_4 u 
         + (\kappa_1 - \kappa_2 - \kappa_4)
         },
         \\[2mm]
         z (z - 1) v'
         &=& - 2 u v u v - v u^2 v
         + 2 v u v
         + \tfrac12 (3 \kappa_1 - \kappa_3) u v
         + \tfrac12 (\kappa_1 + \kappa_3) v u
         \\[1mm]
         && 
         - \, \kappa_2 v
         + \tfrac14 (\kappa_3^2 - \kappa_1^2)
         + z \brackets{
         u v^2 + v u v
         - v^2
         - \kappa_4 v
         },
    \end{array}
    \right.
    \\[3mm]
    \notag
    \begin{aligned}
    J
    = u v u^2 v
    - v u^2 v
    + \tfrac12 (\kappa_3 - \kappa_1) u^2 v
    - \tfrac12 (\kappa_3 + \kappa_1) u v u
    + \tfrac12 (2 \kappa_2 - \kappa_1 - \kappa_3) u v
    \hspace{1cm}
    \\[1mm]
    + \, \tfrac12 (\kappa_1 + \kappa_3) v u
    - \tfrac14 (\kappa_3^2 - \kappa_1^2) u
    + z \brackets{
    - u v u v
    + v u v
    + \kappa_4 u v
    + (\kappa_1 - \kappa_2 - \kappa_4) v
    }
    .
    \end{aligned}
\end{gather}

\begin{gather}
    \label{eq:P6case44}
    \tag*{$\PPVIn{15}$}
    \left\{
    \begin{array}{lcr}
         z (z - 1) u'
         &=& 2 u^2 v u
         - u^2 v - u v u
         - \kappa_1 u^2 
         + \kappa_2 u
         \hspace{4.4cm}
         \\[1mm]
         && 
         + \, z \brackets{
         - 2 u v u
         + u v + v u
         + \kappa_4 u 
         + (\kappa_1 - \kappa_2 - \kappa_4)
         },
         \\[2mm]
         z (z - 1) v'
         &=& - 2 u v u v - v u^2 v
         + u v^2 + v u v
         + \tfrac12 (3 \kappa_1 - \kappa_3) u v
         + \tfrac12 (\kappa_1 + \kappa_3) v u
         \\[1mm]
         && 
         - \, \kappa_2 v
         + \tfrac14 (\kappa_3^2 - \kappa_1^2)
         + z \brackets{
         2 v u v
         - v^2
         - \kappa_4 v
         },
    \end{array}
    \right.
    \\[3mm]
    \notag
    \begin{aligned}
    J
    = u v u^2 v
    - u v u v
    + \tfrac12 (\kappa_3 - \kappa_1) u^2 v
    - \tfrac12 (\kappa_3 + \kappa_1) u v u
    + \kappa_2 u v
    - \tfrac14 (\kappa_3^2 - \kappa_1^2) u
    \hspace{1cm}
    \\[1mm]
    + \, z \brackets{
    - v u^2 v
    + v u v
    + \tfrac12 (2 \kappa_4 - \kappa_1 - \kappa_3) u v
    + \tfrac12 (\kappa_1 + \kappa_3) v u
    + (\kappa_1 - \kappa_2 - \kappa_4) v
    }
    .
    \end{aligned}
\end{gather}

\subsubsection{Case 5}
\label{sec:P6case5}

\begin{gather}
    \label{eq:P6case51}
    \tag*{$\PPVIn{16}$}
    \left\{
    \begin{array}{lcr}
         z (z - 1) u'
         &=& u^3 v + u^2 v u
         - 2 u^2 v
         - \kappa_1 u^2 
         + \kappa_2 u 
         \hspace{4.4cm}
         \\[1mm]
         && 
         + \, z \brackets{
         - u^2 v - u v u
         + 2 u v
         + \kappa_4 u 
         + (\kappa_1 - \kappa_2 - \kappa_4)
         },
         \\[2mm]
         z (z - 1) v'
         &=& - u^2 v^2 - 2 u v u v
         + 2 u v^2
         + 2 \kappa_1 u v
         - \kappa_2 v
         + \kappa_3
         \hspace{2.8cm}
         \\[1mm]
         && 
         + \, z \brackets{
         u v^2 + v u v
         - v^2
         - \kappa_4 v
         },
    \end{array}
    \right.
    \\[3mm]
    \notag
    \begin{aligned}
    J
    = u^2 v u v
    - u^2 v^2
    - \kappa_1 u^2 v
    + \kappa_2 u v
    - \kappa_3 u
    + z \brackets{
    - u v u v
    + u v^2
    + \kappa_4 u v
    + (\kappa_1 - \kappa_2 - \kappa_4) v
    }
    .
    \end{aligned}
\end{gather}

\begin{gather}
    \label{eq:P6case52}
    \tag*{$\PPVIn{17}$}
    \left\{
    \begin{array}{lcr}
         z (z - 1) u'
         &=& u^3 v + u^2 v u
         - u^2 v - u v u
         - \kappa_1 u^2 
         + \kappa_2 u 
         \hspace{4.cm}
         \\[1mm]
         && 
         + \, z \brackets{
         - 2 u^2 v
         + 2 u v
         + \kappa_4 u 
         + (\kappa_1 - \kappa_2 - \kappa_4)
         },
         \\[2mm]
         z (z - 1) v'
         &=& - u^2 v^2 - 2 u v u v
         + u v^2 + v u v
         + 2 \kappa_1 u v
         - \kappa_2 v
         + \kappa_3
         \hspace{2.4cm}
         \\[1mm]
         && 
         + \, z \brackets{
         2 u v^2
         - v^2
         - \kappa_4 v
         },
    \end{array}
    \right.
    \\[3mm]
    \notag
    \begin{aligned}
    J
    = u^2 v u v
    - u v u v
    - \kappa_1 u^2 v
    + \kappa_2 u v
    - \kappa_3 u
    + z \brackets{
    - u^2 v^2
    + u v^2
    + \kappa_4 u v
    + (\kappa_1 - \kappa_2 - \kappa_4) v
    }
    .
    \end{aligned}
\end{gather}

\begin{gather}
    \label{eq:P6case53}
    \tag*{$\PPVIn{18}$}
    \left\{
    \begin{array}{lcr}
         z (z - 1) u'
         &=& u^3 v + u^2 v u
         - u^2 v - u v u
         - \kappa_1 u^2 
         + \kappa_2 u 
         \hspace{4.cm}
         \\[1mm]
         && 
         + \, z \brackets{
         - u^2 v - u v u
         + u v + v u
         + \kappa_4 u 
         + (\kappa_1 - \kappa_2 - \kappa_4)
         },
         \\[2mm]
         z (z - 1) v'
         &=& - u^2 v^2 - 2 u v u v
         + u v^2 + v u v
         + 2 \kappa_1 u v
         - \kappa_2 v
         + \kappa_3
         \hspace{2.4cm}
         \\[1mm]
         && 
         + \, z \brackets{
         u v^2 + v u v
         - v^2
         - \kappa_4 v
         },
    \end{array}
    \right.
    \\[3mm]
    \notag
    \begin{aligned}
    J
    = u^2 v u v
    - u v u v
    - \kappa_1 u^2 v
    + \kappa_2 u v
    - \kappa_3 u
    + z \brackets{
    - u v u v
    + v u v
    + \kappa_4 u v
    + (\kappa_1 - \kappa_2 - \kappa_4) v
    }
    .
    \end{aligned}
\end{gather}

\subsection{Systems of \texorpdfstring{$\PV$}{P5} type}
\label{sec:sysintlistP5}
\subsubsection{Case 1}
\label{sec:P5case1}

\begin{gather}
    \label{eq:P5case11}
    \tag*{$\PPVn{1}$}
    \left\{
    \begin{array}{lcr}
         z \, u'
         &=& 2 u v u^2
         - 2 u v u - 2 v u^2
         - \kappa_1 u^2 + 2 v u
         + (\kappa_1 + \kappa_2) u
         - \kappa_2 
         + \kappa_4 z u,
         \hspace{3mm}
         \\[2mm]
         z \, v'
         &=& - v u^2 v - 2 v u v u
         + 2 v u v + 2 v^2 u
         + \tfrac12 (\kappa_1 + \kappa_3) u v 
         + \tfrac12 (3 \kappa_1 - \kappa_3) v u
         \\[1mm]
         && 
         - \, v^2
         - (\kappa_1 + \kappa_2) v
         + \tfrac14 (\kappa_3^2 - \kappa_1^2)
         - \kappa_4 z v,
    \end{array}
    \right.
    \\[3mm]
    \notag
    \begin{aligned}
    J
    = v u^2 v u 
    - 2 v u v u 
    - \tfrac12 (\kappa_3 + \kappa_1) u v u 
    + \tfrac12 (\kappa_3 - \kappa_1) v u^2
    + v^2 u
    + (\kappa_1 + \kappa_2) v u
    \\
    - \tfrac14 (\kappa_3^2 - \kappa_1^2) u 
    - \kappa_2 v
    + \kappa_4 z v u
    .
    \end{aligned}
\end{gather}

\begin{gather}
    \label{eq:P5case13}
    \tag*{$\PPVn{2}$}
    \left\{
    \begin{array}{lcr}
         z \, u'
         &=& 2 u v u^2
         - 3 u v u - v u^2
         - \kappa_1 u^2 + u v + v u
         + (\kappa_1 + \kappa_2) u
         - \kappa_2
         + \kappa_4 z u,
         \\[2mm]
         z \, v'
         &=& - v u^2 v - 2 v u v u
         + 3 v u v + v^2 u
         + \tfrac12 (\kappa_1 + \kappa_3) u v
         + \tfrac12 (3 \kappa_1 - \kappa_3) v u
         \hspace{3mm}
         \\[1mm]
         &&  
         - \, v^2
         - (\kappa_1 + \kappa_2) v
         + \tfrac14 (\kappa_3^2 - \kappa_1^2)
         - \kappa_4 z v,
    \end{array}
    \right.
    \\[3mm]
    \notag
    \begin{aligned}
    J
    = v u^2 v u
    - v u^2 v - v u v u
    - \tfrac12 (\kappa_3 + \kappa_1) u v u
    + \tfrac12 (\kappa_3 - \kappa_1) v u^2
    + v u v 
    + \tfrac12 (\kappa_3 + \kappa_1) u v 
    \\
    + \tfrac12 (2 \kappa_2 - \kappa_3 + \kappa_1) v u
    - \tfrac14 (\kappa_3^2 - \kappa_1^2) u - \kappa_2 v 
    + \kappa_4 z v u
    .
    \end{aligned}
\end{gather}

\subsubsection{Case 2}
\label{sec:P5case2}

\begin{gather}
    \label{eq:P5case21}
    \tag*{$\PPVn{3}$}
    \hspace{-5mm}
    \left\{
    \begin{array}{lcl}
         \hspace{-1mm}
         z \, u'
         \hspace{-1mm}
         &=& 
         \hspace{-2mm}
         u v u^2 + v u^3
         - 2 u v u - 2 v u^2
         - \kappa_1 u^2 + u v + v u 
         + (\kappa_1 + \kappa_2) u
         - \kappa_2
         + \kappa_4 z u,
         \\[2mm]
         \hspace{-1mm}
         z \, v'
         \hspace{-1mm}
         &=& 
         \hspace{-2mm}
         - 2 v u v u - v^2 u^2
         + 2 v u v + 2 v^2 u
         + 2 \kappa_1 v u - v^2
         - (\kappa_1 + \kappa_2) v
         + \kappa_3
         - \kappa_4 z v,
    \end{array}
    \right.
    \hspace{-5mm}
    \\[3mm]
    \notag
    \begin{aligned}
    J
    = v u v u^2
    - 2 v u v u
    - \kappa_1 v u^2
    + v u v
    + (\kappa_1 + \kappa_2) v u
    - \kappa_3 u 
    - \kappa_2 v
    + \kappa_4 z v u
    .
    \end{aligned}
\end{gather}

\begin{gather}
    \label{eq:P5case22}
    \tag*{$\PPVn{4}$}
    \hspace{-5mm}
    \left\{
    \begin{array}{lcl}
         \hspace{-1mm}
         z \, u'
         \hspace{-1mm}
         &=& 
         \hspace{-2mm}
         u v u^2 + v u^3
         - u v u - 3 v u^2
         - \kappa_1 u^2 + 2 v u
         + (\kappa_1 + \kappa_2) u
         - \kappa_2
         + \kappa_4 z u,
         \hspace{5mm}
         \\[2mm]
         \hspace{-1mm}
         z \, v'
         \hspace{-1mm}
         &=& 
         \hspace{-2mm}
         - 2 v u v u - v^2 u^2
         + v u v + 3 v^2 u
         + 2 \kappa_1 v u - v^2 
         - (\kappa_1 + \kappa_2) v
         + \kappa_3
         - \kappa_4 z v,
    \end{array}
    \right.
    \\[3mm]
    \notag
    \begin{aligned}
    J
    = v u v u^2
    - v u v u - v^2 u^2
    - \kappa_1 v u^2 + v^2 u
    + (\kappa_1 + \kappa_2) v u
    - \kappa_3 u - \kappa_2 v
    + \kappa_4 z v u
    .
    \end{aligned}
\end{gather}

\subsubsection{Case 3}
\label{sec:P5case3}

\begin{gather}
    \label{eq:P5case31}
    \tag*{$\PPVn{5}$}
    \hspace{-5mm}
    \left\{
    \begin{array}{lcr}
         z \, u'
         &=& u^2 v u + u v u^2
         - u^2 v - 3 u v u
         - \kappa_1 u^2 + 2 u v
         + (\kappa_1 + \kappa_2) u
         - \kappa_2 
         + \kappa_4 z u,
         \\[2mm]
         z \, v'
         &=& - u v u v - v u^2 v - v u v u
         + u v^2 + 3 v u v
         + \kappa_1 u v + \kappa_1 v u 
         - v^2
         \hspace{1.8cm}
         \\[1mm]
         && 
         - \, (\kappa_1 + \kappa_2) v
         + \kappa_3
         - \kappa_4 z v,
    \end{array}
    \right.
    \\[3mm]
    \notag
    \begin{aligned}
    J
    = u v u v u 
    - u v u v - u v^2 u 
    - \kappa_1 u v u 
    + u v^2
    + \kappa_1 u v
    + \kappa_2 v u
    - \kappa_3 u
    - \kappa_2 v
    + \kappa_4 z u v
    .
    \end{aligned}
\end{gather}

\begin{gather}
    \label{eq:P5case32}
    \tag*{$\PPVn{6}$}
    \hspace{-5mm}
    \left\{
    \begin{array}{lcr}
         z \, u'
         &=& u^2 v u + u v u^2
         - 3 u v u - v u^2
         - \kappa_1 u^2 + 2 v u
         + (\kappa_1 + \kappa_2) u
         - \kappa_2
         + \kappa_4 z u,
         \\[2mm]
         z \, v'
         &=& - u v u v - v u^2 v - v u v u
         + 3 v u v + v^2 u 
         + \kappa_1 u v + \kappa_1 v u - v^2
         \hspace{1.8cm}
         \\[1mm]
         && 
         - \, (\kappa_1 + \kappa_2) v
         + \kappa_3
         - \kappa_4 z v,
    \end{array}
    \right.
    \\[3mm]
    \notag
    \begin{aligned}
    J
    = u v u v u 
    - v u v u - u v^2 u 
    - \kappa_1 u v u 
    + v^2 u
    + \kappa_2 u v
    + \kappa_1 v u
    - \kappa_3 u
    - \kappa_2 v
    + \kappa_4 z v u
    .
    \end{aligned}
\end{gather}

\subsubsection{Case 4}
\label{sec:P5case4}

\begin{gather}
    \label{eq:P5case43}
    \tag*{$\PPVn{7}$}
    \left\{
    \begin{array}{lcr}
         z \, u'
         &=& 2 u^2 v u 
         - 2 u^2 v - 2 u v u
         - \kappa_1 u^2 + 2 u v 
         + (\kappa_1 + \kappa_2) u
         - \kappa_2
         + \kappa_4 z u,
         \hspace{3mm}
         \\[2mm]
         z \, v'
         &=& - 2 u v u v - v u^2 v
         + 2 u v^2 + 2 v u v
         + \tfrac12 (3 \kappa_1 - \kappa_3) u v
         + \tfrac12 (\kappa_1 + \kappa_3) v u
         \\[1mm]
         && 
         - \, v^2
         - (\kappa_1 + \kappa_2) v
         + \tfrac14 (\kappa_3^2 - \kappa_1^2)
         - \kappa_4 z v,
    \end{array}
    \right.
    \\[3mm]
    \notag
    \begin{aligned}
    J
    = u v u^2 v
    - 2 u v u v
    + \tfrac12 (\kappa_3 - \kappa_1) u^2 v
    - \tfrac12 (\kappa_3 + \kappa_1) u v u
    + u v^2
    + (\kappa_1 + \kappa_2) u v
    \\
    - \tfrac14 (\kappa_3^2 - \kappa_1^2) u - \kappa_2 v
    + \kappa_4 z u v
    .
    \end{aligned}
\end{gather}

\begin{gather}
    \label{eq:P5case44}
    \tag*{$\PPVn{8}$}
    \left\{
    \begin{array}{lcr}
         z \, u'
         &=& 2 u^2 v u
         - u^2 v - 3 u v u
         - \kappa_1 u^2 + u v + v u
         + (\kappa_1 + \kappa_2) u
         - \kappa_2
         + \kappa_4 z u,
         \\[2mm]
         z \, v'
         &=& - 2 u v u v - v u^2 v
         + u v^2 + 3 v u v
         + \tfrac12 (3 \kappa_1 - \kappa_3) u v
         + \tfrac12 (\kappa_1 + \kappa_3) v u
         \hspace{3mm}
         \\[1mm]
         &&  
         - \, v^2 
         - (\kappa_1 + \kappa_2) v
         + \tfrac14 (\kappa_3^2 - \kappa_1^2)
         - \kappa_4 z v,
    \end{array}
    \right.
    \\[3mm]
    \notag
    \begin{aligned}
    J
    = u v u^2 v
    - u v u v - v u^2 v
    + \tfrac12 (\kappa_3 - \kappa_1) u^2 v 
    - \tfrac12 (\kappa_3 + \kappa_1) u v u 
    + v u v
    + \tfrac12 (2 \kappa_2 + \kappa_1 - \kappa_3) u v
    \\
    + \tfrac12 (\kappa_1 + \kappa_3) v u
    - \tfrac14 (\kappa_3^2 - \kappa_1^2) u 
    - \kappa_2 v
    + \kappa_4 z u v
    .
    \end{aligned}
\end{gather}

\subsubsection{Case 5}
\label{sec:P5case5}

\begin{gather}
    \label{eq:P5case51}
    \tag*{$\PPVn{9}$}
    \hspace{-5mm}
    \left\{
    \begin{array}{lcl}
         \hspace{-1mm}
         z \, u'
         \hspace{-1mm}
         &=& 
         \hspace{-2mm}
         u^3 v + u^2 v u
         - 2 u^2 v - 2 u v u
         - \kappa_1 u^2 + u v + v u 
         + (\kappa_1 + \kappa_2) u
         - \kappa_2
         + \kappa_4 z u,
         \\[2mm]
         \hspace{-1mm}
         z \, v'
         \hspace{-1mm}
         &=& 
         \hspace{-2mm}
         - u^2 v^2 - 2 u v u v
         + 2 u v^2 + 2 v u v
         + 2 \kappa_1 u v - v^2
         - (\kappa_1 + \kappa_2) v
         + \kappa_3
         - \kappa_4 z v,
    \end{array}
    \right.
    \hspace{-5mm}
    \\[3mm]
    \notag
    \begin{aligned}
    J
    = u^2 v u v
    - 2 u v u v
    - \kappa_1 u^2 v + v u v
    + (\kappa_1 + \kappa_2) u v
    - \kappa_3 u - \kappa_2 v
    + \kappa_4 z u v
    .
    \end{aligned}
\end{gather}

\begin{gather}
    \label{eq:P5case52}
    \tag*{$\PPVn{10}$}
    \hspace{-4mm}
    \left\{
    \begin{array}{lcl}
         \hspace{-1mm}
         z \, u'
         \hspace{-1mm}
         &=& 
         \hspace{-2mm}
         u^3 v + u^2 v u
         - 3 u^2 v - u v u
         - \kappa_1 u^2 + 2 u v
         + (\kappa_1 + \kappa_2) u
         - \kappa_2
         + \kappa_4 z u,
         \hspace{5mm}
         \\[2mm]
         \hspace{-1mm}
         z \, v'
         \hspace{-1mm}
         &=& 
         \hspace{-2mm}
         - u^2 v^2 - 2 u v u v
         + 3 u v^2 + v u v
         + 2 \kappa_1 u v - v^2
         - (\kappa_1 + \kappa_2) v
         + \kappa_3
         - \kappa_4 z v,
    \end{array}
    \right.
    \\[3mm]
    \notag
    \begin{aligned}
    J
    = u^2 v u v
    - u^2 v^2 - u v u v
    - \kappa_1 u^2 v + u v^2
    + (\kappa_1 + \kappa_2) u v
    - \kappa_3 u - \kappa_2 v
    + \kappa_4 z u v
    .
    \end{aligned}
\end{gather}

\subsection{Systems of \texorpdfstring{$\PIV$}{P4} type}
\label{sec:sysintlistP4}

\begin{minipage}[l]{0.49\linewidth}
\begin{gather}
    \label{eq:P4casem2m1}
    \tag*{$\PPIVn{1}$}
    \left\{
    \begin{array}{lcl}
        \hspace{-1mm}
         u' 
         \hspace{-1mm}
         &=& 
         \hspace{-1mm}
         - u^2 
         + 2 v u
         - 2 z u
         + \kappa_2,
         \\[2mm]
         \hspace{-1mm}
         v'
         \hspace{-1mm}
         &=& 
         \hspace{-1mm}
         - v^2
         + v u + u v
         + 2 z v
         + \kappa_3,
    \end{array}
    \right.
    \\[3mm]
    \notag
    J
    = v^2 u - u v u
    - \kappa_3 u + \kappa_2 v 
    - 2 z v u
    .
\end{gather}
\end{minipage}
\hspace{1.5mm}
\begin{minipage}[l]{0.49\linewidth}
\begin{gather}
    \label{eq:P4casem20}
    \tag*{$\PPIVn{2}$}
    \hspace{-5mm}
    \left\{
    \begin{array}{lcl}
        \hspace{-1mm}
         u' 
         \hspace{-1mm}
         &=& 
         \hspace{-1mm}
         - u^2 
         + 2 v u
         - 2 z u
         + \kappa_2,
         \\[2mm]
         \hspace{-1mm}
         v'
         \hspace{-1mm}
         &=& 
         \hspace{-1mm}
         - v^2
         + 2 v u
         + 2 z v
         + \kappa_3,
    \end{array}
    \right.
    \hspace{-5mm}
    \\[3mm]
    \notag
    J
    = v^2 u - v u^2
    - \kappa_3 u + \kappa_2 v 
    - 2 z v u
    .
\end{gather}
\end{minipage}

\vspace{3mm}
\begin{minipage}[l]{0.46\linewidth}
\begin{gather}
    \label{eq:P4casem1m2}
    \tag*{$\PPIVn{3}$}
    \hspace{-5mm}
    \left\{
    \begin{array}{lcl}
        \hspace{-1mm}
         u' 
         \hspace{-1mm}
         &=& 
         \hspace{-1mm}
         - u^2 
         + u v + v u
         - 2 z u
         + \kappa_2,
         \\[2mm]
         \hspace{-1mm}
         v'
         \hspace{-1mm}
         &=& 
         \hspace{-1mm}
         - v^2
         + 2 u v
         + 2 z v
         + \kappa_3,
    \end{array}
    \right.
    \hspace{-5mm}
    \\[3mm]
    \notag
    J
    = v u v - u^2 v
    - \kappa_3 u + \kappa_2 v 
    - 2 z u v
    .
\end{gather}
\end{minipage}
\hspace{1.5mm}
\begin{minipage}[l]{0.48\linewidth}
\begin{gather}
    \label{eq:P4casem10}
    \tag*{$\PPIVn{4}$}
    \hspace{-5mm}
    \left\{
    \begin{array}{lcl}
        \hspace{-1mm}
         u' 
         \hspace{-1mm}
         &=& 
         \hspace{-1mm}
         - u^2 
         + u v + v u
         - 2 z u
         + \kappa_2,
         \\[2mm]
         \hspace{-1mm}
         v'
         \hspace{-1mm}
         &=& 
         \hspace{-1mm}
         - v^2
         + 2 v u
         + 2 z v
         + \kappa_3,
    \end{array}
    \right.
    \hspace{-5mm}
    \\[3mm]
    \notag
    J
    = v u v - v u^2
    - \kappa_3 u + \kappa_2 v 
    - 2 z v u
    .
\end{gather}
\end{minipage}

\vspace{3mm}
\begin{minipage}[l]{0.46\linewidth}
\begin{gather}
    \label{eq:P4case0m2}
    \tag*{$\PPIVn{5}$}
    \hspace{-5mm}
    \left\{
    \begin{array}{lcl}
        \hspace{-1mm}
         u' 
         \hspace{-1mm}
         &=& 
         \hspace{-1mm}
         - u^2 
         + 2 u v
         - 2 z u
         + \kappa_2,
         \\[2mm]
         \hspace{-1mm}
         v'
         \hspace{-1mm}
         &=& 
         \hspace{-1mm}
         - v^2
         + 2 u v
         + 2 z v
         + \kappa_3,
    \end{array}
    \right.
    \hspace{-5mm}
    \\[3mm]
    \notag
    J
    = u v^2 - u^2 v
    - \kappa_3 u + \kappa_2 v 
    - 2 z u v
    .
\end{gather}
\end{minipage}
\hspace{1.5mm}
\begin{minipage}[l]{0.48\linewidth}
\begin{gather}
    \label{eq:P4case0m1}
    \tag*{$\PPIVn{6}$}
    \hspace{-5mm}
    \left\{
    \begin{array}{lcl}
        \hspace{-1mm}
         u' 
         \hspace{-1mm}
         &=& 
         \hspace{-1mm}
         - u^2 
         + 2 u v
         - 2 z u
         + \kappa_2,
         \\[2mm]
         \hspace{-1mm}
         v'
         \hspace{-1mm}
         &=& 
         \hspace{-1mm}
         - v^2
         + v u + u v 
         + 2 z v
         + \kappa_3,
    \end{array}
    \right.
    \hspace{-5mm}
    \\[3mm]
    \notag
    J
    = u v^2 - u v u
    - \kappa_3 u + \kappa_2 v 
    - 2 z u v
    .
\end{gather}
\end{minipage}

\subsection{Systems of \texorpdfstring{$\PIIIpr$}{P3'} type}
\label{sec:sysintlistP3}
\subsubsection{Case 1}
\label{sec:P3D6case1}
\begin{gather}
    \label{eq:P3D6case1}
    \tag*{$\PPIIIprn{1}$}
    \left\{
    \begin{array}{lcl}
         z \, u'
         &=& 2 u v u + \kappa_1 u
         + \kappa_2 u^2 + \kappa_4 z,
         \\[2mm]
         z \, v'
         &=& - 2 v u v - \kappa_1 v
         - 2 \kappa_2 v u - \kappa_3,
    \end{array}
    \right.
    \\[3mm]
    \notag
    \begin{aligned}
    J
    = v u^2 v 
    + \kappa_1 v u 
    + \kappa_3 \, \kappa_2^{-1} [u, v]
    + \kappa_2 v u^2 + \kappa_3 u + \kappa_4 z v
    .
    \end{aligned}
\end{gather}

\begin{gather}
    \label{eq:P3D6case2}
    \tag*{$\PPIIIprn{2}$}
    \left\{
    \begin{array}{lcl}
         z \, u'
         &=& 2 u v u + \kappa_1 u
         + \kappa_2 u^2
         + \kappa_4 z,
         \\[2mm]
         z \, v'
         &=& - 2 v u v - \kappa_1 v
         - 2 \kappa_2 u v - \kappa_3,
    \end{array}
    \right.
    \\[3mm]
    \notag
    \begin{aligned}
    J
    = v u^2 v + \kappa_1 u v + \kappa_3 \, \kappa_2^{-1} [v, u]
    + \kappa_2 u^2 v + \kappa_3 u
    + \kappa_4 z v
    .
    \end{aligned}
\end{gather}

\subsubsection{Case 2}
\label{sec:P3D6case2}

\begin{gather}
    \label{eq:P3D6case5}
    \tag*{$\PPIIIprn{3}$}
    \left\{
    \begin{array}{lcl}
         z \, u'
         &=& 2 v u^2 + \kappa_1 u
         + \kappa_2 u^2 + \kappa_4 z,
         \\[2mm]
         z \, v'
         &=& - 2 v^2 u - \kappa_1 v
         - 2 \kappa_2 v u - \kappa_3,
    \end{array}
    \right.
    \\[3mm]
    \notag
    \begin{aligned}
    J
    = v^2 u^2 + \kappa_1 v u
    + \kappa_2 v u^2 + \kappa_3 u + \kappa_4 z v
    .
    \end{aligned}
\end{gather}

\subsubsection{Case 3}
\label{sec:P3D6case3}

\begin{gather}
    \label{eq:P3D6case4}
    \tag*{$\PPIIIprn{4}$}
    \left\{
    \begin{array}{lcl}
         z \, u'
         &=& u v u + v u^2 + \kappa_1 u
         + \kappa_2 u^2 + \kappa_4 z,
         \\[2mm]
         z \, v'
         &=& - v u v - v^2 u - \kappa_1 v
         - \kappa_2 u v - \kappa_2 v u - \kappa_3,
    \end{array}
    \right.
    \\[3mm]
    \notag
    \begin{aligned}
    J
    = v u v u + \kappa_1 v u 
    + \kappa_2 u v u + \kappa_3 u + \kappa_4 z v
    .
    \end{aligned}
\end{gather}

\begin{gather}
    \label{eq:P3D6case3}
    \tag*{$\PPIIIprn{5}$}
    \left\{
    \begin{array}{lcl}
         z \, u'
         &=& u v u + v u^2 + \kappa_1 u
         + \kappa_2 u^2 + \kappa_4 z,
         \\[2mm]
         z \, v'
         &=& - v u v - v^2 u - \kappa_1 v
         - 2 \kappa_2 v u - \kappa_3,
    \end{array}
    \right.
    \\[3mm]
    \notag
    \begin{aligned}
    J
    = v u v u + \kappa_1 v u
    + \kappa_2 v u^2 + \kappa_3 u + \kappa_4 z v
    .
    \end{aligned}
\end{gather}

\subsubsection{Case 4}
\label{sec:P3D6case4}

\begin{gather}
    \label{eq:P3D6case6}
    \tag*{$\PPIIIprn{6}$}
    \left\{
    \begin{array}{lcl}
         z \, u'
         &=& u^2 v + u v u + \kappa_1 u
         + \kappa_2 u^2 + \kappa_4 z,
         \\[2mm]
         z \, v'
         &=& - u v^2 - v u v - \kappa_1 v
         - \kappa_2 u v - \kappa_2 v u - \kappa_3,
    \end{array}
    \right.
    \\[3mm]
    \notag
    \begin{aligned}
    J
    = u v u v + \kappa_1 u v 
    + \kappa_2 u v u + \kappa_3 u + \kappa_4 z v
    .
    \end{aligned}
\end{gather}

\begin{gather}
    \label{eq:P3D6case7}
    \tag*{$\PPIIIprn{7}$}
    \left\{
    \begin{array}{lcl}
         z \, u'
         &=& u^2 v + u v u + \kappa_1 u
         + \kappa_2 u^2 + \kappa_4 z,
         \\[2mm]
         z \, v'
         &=& - u v^2 - v u v - \kappa_1 v
         - 2 \kappa_2 u v - \kappa_3,
    \end{array}
    \right.
    \\[3mm]
    \notag
    \begin{aligned}
    J
    = u v u v + \kappa_1 u v 
    + \kappa_2 u^2 v + \kappa_3 u + \kappa_4 z v
    .
    \end{aligned}
\end{gather}

\subsubsection{Case 5}
\label{sec:P3D6case5}

\begin{gather}
    \label{eq:P3D6case8}
    \tag*{$\PPIIIprn{8}$}
    \left\{
    \begin{array}{lcl}
         z \, u'
         &=& 2 u^2 v + \kappa_1 u
         + \kappa_2 u^2 + \kappa_4 z,
         \\[2mm]
         z \, v'
         &=& - 2 u v^2 - \kappa_1 v
         - 2 \kappa_2 u v - \kappa_3,
    \end{array}
    \right.
    \\[3mm]
    \notag
    \begin{aligned}
    J
    = u^2 v^2 + \kappa_1 u v 
    + \kappa_2 u^2 v + \kappa_3 u + \kappa_4 z v
    .
    \end{aligned}
\end{gather}

\subsection{Systems of \texorpdfstring{$\PII$}{P2} type}
\label{sec:sysintlistP2}
\begin{minipage}[l]{0.49\linewidth}
\begin{gather} 
    \label{eq:P2betam2}
    \tag*{$\PPIIn{1}$}
    \left\{
    \begin{array}{lcl}
         u'
         &=& - u^2 
         + v
         - \tfrac12 z,
         \\[2mm]
         v'
         &=& 2 u v
         + \kappa_3,
    \end{array}
    \right.
    \\[2mm]
    \notag
    J
    = - u^2 v
    + \tfrac12 v^2 
    - \kappa_3 u 
    - \tfrac12 z v.
\end{gather}
\end{minipage}
\hspace{1.5mm}
\begin{minipage}[l]{0.49\linewidth}
\begin{gather}
    \label{eq:P2beta0}
    \tag*{$\PPIIn{2}$}
    \left\{
    \begin{array}{lcl}
         u'
         &=& - u^2 
         + v
         - \tfrac12 z,
         \\[2mm]
         v'
         &=& 2 v u
         + \kappa_3,
    \end{array}
    \right.
    \\[2mm]
    \notag
    J
    = - v u^2 
    + \tfrac12 v^2 
    - \kappa_3 u 
    - \tfrac12 z v.
\end{gather}
\end{minipage}

\section{List of Hamiltonian non-abelian systems of \Painleve type}
\label{sec:syshamlist}

\begin{gather}
    \label{eq:hamP6case1_}
    \tag*{$\PVIn{H}$}
    \left\{
    \begin{array}{lcr}
         z (z - 1) u'
         &=& u^2 v u 
        + u v u^2 
        - 2 u v u
        - \kappa_1 u^2 
        + \kappa_2 u
        \hspace{3.4cm}
        \\[1mm]
        && 
        + \, z \brackets{
        - u^2 v 
        - v u^2
        + u v + v u
        + \kappa_4 u 
         + (\kappa_1 - \kappa_2 - \kappa_4)
         },
         \\[2mm]
         z (z - 1) v'
         &=& - u v u v
        - v u v u
        - v u^2 v
        + 2 v u v
        + \kappa_1 u v
        + \kappa_1 v u
        - \kappa_2 v
        + \kappa_3
        \\[1mm]
        && 
        + \, z \brackets{
        u v^2 
        + v^2 u
        - v^2
        - \kappa_4 v
         }.
    \end{array}
    \right.
\end{gather}

\begin{gather}
    \label{eq:hamP5case1}
    \tag*{$\PVn{H}$}
    \left\{
    \begin{array}{lcr}
         z \, u'
         &=& u^2 v u + u v u^2
         - u^2 v - 2 u v u - v u^2
         - \kappa_1 u^2 + u v + v u 
         + (\kappa_1 + \kappa_2) u
         \\[1mm]
         && 
         - \, \kappa_2
         + \kappa_4 z u,
         \\[2mm]
         z \, v'
         &=& - u v u v - v u^2 v - v u v u
         + u v^2 + 2 v u v + v^2 u 
         + \kappa_1 u v + \kappa_1 v u - v^2
         \hspace{0.4cm}
         \\[1mm]
         && 
         - \, (\kappa_1 + \kappa_2) v
         + \kappa_3
         - \kappa_4 z v.
    \end{array}
    \right.
\end{gather}

\begin{gather}
    \label{eq:P4casem1m1}
    \tag*{$\PIVn{H}$}
    \left\{
    \begin{array}{lcl}
        \hspace{-1mm}
         u' 
         \hspace{-1mm}
         &=& 
         \hspace{-1mm}
         - u^2 
         + u v + v u
         - 2 z u
         + \kappa_2,
         \\[2mm]
         \hspace{-1mm}
         v'
         \hspace{-1mm}
         &=& 
         \hspace{-1mm}
         - v^2
         + v u + u v
         + 2 z v
         + \kappa_3.
    \end{array}
    \right.
\end{gather}

\begin{gather}
    \label{eq:hamP3D6case1}
    \tag*{$\PIIIprn{H}$}
    \left\{
    \begin{array}{lcl}
         z \, u'
         &=& 2 u v u + \kappa_1 u
         + \kappa_2 u^2 + \kappa_4 z,
         \\[2mm]
         z \, v'
         &=& - 2 v u v - \kappa_1 v
         - \kappa_2 u v - \kappa_2 v u - \kappa_3.
    \end{array}
    \right.
\end{gather}

\begin{gather} 
    \label{eq:P2betam1}
    \tag*{$\PIIn{H}$}
    \left\{
    \begin{array}{lcl}
         u'
         &=& - u^2 
         + v
         - \tfrac12 z,
         \\[2mm]
         v'
         &=& v u + u v
         + \kappa_3.
    \end{array}
    \right.
\end{gather}

\begin{gather} 
    \label{eq:P1ham}
    \tag*{$\PIn{H}$}
    \left\{
    \begin{array}{lcl}
         u'
         &=& v,
         \\[2mm]
         v'
         &=& 6 u^2
         + z.
    \end{array}
    \right.
\end{gather}

\section{Special cases of \texorpdfstring{$\PIIIpr$}{P3'} type systems}\label{P3'D7}

Let us consider the scalar $\PIIIpr$ system \eqref{eq:scalP3D6'sys}.
Eliminating $v$ from the system, one arrives at the \PPainleve-$3^\prime(D_6)$ equation for $y(z) = u(z)$ of the form
\begin{gather}
    \label{eq:scalP3D6'eq}
    y''
    = \frac{1}{y} {(y')}^2
    - \frac1z y'
    + \frac{1}{z^2} y^2 \brackets{
    \gamma y + \alpha
    }
    + \frac{\beta}{z}
    + \frac{\delta }{y},
    \\[2mm]
    \notag
    \begin{aligned}
    \alpha
    &= \kappa_1 \kappa_2 - 2 \kappa_3,
    &&&
    \beta
    &= \kappa_4 (1 - \kappa_1),
    &&&
    \gamma
    &= \kappa_2^2,
    &&&
    \delta
    &= - \kappa_4^2.
    \end{aligned}
\end{gather}
It follows from these relations that the equation 
\begin{align}
        \label{eq:scalP3D72'eq}
        \tag*{$\PIIIpr(D_7)$}
        y''
        = \frac{1}{y} {(y')}^2
        - \frac1z y'
        + \frac{1}{z^2} y^2 \brackets{
        \gamma y + \alpha
        }
        + \frac{\beta}{z}
    \end{align}
of $\PIIIpr(D_7)$ type with $\delta=0, \, \beta\ne 0$ cannot be obtained from \eqref{eq:scalP3D6'sys}. However, this equation can be represented \cite[p. 12142, \Painleve III $(D_7^{(1)})$-2]{ohyama2006coalescent} as the Hamiltonian system
\begin{gather}
     \label{eq:scalP3D72'sys}
    \left\{
    \begin{array}{lclcl}
         z \, u'
         &=& 2 u^2 v + \kappa_1 u
         + \kappa_2 u^2
         ,
         \\[2mm]
         z \, v'
         &=& - 2 u v^2 - \kappa_1 v
         - 2 \kappa_2 u v
         - \kappa_3
         + \kappa_4 z u^{-2},
    \end{array}
    \right.
\end{gather}
where 
\begin{align}
    &&
    \alpha
    &= \kappa_1 \kappa_2 - 2 \kappa_3,
    &
    \beta
    &= 2 \kappa_4,
    &
    \gamma
    &= \kappa_2^2.
    &&
\end{align}
The Hamiltonian for this system is given by
\begin{equation} 
    \label{eq:scalH3D72'}
    z \, h 
    = u^2 v^2 + \kappa_1 u v 
    + \kappa_2 u^2 v
    + \kappa_3 u
    + \kappa_4 z u^{-1}.
\end{equation}
Notice that system \eqref{eq:scalP3D72'sys} has the structure
\begin{align}\label{P3D7}
    \left\{
    \begin{array}{lclcl}
         z \, u'
         &=& P_1 (u, v)
         ,
         \\[2mm]
         z \, v'
         &=& P_2 (u, v)
         + \kappa_4 z u^{-2}.
    \end{array}
    \right.
\end{align}

\subsection{Non-abelian systems of \texorpdfstring{$\PIIIpr(D_7)$}{P3'(D7)} type}
\label{sec:sysintP3D72}

\begin{proposition} \label{thm:ncJ3D72list}
There are six non-abelian systems \eqref{P3D7} with components $P_1(u, v)$ and $P_2(u, v)$ given by the formulas
\begin{align} \label{eq:ncP3D72sys}
    \begin{aligned}
    P_1 (u, v)
    &= a_1 u^2 v + (2 - a_1 - a_2) u v u + a_2 v u^2 + \kappa_1 u + \kappa_2 u^2
    ,
    \\[2mm]
    P_2 (u, v)
    &= b_1 u v^2 - (2 + b_1 + b_2) v u v + b_2 v^2 u - \kappa_1 v
    c_1 u v + (- 2 \kappa_2 - c_1) v u
    - \kappa_3
    ,
    \end{aligned}
\end{align}
whose auxiliary system
\begin{align}
    \left\{
    \begin{array}{lclcl}
         \dfrac{du}{dt}
         &=& P_1 (u, v)
         ,
         \\[2mm]
         \dfrac{dv}{dt}
         &=& P_2 (u, v)
         + \kappa_4 z u^{-2}
    \end{array}
    \right.
\end{align}
have an Okamoto integral of the form
\begin{align} \label{eq:ncJ3D72}
\begin{aligned}
    J
    = d_1 u^2 v^2 + d_2 u v^2 u + d_3 u v u v + d_4 v u^2 v + d_5 v u v u + \brackets{
    1 - \sum d_i
    } v^2 u^2
    + e_1 u v
    \\[1mm] 
    + \, (\kappa_1 - e_1) v u
    + h_1 u^2 v 
    + (\kappa_2 - h_1 - h_2) u v u
    + h_2 v u^2
    + \kappa_4 z u^{-1}
    .
\end{aligned}
\end{align}
These systems are given in the list below:
\begin{itemize}
\item Case \textbf{1}
\label{sec:P3D72case1}
\begin{gather}
    \label{eq:P3D72case1}
    \tag*{$\PPIIIprn{1}(D_7)$}
    \left\{
    \begin{array}{lcl}
         z \, u'
         &=& 2 u v u
         + \kappa_1 u
         + \kappa_2 u^2,
         \\[2mm]
         z \, v'
         &=& - 2 v u v
         - \kappa_1 v
         - 2 \kappa_2 v u - \kappa_3 
         + \kappa_4 z u^{-2}
         ,
    \end{array}
    \right.
    \\[3mm]
    \notag
    \begin{aligned}
    J
    = v u^2 v
    + \kappa_1 v u 
    + \kappa_3 \kappa_2^{-1} [u, v]
    + \kappa_2 v u^2
    + \kappa_3 u 
    + \kappa_4 z u^{-1}
    ;
    \end{aligned}
\end{gather}

\begin{gather}
    \label{eq:P3D72case2}
    \tag*{$\PPIIIprn{2}(D_7)$}
    \left\{
    \begin{array}{lcl}
         z \, u'
         &=& 2 u v u
         + \kappa_1 u
         + \kappa_2 u^2,
         \\[2mm]
         z \, v'
         &=& - 2 v u v
         - \kappa_1 v
         - 2 \kappa_2 u v - \kappa_3 
         + \kappa_4 z u^{-2},
    \end{array}
    \right.
    \\[3mm]
    \notag
    \begin{aligned}
    J
    = v u^2 v
    + \kappa_1 u v
    + \kappa_3 \kappa_2^{-1} [v, u]
    + \kappa_2 u^2 v
    + \kappa_3 u 
    + \kappa_4 z u^{-1}
    .
    \end{aligned}
\end{gather}

\item Case \textbf{2}
\label{sec:P3D72case2}
\begin{gather}
    \label{eq:P3D72case3}
    \tag*{$\PPIIIprn{3}(D_7)$}
    \left\{
    \begin{array}{lcl}
         z \, u'
         &=& u v u + v u^2
         + \kappa_1 u
         + \kappa_2 u^2
         ,
         \\[2mm]
         z \, v'
         &=& - v u v - v^2 u
         - \kappa_1 v
         - \kappa_2 u v - \kappa_2 v u
         - \kappa_3 + \kappa_4 z u^{-2},
    \end{array}
    \right.
    \\[3mm]
    \notag
    \begin{aligned}
    J
    = v u v u
    + \kappa_1 v u
    + \kappa_2 u v u
    + \kappa_3 u 
    + \kappa_4 z u^{-1}
    ;
    \end{aligned}
\end{gather}

\begin{gather}
    \label{eq:P3D72case4}
    \tag*{$\PPIIIprn{4}(D_7)$}
    \left\{
    \begin{array}{lcl}
         z \, u'
         &=& u v u + v u^2 
         + \kappa_1 u
         + \kappa_2 u^2
         ,
         \\[2mm]
         z \, v'
         &=& - v u v - v^2 u
         - \kappa_1 v
         - 2 \kappa_2 v u - \kappa_3 + \kappa_4 z u^{-2}
         ,
    \end{array}
    \right.
    \\[3mm]
    \notag
    \begin{aligned}
    J
    = v u v u
    + \kappa_1 v u 
    + \kappa_2 v u^2
    + \kappa_3 u 
    + \kappa_4 z u^{-1}
    .
    \end{aligned}
\end{gather}

\item Case \textbf{3}
\label{sec:P3D72case3}
\begin{gather}
    \label{eq:P3D72case5}
    \tag*{$\PPIIIprn{5}(D_7)$}
    \left\{
    \begin{array}{lcl}
         z \, u'
         &=& u^2 v + u v u
         + \kappa_1 u
         + \kappa_2 u^2
         ,
         \\[2mm]
         z \, v'
         &=& - u v^2 - v u v
         - \kappa_1 v
         - \kappa_2 u v - \kappa_2 v u
         - \kappa_3 + \kappa_4 z u^{-2},
    \end{array}
    \right.
    \\[3mm]
    \notag
    \begin{aligned}
    J
    = u v u v
    + \kappa_1 u v
    + \kappa_2 u v u
    + \kappa_3 u
    + \kappa_4 z u^{-1}
    ;
    \end{aligned}
\end{gather}

\begin{gather}
    \label{eq:P3D72case6}
    \tag*{$\PPIIIprn{6}(D_7)$}
    \left\{
    \begin{array}{lcl}
         z \, u'
         &=& u^2 v + u v u
         + \kappa_1 u
         + \kappa_2 u^2
         ,
         \\[2mm]
         z \, v'
         &=& - u v^2 - v u v
         - \kappa_1 v
         - 2 \kappa_2 u v - \kappa_3 + \kappa_4 z u^{-2},
    \end{array}
    \right.
    \\[3mm]
    \notag
    \begin{aligned}
    J
    = u v u v
    + \kappa_1 u v
    + \kappa_2 u^2 v
    + \kappa_3 u 
    + \kappa_4 z u^{-1}
    .
    \end{aligned}
\end{gather}
\end{itemize}
\end{proposition}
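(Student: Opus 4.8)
The plan is to repeat verbatim the scheme used in the proof of Proposition~\ref{Prop1}. First I would substitute the ansatz \eqref{eq:ncP3D72sys}, \eqref{eq:ncJ3D72} into the conservation requirement $d_t(J)=0$ for the auxiliary system \eqref{P3D7}, compute $d_t(J)$ by the Leibniz rule, and equate to zero the coefficients of all distinct non-commutative monomials. This produces a system of nonlinear algebraic equations in the unknowns $a_1,a_2,b_1,b_2,c_1$ coming from the vector field together with $d_1,\dots,d_5,e_1,h_1,h_2$ coming from the integral, the constants $\kappa_1,\dots,\kappa_4$ entering as parameters. The classification then reduces to solving this algebraic system and checking that its solutions are exactly the six systems \ref{eq:P3D72case1}--\ref{eq:P3D72case6}.

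The one genuinely new feature compared with the $\PIIIpr(D_6)$ computation of Proposition~\ref{thm:ncJ3D6list} is the appearance of negative powers of $u$: the inhomogeneity $\kappa_4 z\,u^{-2}$ in the $v$-equation and the term $\kappa_4 z\,u^{-1}$ in $J$. To differentiate the latter I would use the non-abelian identity $d_t(u^{-1})=-u^{-1}\,d_t(u)\,u^{-1}=-u^{-1}\,P_1\,u^{-1}$. Because the polynomial core of $J$ carries no factor of $\kappa_4 z$ and the Leibniz rule inserts at most one $d_t(v)$ per monomial, the resulting $d_t(J)$ is exactly linear in $z$; it has the form $Y_0(u,v)+z\,Y_1(u,v)$, where $Y_0$ and $Y_1$ are non-commutative Laurent expressions in $u,v$ containing the powers $u^{-2}$ and $u^{-1}$. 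Conservation for arbitrary $z$ forces $Y_0=0$ and $Y_1=0$ separately, and within $Y_1$ the coefficients of the distinct negative-power monomials must vanish on their own.

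The $z^{0}$ part $Y_0=0$ involves only the polynomial data and reproduces precisely the algebraic equations of the $\PIIIpr(D_6)$ classification, since the $z$-independent parts of both flows and of both integral ansätze coincide with those of \eqref{eq:ncJ3D6'}. Hence the admissible pairs $(P_1,P_2)$ are a priori among the eight solutions found in Proposition~\ref{thm:ncJ3D6list}. The main obstacle, and the point where the two problems diverge, is the cancellation inside $Y_1$: expanding $u^{-1}P_1\,u^{-1}$ leaves the residual terms $a_1\,u v u^{-1}$, $a_2\,u^{-1} v u$ and $\kappa_1\,u^{-1}$, while substituting $\kappa_4 z\,u^{-2}$ for $d_t(v)$ in the quadratic-in-$v$ part of $J$ produces the matching $u^{-2}$- and $u^{-1}$-contributions. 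I expect that requiring all of these negative-power terms to cancel imposes extra linear constraints on $a_1,a_2,b_1,b_2,d_i,h_i$ that are compatible with six of the eight $D_6$ structures but are violated by the two extreme cases \ref{eq:P3D6case5} and \ref{eq:P3D6case8}, namely those in which $P_1$ degenerates to $2\,u^2 v$ or $2\,v u^2$ with all weight on one side. Once these six survivors are isolated, it remains only to read off the unique integral in each case and confirm that it coincides with the stated $J$; this final verification is a routine direct substitution, so I do not anticipate difficulty there.
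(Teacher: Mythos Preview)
Your overall strategy --- compute $d_t(J)$, split it as $Y_0+z\,Y_1$, and equate coefficients of all non-commutative monomials --- is exactly the method the paper uses throughout, and the paper gives no separate proof of this proposition beyond that. But there is a genuine gap in your shortcut.

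You assert that $Y_0=0$ ``reproduces precisely the algebraic equations of the $\PIIIpr(D_6)$ classification'' and then conclude that the admissible $(P_1,P_2)$ are a~priori among the eight $D_6$ systems. This inference is not justified. The eight systems of Proposition~\ref{thm:ncJ3D6list} are the solutions of \emph{both} $Y_0=0$ \emph{and} the $D_6$-specific linear-in-$z$ constraint $Y_1^{D_6}=0$, where $Y_1^{D_6}$ comes from the inhomogeneity $\kappa_4 z$ in the $u$-equation together with the term $\kappa_4 z\,v$ in the integral. The $z^0$ equations alone could in principle admit additional solutions that are eliminated only by $Y_1^{D_6}=0$; such solutions would not be covered by your survival check of $Y_1^{D_7}=0$, and you would miss them. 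So your argument proves ``among the eight $D_6$ systems, exactly six survive'' but not ``there are exactly six systems''.

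The fix is either (a) verify explicitly that $Y_0=0$ by itself already cuts down to the same eight polynomial structures --- equivalently, that the $D_6$ linear-in-$z$ equations are consequences of the $z^0$ ones --- after which your elimination of the two extreme cases is rigorous; or (b) drop the shortcut and solve $Y_0=0$ together with $Y_1^{D_7}=0$ directly. Option~(b) costs nothing extra: the top-degree part of $Y_0$ already fixes the $d_i$ and the leading coefficients $a_i,b_i$ into finitely many cases exactly as in the $D_6$ computation, and in each case the remaining equations (including those from $Y_1^{D_7}$) are linear. Your prediction that the cases $P_1=2u^2v$ and $P_1=2vu^2$ are the two that drop out is correct, and your heuristic --- that $u^{-1}P_1u^{-1}$ then produces terms like $uvu^{-1}$ or $u^{-1}vu$ that the $u^{-2}$-substitutions from the quartic part of $J$ cannot match --- is the right mechanism.
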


The action of the transposition \eqref{tau} on this list of systems defines three non-equivalent orbits:
\begin{gather*}
    \begin{aligned}
    \text{\bf Orbit 1} 
    &= \left\{ 
        \text{\ref{eq:P3D72case1}}, \,
        \text{\ref{eq:P3D72case2}}
    \right\},
    &&&
    \text{\bf Orbit 2} 
    &= \left\{ 
        \text{\ref{eq:P3D72case3}}, \,
        \text{\ref{eq:P3D72case5}}
    \right\},
    \end{aligned}
    \\[2mm]
    \begin{aligned}
    \text{\bf Orbit 3} 
    &= \left\{ 
        \text{\ref{eq:P3D72case4}}, \,
        \text{\ref{eq:P3D72case6}}
    \right\}.
    \end{aligned}
\end{gather*}

\subsection{Limiting transitions}
\label{sec:deg_P3}

The degenerations of non-abelian special cases of $\PIIIpr$ type systems are given in Figure \ref{pic:deg_P3}. As in Section \ref{sec:degdata}, the red arrows correspond to the representatives of the $\PVI$ orbits (see Section~\ref{subsec21}) and their degenerations.
\begin{figure}[H]
    \centering
        \scalebox{1.}{\tikzset{every picture/.style={line width=0.75pt}} %set default line width to 0.75pt        

\begin{tikzpicture}[x=0.75pt,y=0.75pt,yscale=-1,xscale=1]
%uncomment if require: \path (0,496); %set diagram left start at 0, and has height of 496

%Straight Lines [id:da9480030876860241] 
\draw    (52,28) -- (79.23,28) ;
\draw [shift={(81.23,28)}, rotate = 180] [color={rgb, 255:red, 0; green, 0; blue, 0 }  ][line width=0.75]    (10.93,-3.29) .. controls (6.95,-1.4) and (3.31,-0.3) .. (0,0) .. controls (3.31,0.3) and (6.95,1.4) .. (10.93,3.29)   ;
%Straight Lines [id:da7904962606426889] 
\draw    (52,63) -- (79.23,63) ;
\draw [shift={(81.23,63)}, rotate = 180] [color={rgb, 255:red, 0; green, 0; blue, 0 }  ][line width=0.75]    (10.93,-3.29) .. controls (6.95,-1.4) and (3.31,-0.3) .. (0,0) .. controls (3.31,0.3) and (6.95,1.4) .. (10.93,3.29)   ;
%Straight Lines [id:da6221663921313565] 
\draw [color={rgb, 255:red, 243; green, 0; blue, 0 }  ,draw opacity=1 ]   (52,99) -- (79.23,99) ;
\draw [shift={(81.23,99)}, rotate = 180] [color={rgb, 255:red, 243; green, 0; blue, 0 }  ,draw opacity=1 ][line width=0.75]    (10.93,-3.29) .. controls (6.95,-1.4) and (3.31,-0.3) .. (0,0) .. controls (3.31,0.3) and (6.95,1.4) .. (10.93,3.29)   ;
%Straight Lines [id:da047815827488816076] 
\draw [color={rgb, 255:red, 243; green, 0; blue, 0 }  ,draw opacity=1 ]   (51,130) -- (79.73,149.1) ;
\draw [shift={(81.4,150.21)}, rotate = 213.61] [color={rgb, 255:red, 243; green, 0; blue, 0 }  ,draw opacity=1 ][line width=0.75]    (10.93,-3.29) .. controls (6.95,-1.4) and (3.31,-0.3) .. (0,0) .. controls (3.31,0.3) and (6.95,1.4) .. (10.93,3.29)   ;
%Straight Lines [id:da27759147230777304] 
\draw [color={rgb, 255:red, 243; green, 0; blue, 0 }  ,draw opacity=1 ]   (51,165) -- (78.23,165) ;
\draw [shift={(80.23,165)}, rotate = 180] [color={rgb, 255:red, 243; green, 0; blue, 0 }  ,draw opacity=1 ][line width=0.75]    (10.93,-3.29) .. controls (6.95,-1.4) and (3.31,-0.3) .. (0,0) .. controls (3.31,0.3) and (6.95,1.4) .. (10.93,3.29)   ;
%Straight Lines [id:da5525697201237876] 
\draw    (51,201) -- (80.12,182.1) ;
\draw [shift={(81.8,181.01)}, rotate = 147.01] [color={rgb, 255:red, 0; green, 0; blue, 0 }  ][line width=0.75]    (10.93,-3.29) .. controls (6.95,-1.4) and (3.31,-0.3) .. (0,0) .. controls (3.31,0.3) and (6.95,1.4) .. (10.93,3.29)   ;
%Straight Lines [id:da11730691151801942] 
\draw [color={rgb, 255:red, 243; green, 0; blue, 0 }  ,draw opacity=1 ]   (51,234) -- (78.23,234) ;
\draw [shift={(80.23,234)}, rotate = 180] [color={rgb, 255:red, 243; green, 0; blue, 0 }  ,draw opacity=1 ][line width=0.75]    (10.93,-3.29) .. controls (6.95,-1.4) and (3.31,-0.3) .. (0,0) .. controls (3.31,0.3) and (6.95,1.4) .. (10.93,3.29)   ;
%Straight Lines [id:da2705418175204868] 
\draw    (51,269) -- (78.23,269) ;
\draw [shift={(80.23,269)}, rotate = 180] [color={rgb, 255:red, 0; green, 0; blue, 0 }  ][line width=0.75]    (10.93,-3.29) .. controls (6.95,-1.4) and (3.31,-0.3) .. (0,0) .. controls (3.31,0.3) and (6.95,1.4) .. (10.93,3.29)   ;
%Straight Lines [id:da8265026933796542] 
\draw    (51,305) -- (78.23,305) ;
\draw [shift={(80.23,305)}, rotate = 180] [color={rgb, 255:red, 0; green, 0; blue, 0 }  ][line width=0.75]    (10.93,-3.29) .. controls (6.95,-1.4) and (3.31,-0.3) .. (0,0) .. controls (3.31,0.3) and (6.95,1.4) .. (10.93,3.29)   ;
%Straight Lines [id:da8974945115334105] 
\draw [color={rgb, 255:red, 243; green, 0; blue, 0 }  ,draw opacity=1 ]   (166,165) -- (193.23,165) ;
\draw [shift={(195.23,165)}, rotate = 180] [color={rgb, 255:red, 243; green, 0; blue, 0 }  ,draw opacity=1 ][line width=0.75]    (10.93,-3.29) .. controls (6.95,-1.4) and (3.31,-0.3) .. (0,0) .. controls (3.31,0.3) and (6.95,1.4) .. (10.93,3.29)   ;
%Straight Lines [id:da9534127944330779] 
\draw [color={rgb, 255:red, 243; green, 0; blue, 0 }  ,draw opacity=1 ]   (168,104) -- (196.73,123.1) ;
\draw [shift={(198.4,124.21)}, rotate = 213.61] [color={rgb, 255:red, 243; green, 0; blue, 0 }  ,draw opacity=1 ][line width=0.75]    (10.93,-3.29) .. controls (6.95,-1.4) and (3.31,-0.3) .. (0,0) .. controls (3.31,0.3) and (6.95,1.4) .. (10.93,3.29)   ;
%Straight Lines [id:da7021330492306113] 
\draw    (166,68) -- (194.73,87.1) ;
\draw [shift={(196.4,88.21)}, rotate = 213.61] [color={rgb, 255:red, 0; green, 0; blue, 0 }  ][line width=0.75]    (10.93,-3.29) .. controls (6.95,-1.4) and (3.31,-0.3) .. (0,0) .. controls (3.31,0.3) and (6.95,1.4) .. (10.93,3.29)   ;
%Straight Lines [id:da9929031273836677] 
\draw    (166,33) -- (194.73,52.1) ;
\draw [shift={(196.4,53.21)}, rotate = 213.61] [color={rgb, 255:red, 0; green, 0; blue, 0 }  ][line width=0.75]    (10.93,-3.29) .. controls (6.95,-1.4) and (3.31,-0.3) .. (0,0) .. controls (3.31,0.3) and (6.95,1.4) .. (10.93,3.29)   ;
%Straight Lines [id:da18439442657619232] 
\draw [color={rgb, 255:red, 243; green, 0; blue, 0 }  ,draw opacity=1 ]   (166,229) -- (195.12,210.1) ;
\draw [shift={(196.8,209.01)}, rotate = 147.01] [color={rgb, 255:red, 243; green, 0; blue, 0 }  ,draw opacity=1 ][line width=0.75]    (10.93,-3.29) .. controls (6.95,-1.4) and (3.31,-0.3) .. (0,0) .. controls (3.31,0.3) and (6.95,1.4) .. (10.93,3.29)   ;
%Straight Lines [id:da7713173536525413] 
\draw    (166,300) -- (195.12,281.1) ;
\draw [shift={(196.8,280.01)}, rotate = 147.01] [color={rgb, 255:red, 0; green, 0; blue, 0 }  ][line width=0.75]    (10.93,-3.29) .. controls (6.95,-1.4) and (3.31,-0.3) .. (0,0) .. controls (3.31,0.3) and (6.95,1.4) .. (10.93,3.29)   ;
%Straight Lines [id:da12973326178839972] 
\draw    (166,264) -- (195.12,245.1) ;
\draw [shift={(196.8,244.01)}, rotate = 147.01] [color={rgb, 255:red, 0; green, 0; blue, 0 }  ][line width=0.75]    (10.93,-3.29) .. controls (6.95,-1.4) and (3.31,-0.3) .. (0,0) .. controls (3.31,0.3) and (6.95,1.4) .. (10.93,3.29)   ;

% Text Node
\draw (95,18.4) node [anchor=north west][inner sep=0.75pt]    {\ref{eq:P3D72case1}};
% Text Node
\draw (14,18.4) node [anchor=north west][inner sep=0.75pt]    {\ref{eq:P3D6case1}};
% Text Node
\draw (95,54.4) node [anchor=north west][inner sep=0.75pt]    {\ref{eq:P3D72case2}};
% Text Node
\draw (14,54.4) node [anchor=north west][inner sep=0.75pt]    {\ref{eq:P3D6case2}};
% Text Node
\draw (95,90.4) node [anchor=north west][inner sep=0.75pt]    {\ref{eq:P3D72case3}};
% Text Node
\draw (14,90.4) node [anchor=north west][inner sep=0.75pt]    {\ref{eq:P3D6case4}};
% Text Node
\draw (14,120.4) node [anchor=north west][inner sep=0.75pt]    {\ref{eq:P3D6case5}};
% Text Node
\draw (101,156.4) node [anchor=north west][inner sep=0.75pt]    {\ref{eq:hamP3D72case1}};
% Text Node
\draw (20,156.4) node [anchor=north west][inner sep=0.75pt]    {\ref{eq:hamP3D6case1}};
% Text Node
\draw (14,192.4) node [anchor=north west][inner sep=0.75pt]    {\ref{eq:P3D6case8}};
% Text Node
\draw (95,224.4) node [anchor=north west][inner sep=0.75pt]    {\ref{eq:P3D72case4}};
% Text Node
\draw (14,224.4) node [anchor=north west][inner sep=0.75pt]    {\ref{eq:P3D6case3}};
% Text Node
\draw (95,260.4) node [anchor=north west][inner sep=0.75pt]    {\ref{eq:P3D72case5}};
% Text Node
\draw (14,260.4) node [anchor=north west][inner sep=0.75pt]    {\ref{eq:P3D6case6}};
% Text Node
\draw (95,296.4) node [anchor=north west][inner sep=0.75pt]    {\ref{eq:P3D72case6}};
% Text Node
\draw (14,296.4) node [anchor=north west][inner sep=0.75pt]    {\ref{eq:P3D6case7}};
% Text Node
\draw (214,156.4) node [anchor=north west][inner sep=0.75pt]    {\ref{eq:P1ham}};

\end{tikzpicture}}
    \caption{Degeneration scheme for special cases of $\PIIIpr$ type systems}
    \label{pic:deg_P3}
\end{figure}
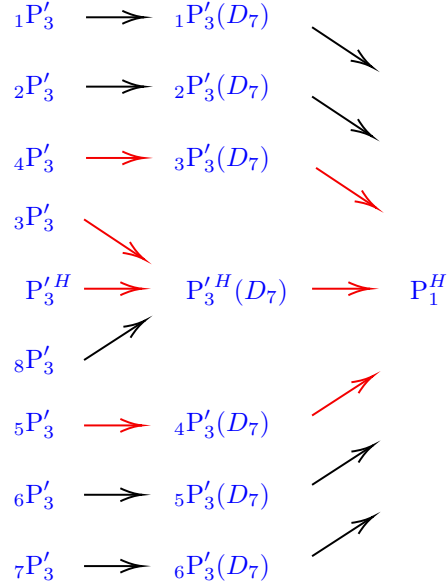

\subsubsection{
\texorpdfstring{
$\PIIIpr \to \PIIIpr(D_7)$
}{P3' -> P3'D7}
}
Systems of $\PIIIpr$ type from Appendix \ref{sec:sysintlistP3} can be reduced to $\PIIIpr(D_7)$ systems by the following limiting transition with the small parameter $\varepsilon$:
\begin{align}
    \label{eq:mapP3toP3D72}
    v
    &\mapsto v - \varepsilon^{-1} u^{-1},
    &
    \kappa_1
    &\mapsto \kappa_1 + 2 \varepsilon^{-1},
    &
    \kappa_3
    &\mapsto \kappa_3 + \varepsilon^{-1} \kappa_2,
    &
    \kappa_4
    &\mapsto \varepsilon^{-1} \kappa_4.
\end{align}
Using these formulas, we obtain the following degenerations:
\begin{gather*}
    \begin{aligned}
    \text{\rm\ref{eq:P3D6case1}}
    &\to \text{\rm\ref{eq:P3D72case1}},
    &
    \text{\rm\ref{eq:P3D6case2}}
    &\to \text{\rm\ref{eq:P3D72case2}},
    &
    \text{\rm\ref{eq:P3D6case5}}
    &\to \text{\rm\eqref{eq:P3D72case7}},
    &
    \text{\rm\ref{eq:P3D6case4}}
    &\to \text{\rm\ref{eq:P3D72case3}},
    \\[2mm]
    \text{\rm\ref{eq:P3D6case3}}
    &\to \text{\rm\ref{eq:P3D72case4}},
    &
    \text{\rm\ref{eq:P3D6case6}}
    &\to \text{\rm\ref{eq:P3D72case5}},
    &
    \text{\rm\ref{eq:P3D6case7}}
    &\to \text{\rm\ref{eq:P3D72case6}},
    &
    \text{\rm\ref{eq:P3D6case8}}
    &\to \text{\rm\eqref{eq:P3D72case8}},
    \end{aligned}
    \\[2mm]
    \text{\rm\ref{eq:hamP3D6case1}}
    \to \text{\rm\ref{eq:hamP3D72case1}}.
\end{gather*}
Неre $\PIIIprn{H}(D_7)$ is the Hamiltonian system
\begin{gather}
    \label{eq:hamP3D72case1}
    \tag*{$\PIIIprn{H}(D_7)$}
    \left\{
    \begin{array}{lcl}
         z \, u'
         &=& 2 u v u
         + \kappa_1 u
         + \kappa_2 u^2,
         \\[2mm]
         z \, v'
         &=& - 2 v u v
         - \kappa_1 v
         - \kappa_2 u v - \kappa_2 v u 
         - \kappa_3 
         + \kappa_4 z u^{-2}
         .
    \end{array}
    \right.
\end{gather}

\begin{remark}
Systems  

\begin{minipage}[l]{0.49\linewidth}
\begin{gather}
    \label{eq:P3D72case7}
    \left\{
    \begin{array}{lcr}
         z \, u'
         &=& 2 v u^2
         + \kappa_1 u
         + \kappa_2 u^2,
         \hspace{0.3cm}
         \\[2mm]
         z \, v'
         &=& - 2 v^2 u
         - \kappa_1 v
         - 2 \kappa_2 v u
         \\[1mm]
         &&
         - \, \kappa_3
         + \kappa_4 z u^{-2}
         ,
    \end{array}
    \right.
    \\[3mm]
    \notag
    \begin{aligned}
    J
    = u^{-1} (u v - v u) u
    ,
    \end{aligned}
\end{gather}
\end{minipage}
\begin{minipage}[l]{0.49\linewidth}
\begin{gather}
    \label{eq:P3D72case8}
    \left\{
    \begin{array}{lcr}
         z \, u'
         &=& 2 u^2 v
         + \kappa_1 u
         + \kappa_2 u^2,
         \hspace{0.3cm}
         \\[2mm]
         z \, v'
         &=& - 2 u v^2
         - \kappa_1 v
         - 2 \kappa_2 u v
         \\[1mm]
         &&
         - \, \kappa_3
         + \kappa_4 z u^{-2},
    \end{array}
    \right.
    \\[3mm]
    \notag
    \begin{aligned}
    J
    = u (u v - v u) u^{-1}
    ,
    \end{aligned}
\end{gather}
\end{minipage}
are equivalent to the \text{\rm\ref{eq:hamP3D72case1}} system. This equivalence is defined by the Laurent mappings $(u, v)~\mapsto~(u, u^{-1} v u)$ and $(u, v) \mapsto (u, u v u^{-1})$, respectively.
\end{remark}

Supplementing formulas \eqref{eq:mapP3toP3D72} by the following mapping
\begin{gather}
    \lambda
    \mapsto - \lambda,
    \\
    \begin{aligned}
    \mathbf{A}
    &\mapsto g \, \mathbf{A} \, g^{-1} 
    + g_{\lambda}' \, g^{-1},
    &
    \mathbf{B}
    &\mapsto g \, \mathbf{B} \, g^{-1}
    + g_z' \, g^{-1},
    &
    g
    &= \lambda^{- \varepsilon^{-1}} \,
    z^{\const \, \varepsilon^{-1}}
    \begin{pmatrix}
    1 & 0 \\ - v & 1
    \end{pmatrix}
    ,
    \end{aligned}
\end{gather}
we degenerate a pair of the form \eqref{eq:matABform_P3D6'_} for any system of $\PIIIprn{H}(D_6)$ type to a pair of the same structure for the corresponding $\PIIIpr(D_7)$ system.  

\subsubsection{
\texorpdfstring{
$\PIIIpr(D_7) \to \PI$
}{P3'D7 -> P1}
}

The $\PIIIpr(D_7)$-systems listed in Appendix \ref{sec:sysintP3D72} can be degenerated to the \PPainleve-1 system \ref{eq:P1ham} by the map
\begin{gather}
\label{eq:P3D72'toP1map}
\begin{gathered}
    \begin{aligned}
    z 
    &\mapsto \varepsilon^6 z - 2 \varepsilon^{-4}
    ,
    &&&
    u 
    &\mapsto \varepsilon^{-2} \brackets{
    u - 1
    }
    ,
    &&&
    v
    &\mapsto \varepsilon^2 v - 2 \varepsilon^{-3},
    \end{aligned}
    \\[1mm]
    \begin{aligned}
    \kappa_1
    &= \varepsilon,
    &&&
    \kappa_2
    &= - 4 \varepsilon^{-5},
    &&&
    \kappa_3
    &= 12 \varepsilon^{-10},
    &&&
    \kappa_4
    &= 2.
    \end{aligned}
\end{gathered}
\end{gather}
To get a Lax pair of the form \eqref{eq:matABform_P1} for \ref{eq:P1ham}, one may consider the degeneration data
\begin{gather}
    \lambda
    \mapsto \varepsilon^{-2} (\lambda - 1),
    \\
\begin{aligned}
    \mathbf{A}
    &\mapsto g \, \mathbf{A} \, g^{-1}
    + g_{\lambda}' \, g^{-1},
    &\mathbf{B}
    &\mapsto g \, \mathbf{B} \, g^{-1}
    + g_z' \, g^{-1}
    ,
    &
    g
    &= e^{2 \varepsilon^{-5} \lambda^{-1} 
    + \, \const \, \varepsilon^{5} z}
    \begin{pmatrix}
    2 & 0 \\ \varepsilon^{4} v & \varepsilon^4
    \end{pmatrix}.
\end{aligned}
\end{gather}

\urlstyle{same}
%\nocite{*}
\bibliographystyle{plain}
\bibliography{bib}

%\newpage
%\tableofcontents

\end{document}